\documentclass[12pt]{article}

\RequirePackage[OT1]{fontenc}
\RequirePackage{amsthm,amssymb,amsfonts,graphicx,subfigure,caption,bm,fullpage,wrapfig,setspace}
\RequirePackage[colorlinks,citecolor=blue,urlcolor=blue]{hyperref}
\usepackage{amsmath,amssymb,hyperref,enumerate}

\usepackage{algpseudocode,algorithm}


\numberwithin{equation}{section}
\theoremstyle{plain}
                          \newtheorem{thm}{Theorem}
                          \newtheorem{lem}{Lemma}
                          \newtheorem{prop}{Proposition}
\theoremstyle{remark}         
\theoremstyle{definition} \newtheorem{defn}{Definition}

\newcommand\om{\Omega}

\newcommand\A{\mathcal{A}}

\newcommand\F{\mathcal{F}}

\newcommand\bn{\bm{n}}

\newcommand\bi{\begin{itemize}}
\newcommand\ei{\end{itemize}}

\newcommand{\iid}{\stackrel{\mathrm{iid}}{\sim}}

\newcommand\independent{\protect\mathpalette{\protect\independenT}{\perp}}
\def\independenT#1#2{\mathrel{\rlap{$#1#2$}\mkern2mu{#1#2}}}
\newcommand\independentk{\independent_{\!\!\!k_1,k_2}}

\def\I{{\mathbf{1}}}

\newcommand{\beginsupplement}{%
        \setcounter{table}{0}
        \renewcommand{\thetable}{S\arabic{table}}%
        \setcounter{figure}{0}
        \renewcommand{\thefigure}{S\arabic{figure}}%
}


\def\woMR#1{\w@MR#1MR#1MR\relax}%
\def\w@MR#1MR#2MR#3\relax{#2}

\def\@MR#1 #2\relax#3{%
 \href{http://www.ams.org/mathscinet-getitem?mr=#1}%
 {\MRfixed{#3}}}%

\def\MRfixed{MR\woMR}%


\bibliographystyle{chicago}
\usepackage{natbib}

\usepackage{varioref}		
\labelformat{section}{Section~#1}
\labelformat{figure}{Figure~#1}
\labelformat{table}{Table~#1}	

\title{Fisher exact scanning for dependency}
\author{Li Ma \and Jialiang Mao}

\begin{document}
\maketitle

\begin{abstract}
\doublespacing
We introduce a method---called Fisher exact scanning (FES)---for testing and identifying variable dependency that generalizes Fisher's exact test on $2\times 2$ contingency tables to $R\times C$ contingency tables and continuous sample spaces. FES proceeds through scanning over the sample space using windows in the form of $2\times 2$ tables of various sizes, and on each window completing a Fisher's exact test. Based on a factorization of Fisher's multivariate hypergeometric (MHG) likelihood into the product of the univariate hypergeometric likelihoods, we show that there exists a coarse-to-fine, sequential generative representation for the MHG model in the form of a Bayesian network, which in turn implies the mutual independence (up to deviation due to discreteness) among the Fisher's exact tests completed under FES. This allows an exact characterization of the joint null distribution of the $p$-values and gives rise to an effective inference recipe through simple multiple testing procedures such as \v{S}id\'{a}k and Bonferroni corrections, eliminating the need for resampling. In addition, FES can characterize dependency through reporting significant windows after multiple testing control. The computational complexity of FES is approximately linear in the sample size, which along with the avoidance of resampling makes it ideal for analyzing massive data sets. We use extensive numerical studies to illustrate the work of FES and compare it to several state-of-the-art methods for testing dependency in both statistical and computational performance. Finally, we apply FES to analyzing a microbiome data set and further investigate its relationship with other popular dependency metrics in that context.  
\end{abstract}

\doublespacing

\section{Introduction}

Testing independence and identifying dependency of two random variables has numerous applications. For example, pairwise dependence is often used as the basis for building various networks such as gene expression networks and social networks. Complex, nonlinear dependence structures are commonplace in such applications, which call for flexible, nonparametric methods for testing and characterizing them. 
This problem has drawn great attention from both the statistical and the computational communities, with methods proposed from several perspectives, including those from an information theoretic perspective through nonparametric estimates of mutual information \citep{kraskov:2004,kinney:2014} and the more recently introduced maximal information coefficient \citep{reshef:2011}; the distance correlation approach \citep{szekely:2007,szekely:2009} and the recently proposed $G^2$ statistic \citep{wang:2016} that generalize the classical notion of correlation; a Bayesian modeling approach that compares the goodness-of-fit of nonparametric models for independence versus that for dependence \citep{filippi:2015}. These are just a few examples among many others. Most of the existing methods focus on constructing a proper score for measuring the extent of dependency, while resorting to resampling methods such as bootstrapping and permutation to evaluate the statistical significance of the resulting score.

In the midst of the ``big data era'', nowadays data sets that require dependency analysis are often massive---involving up to millions or even billions of observations as well as many variable pairs for which dependency is of interest. This amount of data presents both opportunities and challenges. First, with such large data sets, it becomes possible to identify dependencies that are otherwise impossible to detect. In particular, very {\em weak} dependencies buried in high noise backgrounds and {\em local} dependencies involving but a small subset of the observations now become potential inferential targets. At the same time, to analyze such big data in a flexible manner, one must construct methods that are computationally efficient both in CPU time and in memory requirement, while maintaining statistical soundness. 

We introduce a framework that satisfies these needs. 
It is called Fisher exact scanning (FES) because it marries two simple and powerful inference techniques---(i) Fisher's exact test for testing dependency on $2\times 2$ tables conditional on the marginals \citep{fisher:1954} and (ii) multi-scale scanning \citep{rufibach:2010,walther:2010}. Under this framework one scans over the sample space using windows that are $2\times 2$ contingency tables of various sizes, and on each table completes a Fisher's exact test. The key to effective inference under FES lies in proper multiple testing control in combining the results from these tests.

FES inherits several desirable features from Fisher's exact test and from multi-scale scanning. First, the null sampling distribution of the test statistics, or equivalently those of the $p$-values, is available analytically in an exact manner. As we will show, this is because inference under FES is conditional on the marginal observations, and that due to a factorization of Fisher's multivariate hypergeometric (MHG) likelihood on general $R \times C$ contingency tables into the product of (univariate) hypergeometric (HG) likelihoods on $2\times 2$ subtables, under the null hypothesis of independence the $p$-values from the scanning windows are mutually independent (up to deviations caused by discreteness). Effective multiple testing adjustment can thus proceed based on the exact null behavior of the $p$-values through simple techniques such as \v{S}id\'{a}k or Bonferroni correction and common false discovery rate controls without resorting to resampling. This makes FES particularly appealing in ``big data'' settings where each application of a dependency test can be computationally demanding and thus resampling can become extremely computationally intense. 

Fisher was a proponent for such conditional inference as he argued that the marginals are ``almost ancillary'' for the dependency. Whether conditioning on the marginals is desirable has been a point of controversy, which we do not attempt to settle. See \cite{little:1989} for a historical review of the issue and reasons why such conditioning is desirable, and see \cite{choi:2015} for a recent exploration on this. FES maintains this feature and generalizes the conditional dependency test from $2\times 2$ tables to $R \times C$ tables and continuous variables.

As a variant of multi-scale scanning, FES attains (i) computational efficiency---with an amount of computation scaling approximately linearly with the sample size and fixed memory requirement for any given maximum resolution of scanning; and (ii) the ability to not only test the existence of dependence, but to identify the actual nature of the dependence---through reporting the $2\times 2$ subtables on which the $p$-values are deemed significant after multiplicity adjustment. 

It is worth noting that while the current work mainly focuses on testing variable dependency, according to our knowledge, the likelihood factorization of the MHG model into HG likelihoods has not been reported previously and is of its own value beyond the scope of this work. For example, it gives a generative Monte Carlo strategy for the MHG, and likely has many more applications. Readers familiar with multi-scale modeling may recall other likelihood factorizations involving decomposing a whole likelihood into the product of likelihoods defined on nested windows (see for example \cite{kolaczyk:2004,ma:2016}), which is exploited in \cite{filippi:2015}. The MHG factorization reported here is fundamentally different as the $2\times 2$ tables on which the HG likelihoods arise can be partially overlapping and non-nested, making the factorization particularly interesting. We will show that conditioning on the marginals is critical here. Another classical example of orthogonal decomposition of information in the context of contingency tables is the sequential partitioning of $\chi^2$-statistics into multiple $\chi^2$-statistics defined on subtables \citep{lancaster:1949}. This classical decomposition also requires the sequence of tables to be nested and cannot be partially overlapping. See Sec.~3.3.3 in \cite{agresti2013categorical} for more details.

The rest of the paper is organized as follows. In \ref{sec:method}, we start by introducing a multi-scale discrete characterization of variable dependency based on coarse-to-fine partitioning on the margins. Then we establish the likelihood factorization for the MHG distribution into a product of HG likelihoods and present the induced sequential generative Bayesian network representation of the MHG. Finally we introduce the FES method and derive inference recipe using the likelihood factorization. In \ref{sec:num_exam}, we carry out comprehensive numerical studies to evaluate both the statistical performance and the computational efficiency of FES, and compare it to those of a number of state-of-the-art and classical methods. 
In \ref{sec:data_exam} we illustrate the work of FES in analyzing a publicly available microbiome data set, and in particular show how one may use FES as a tool for evaluating the statistical significance for other dependency metrics. We conclude in \ref{sec:discussion} with brief remarks. 
All proofs are given in the Supplementary Material. An {\tt R} package for FES is available freely on {\tt Github}.

We close the introduction by connecting our work to some particularly relevant references. \cite{gretton:2007} and \cite{heller:2016} also recognize the large computational demand for carrying out resampling to evaluate statistical significance. \cite{gretton:2007} proposes an independence test based on the Hilbert-Schmidt norm of the cross-covariance operator given a chosen kernel, and quantifies the sampling behavior of the resulting dependency metric HSIC under the null hypothesis of independence using asymptotic Gamma approximation. Like the FES, \cite{heller:2016} also proposes a strategy for statistical significance under an exact finite sample null distribution. Moreover \cite{heller:2016} is also based on a partitioning of the sample space into contingency tables. Instead of a divide-and-conquer scanning strategy, however, \cite{heller:2016} appeals to $\chi^2$ and likelihood ratio type statistics for testing a whole partition of the sample together. 
Some other works that also aim at testing independence based on a partition of the sample space include \cite{heller:2012} and \cite{zhang:2017}. \cite{heller:2012} proposes a strategy to reorganize a data set into a $2\times 2$ contingency table and then adopts a $\chi^2$-like statistic to measure dependency on the resulting $2\times 2$ table. The strategy for the reorganization of the table is through thresholding on the distance from a baseline observation and sum over all possible baselines, and permutation is needed for judging significance. In comparison to the partitioning strategy proposed in our work, using pairwise distances to divide the original data set as in \cite{heller:2012} enjoys the advantage of easy application to multivariate and high-dimensional situations. A very recent method BET \citep{zhang:2017} also appeals to a sequence of partitioning on the sample space to form a cascade of contingency tables (i.e., the strata). Instead of taking a local scanning approach as in FES, BET carries out a single test for each whole stratum. Both FES and BET fall into the general category of multi-scale methods. One way to understand the difference is through an analogy to multi-resolution methods for nonparametric inference---FES is based on a location-scale decomposition of the dependency structure, analogous to time-frequency decomposition of functions under wavelet bases, whereas BET uses a scale-only decomposition, analogous to the frequency decomposition under Fourier transforms.

\section{Method}
\label{sec:method}

\subsection{A multi-scale characterization of variable dependency}

For a pair of jointly distributed random variables $(X,Y)$, we are interested in testing whether $X\independent Y$, and in case $X\not \independent Y$, characterizing the nature of their dependence. In the following, for simplicity, we shall assume that the data $(X_1,Y_1), (X_2,Y_2),\ldots, (X_n,Y_n)$ are $n$ i.i.d., draws from the joint distribution of $(X,Y)$, denoted by $F$.  It is worth noting that the FES framework we present can be applied in exactly the same way without modification  when (i) the $X_i$ values are fixed and the $Y_i$'s are conditionally independent draws given $X$ from the conditional distribution of $Y$ given $X$, $F_{Y|X}$, or (ii) the $\{(X_i,Y_i):i=1,2,\ldots,n\}$ jointly is a draw from $F^{n}$ conditional on the marginal order statistics $\{X_{(1)},X_{(2)},\ldots,X_{(n)},Y_{(1)},Y_{(2)},\ldots,Y_{(n)}\}$ because as we will see inference under FES is conditional on the marginal order statistics.

To begin, let us consider the most simple situation in which both variables are binary, that is, $(X_i,Y_i)\in\{0,1\}\times \{0,1\}$. In this case, the data can be summarized on a $2\times 2$ contingency table. 
Specifically, we let $\bn=(n_{00},n_{01},n_{10},n_{11})$ be the counts in that table with $n_{ab}=|\{i:X_i=a,Y_i=b\}|$ being the number of $(X_i,Y_i)$ pairs with $X_i=a$ and $Y_i=b$  for $a,b\in \{0,1\}$. Similarly, let $\pi_{ab}=F(X_i=a,Y_i=b)$ be the cell probability for each cell in the table. (We use $|\cdot|$ to denote the cardinality of a set.) A natural quantity that measures the extent of dependency is the so-called {\em odds-ratio} (OR). That is,
\[
{\rm OR}= \frac{\pi_{00}/\pi_{01}}{\pi_{10}/\pi_{11}}
\]
It is often convenient to work with the log odd-ratios (LOR), denoted as $\theta=\log {\rm OR}$. 

Conditional on the row and column marginal totals, the distribution of such a $2 \times 2$ table 
is a one-parameter exponential family with $\theta$ being the natural parameter and $n_{11}$ the corresponding sufficient statistic. Under this model, a test for independence between the row and the column variables is the two-sided Fisher's exact test for
\vspace{-0.8em}
\[
H_0: \theta = 0,
\]
\vspace{-2.8em}

\noindent which rejects when $\hat{\theta}<C_1$ or $\hat{\theta}>C_2$, where $\hat{\theta}=\log \frac{n_{00}/n_{01}}{n_{10}/n_{11}}$ is the empirical LOR. The $p$-value is available {\em exactly} based on the tail probabilities of the (central) HG distribution. 

When $X$ and $Y$ are not necessarily binary, let $\om=\om_{X}\times \om_{Y}$ be the joint sample space of $F$. For illustration (in fact without loss of generality) let us assume that $\om=[0,1]\times [0,1]$. We take a coarse-to-fine multi-scale discretization approach to characterizing the relationship between $X$ and $Y$. At the coarsest level, we view $\om$ as the union of  four disjoint pieces $\om_{00}=[0,1/2)\times [0,1/2)$, $\om_{01}=[0,1/2)\times [1/2,1]$, $\om_{10}=[1/2,1]\times [0,1/2)$, and $\om_{11}=[1/2,1]\times [1/2,1]$ such that $\om=\om_{00}\cup\om_{01}\cup\om_{10}\cup\om_{11}$. Under this discretization, the data can be viewed as a $2\times 2$ table $(n(\om_{00}),n(\om_{01}),n(\om_{10}),n(\om_{11}))$ 
where $n(\om_{ab})=|\{i:(X_i,Y_i)\in\om_{ab}\}|$ for $a,b\in\{0,1\}$.

Similarly, for any $A=[l_x,u_x]\times [l_y,u_y]\subset \om$, we define an associated $2\times 2$ table by dividing $A$ into four disjoint pieces $A_{00}$, $A_{01}$, $A_{10}$, and $A_{11}$ such that 
\[ A_{ab} = [l_{x}+a/2\cdot (u_x-l_x),l_{x}+(a+1)/2\cdot (u_x-l_x)) \times [l_{y}+b/2\cdot (u_{y}-l_{y}),l_{y}+(b+1)/2\cdot (u_y-l_y)] \]
for $a,b\in\{0,1\}$. (Note that it does not matter which ends of the intervals are open/closed insofar as the $A_{ab}$'s form a partition of $A$.) In the following, we shall let $A_{a\cdot} = A_{a0}\cup A_{a1}$ and $A_{\cdot b} = A_{0b}\cup A_{1b}$. Also, we let $n(A)=|\{i:(X_i,Y_i) \in A\}|$, and for each $A$ we let
\[
\theta(A) = \log \frac{F(A_{00})F(A_{11})}{F(A_{10})F(A_{01})} \quad \text{and} \quad \hat{\theta}(A) =\log \frac{n(A_{00})n(A_{11})}{n(A_{01})n(A_{10})} 
\]
be the associated LOR and empirical LOR for the $2\times 2$ subtable on $A$.

Next we introduce a few more notions regarding discrete approximation to continuous sample spaces, which will serve as a building block for the FES framework.

\begin{defn}[Level-$(k_1,k_2)$ windows and stratum]
For any $k_1,k_2=0,1,2,\ldots$, we call a set $A$ a level-$(k_1,k_2)$ {\em window} if it is of the form
\[
A = I^{k_1}_{l_1} \times I^{k_2}_{l_2} 
\]
for some $l_1 \in \{1,2,\ldots,2^{k_1}\}$ and $l_2 \in \{1,2,\ldots,2^{k_2}\}$, where for any $k\geq 0$ and $l=1,2,\ldots,2^k$, $I^{k}_{l}=[(l-1)/2^{k},l/2^{k})$. We call the collection of all level-$(k_1,k_2)$ windows the {\em $(k_1,k_2)$-stratum}, and denote it as $\A^{k_1,k_2}$. That is,
\[
\A^{k_1,k_2} = \mathcal{I}^{k_1}\times \mathcal{I}^{k_2} \quad \text{with $\mathcal{I}^{k} = \{I^{k}_{l}:l=1,2,\ldots,2^{k}\}$ for each $k$.}
\]

\end{defn}
\noindent Remark: Intuitively, the $(k_1,k_2)$-stratum is a discretization of the sample space using  $2^{k_1}$ and $2^{k_2}$ categories for the two margins respectively. 

\begin{defn}[$(k_1,k_2)$-independence] 
\label{defn:k1k2_indep}
For any $k_1,k_2=0,1,2,\ldots$, we say that $X$ and $Y$ are {\em $(k_1,k_2)$-independent}, and write it as $X\independentk Y$, if for any $l_1\in \{1,2,\ldots,2^{k_1}\}$ and $l_2\in \{1,2,\ldots,2^{k_2}\}$,
\[
F(I^{k_1}_{l_1}\times I^{k_2}_{l_2}) = F_{X}(I^{k_1}_{l_1}) F_{Y}(I^{k_2}_{l_2}),
\]
where $F_{X}$ and $F_{Y}$ are the corresponding marginal distributions of $X$ and $Y$.
\end{defn}
The meaning of $(k_1,k_2)$-independence is that if we approximate the sample space of $F$ using the $(k_1,k_2)$-stratum, then the discretized versions of $X$ and $Y$, which are jointly distributed multinomials, are independent. The next lemma gives an equivalent characterization of $(k_1,k_2)$-independence that is often easier to check than the original definition.
\begin{lem}
\label{lem:equiv_k_indep}
$X$ and $Y$ are $(k_1,k_2)$-independent if and only if for any $l_1,l_1' \in \{1,2,\ldots,2^{k_1}\}$ and $l_2,l_2'\in  \{1,2,\ldots,2^{k_2}\}$,
we have
\[
F(I^{k_1}_{l_1}\times I^{k_2}_{l_2})F(I^{k_1}_{l_1'}\times I^{k_2}_{l_2'}) = F(I^{k_1}_{l_1}\times I^{k_2}_{l_2'})F(I^{k_1}_{l_1'}\times I^{k_2}_{l_2}).
\]
\end{lem}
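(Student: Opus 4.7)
The plan is to prove both directions separately, with the forward direction being essentially immediate and the reverse direction requiring a marginalization argument.

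For the forward direction, assume $X \independentk Y$. Then for any choice of indices $l_1, l_1', l_2, l_2'$, both sides of the asserted cross-product identity factor as $F_X(I^{k_1}_{l_1}) F_X(I^{k_1}_{l_1'}) F_Y(I^{k_2}_{l_2}) F_Y(I^{k_2}_{l_2'})$ after applying Definition~\ref{defn:k1k2_indep} to each of the four joint probabilities. This gives the equality directly.

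For the reverse direction, the key observation is that the $I^{k_1}_l$ (resp.\ $I^{k_2}_l$) form a partition of $[0,1)$, so summing the joint $F(I^{k_1}_{l_1'}\times I^{k_2}_{l_2'})$ over $l_2'$ yields the marginal $F_X(I^{k_1}_{l_1'})$, and summing over $l_1'$ yields $F_Y(I^{k_2}_{l_2'})$. First I would sum the hypothesized identity over $l_2' \in \{1,\ldots,2^{k_2}\}$, which produces
\[
F(I^{k_1}_{l_1}\times I^{k_2}_{l_2})\, F_X(I^{k_1}_{l_1'}) = F(I^{k_1}_{l_1'}\times I^{k_2}_{l_2})\, F_X(I^{k_1}_{l_1}).
\]
Then summing this new identity over $l_1' \in \{1,\ldots,2^{k_1}\}$ collapses the left-hand side (since $\sum_{l_1'} F_X(I^{k_1}_{l_1'}) = 1$) and the right-hand side sum becomes $F_Y(I^{k_2}_{l_2})$, yielding $F(I^{k_1}_{l_1}\times I^{k_2}_{l_2}) = F_X(I^{k_1}_{l_1}) F_Y(I^{k_2}_{l_2})$, which is precisely $(k_1,k_2)$-independence.

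I do not anticipate any substantive obstacle: the statement is really the observation that the constant-cross-product-ratio condition on a two-way table of probabilities is equivalent to rank-one factorization of that table, and the proof is a direct double marginalization. The only mild care needed is to record that summing one index of $F(I^{k_1}_{l_1'} \times I^{k_2}_{l_2'})$ produces the correct marginal, which follows from countable additivity applied to the partition $\{I^{k_1}_{l} : l = 1,\ldots, 2^{k_1}\}$ of $[0,1)$ (and analogously for $Y$).
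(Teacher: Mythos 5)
Your proposal is correct and takes essentially the same approach as the paper: the forward direction is immediate from Definition~\ref{defn:k1k2_indep}, and your reverse direction is the same double marginalization of the cross-product identity. The only cosmetic difference is that you sum over $l_2'$ and then $l_1'$ sequentially, whereas the paper multiplies $F(I^{k_1}_{l_1}\times I^{k_2}_{l_2})$ by $\sum_{l_1'}\sum_{l_2'} F(I^{k_1}_{l_1'}\times I^{k_2}_{l_2'})=1$ and factors the double sum in a single step.
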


The next lemma states that if $X$ and $Y$ are independent on the $(k_1,k_2)$-stratum, then they are independent at all coarser strata. We say that stratum-$(k_1',k_2')$ is {\em coarser} than stratum-$(k_1,k_2)$ if $k_1'\leq k_1$ and $k_2'\leq k_2$, (and {\em finer} if $k_1'\geq k_1$ and $k_2'\geq k_2$). 
\begin{lem}
\label{lem:k_independence}
If $X$ and $Y$ are $(k_1,k_2)$-independent, then they are $(k_1',k_2')$-independent for all $0\leq k_1'\leq k_1$ and $0\leq k_2' \leq k_2$.
\end{lem}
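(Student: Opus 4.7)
The plan is a direct proof by decomposing coarser windows into unions of finer ones and using additivity of the probability measure together with the finer independence assumption.

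First, I would observe the key geometric fact that for any $k_1'\leq k_1$, each level-$k_1'$ dyadic interval $I^{k_1'}_{l_1'}$ is a disjoint union of exactly $2^{k_1-k_1'}$ level-$k_1$ intervals, namely those $I^{k_1}_{l_1}$ with $l_1 \in \{(l_1'-1)2^{k_1-k_1'}+1,\ldots,l_1'2^{k_1-k_1'}\}$. Denote this index set by $\mathcal{L}_1(l_1')$, and define $\mathcal{L}_2(l_2')$ analogously. Then $I^{k_1'}_{l_1'}\times I^{k_2'}_{l_2'}$ is the disjoint union of the windows $I^{k_1}_{l_1}\times I^{k_2}_{l_2}$ for $(l_1,l_2)\in \mathcal{L}_1(l_1')\times \mathcal{L}_2(l_2')$, and the marginal intervals decompose analogously.

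Next, applying countable additivity of $F$, $F_X$, and $F_Y$, together with the assumed $(k_1,k_2)$-independence, I would compute
\[
F(I^{k_1'}_{l_1'}\times I^{k_2'}_{l_2'}) = \sum_{l_1\in\mathcal{L}_1(l_1')}\sum_{l_2\in\mathcal{L}_2(l_2')} F(I^{k_1}_{l_1}\times I^{k_2}_{l_2}) = \sum_{l_1\in\mathcal{L}_1(l_1')}\sum_{l_2\in\mathcal{L}_2(l_2')} F_X(I^{k_1}_{l_1})\,F_Y(I^{k_2}_{l_2}).
\]
The double sum then factorizes as a product of two single sums, which recombine via additivity of the marginals to give $F_X(I^{k_1'}_{l_1'})\,F_Y(I^{k_2'}_{l_2'})$. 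This is exactly the defining identity of $(k_1',k_2')$-independence from Definition~\ref{defn:k1k2_indep}, completing the proof.

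There is no real obstacle here; the argument is essentially bookkeeping. The only subtlety worth a sentence in the written-out proof is to explicitly note the nested dyadic structure that makes the decomposition clean, since without it the factorization step would not go through. As an alternative one could argue by single-step induction (showing $(k_1,k_2)$-independence implies $(k_1-1,k_2)$-independence via pooling adjacent dyadic intervals and then iterating), but the direct sum-and-factorize argument above is more transparent and I would prefer it.
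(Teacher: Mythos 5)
Your proof is correct, and it rests on exactly the same mechanism as the paper's: additivity of $F$, $F_X$, $F_Y$ over a dyadic refinement, followed by the factorization granted by the assumed $(k_1,k_2)$-independence. The only difference is organization. The paper takes the route you flag as the ``alternative'': it reduces one margin by one level at a time, noting $I^{k_1-1}_{l_1} = I^{k_1}_{2l_1-1}\cup I^{k_1}_{2l_1}$ and establishing that $(k_1,k_2)$-independence implies $(k_1-1,k_2)$-independence via a two-term instance of your sum, with the general statement following by iterating this step on both margins. You instead decompose $I^{k_1'}_{l_1'}\times I^{k_2'}_{l_2'}$ in one shot into the full grid of $2^{k_1-k_1'}\times 2^{k_2-k_2'}$ level-$(k_1,k_2)$ windows and factor the double sum into a product of two marginal sums. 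Both arguments are complete; yours trades the paper's (implicit) induction for the explicit index bookkeeping of $\mathcal{L}_1(l_1')$ and $\mathcal{L}_2(l_2')$, and either is acceptable. One small nit: since each decomposition involves only finitely many windows, finite additivity suffices, so your appeal to countable additivity is harmless but unnecessary.
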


It is not hard to see that $X\independent Y$ implies $X\independentk Y$ for all $k_1$ and $k_2$. The next theorem states that the reverse is also true.
\begin{thm}
\label{thm:independent_k}
\[
X\independent Y \quad \Leftrightarrow \quad X\independentk Y \quad \text{for all $k_1,k_2=0,1,2,\ldots$}
\]
\end{thm}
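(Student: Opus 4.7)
The forward direction is immediate: if $X\independent Y$, then $F(A\times B)=F_X(A)F_Y(B)$ for every measurable product set, and in particular for any pair of dyadic intervals $I^{k_1}_{l_1}$ and $I^{k_2}_{l_2}$, which gives $(k_1,k_2)$-independence for every choice of $k_1,k_2$.

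For the reverse direction, my plan is to invoke a standard uniqueness-of-measure argument. Consider the two probability measures on $\om=[0,1]^2$, namely $F$ and the product $F_X\otimes F_Y$. The hypothesis of $(k_1,k_2)$-independence for all $k_1,k_2\ge 0$ is precisely the statement that these two measures assign the same mass to every dyadic rectangle $I^{k_1}_{l_1}\times I^{k_2}_{l_2}$. I would then let
\[
\mathcal{P}=\bigl\{I^{k_1}_{l_1}\times I^{k_2}_{l_2} : k_1,k_2\ge 0,\ 1\le l_1\le 2^{k_1},\ 1\le l_2\le 2^{k_2}\bigr\}\cup\{\emptyset\},
\]
and verify that $\mathcal{P}$ is a $\pi$-system: the intersection of two half-open dyadic rectangles is either empty or another half-open dyadic rectangle (a finer one). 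Next I would check that the $\sigma$-algebra generated by $\mathcal{P}$ is the Borel $\sigma$-algebra on $[0,1]^2$; this follows because any open rectangle can be written as a countable union of dyadic rectangles in $\mathcal{P}$, and open rectangles generate the Borel $\sigma$-algebra.

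Having established these two facts, the $\pi$-$\lambda$ theorem (Dynkin) applies: since $F$ and $F_X\otimes F_Y$ are two probability measures that agree on the $\pi$-system $\mathcal{P}$, they agree on $\sigma(\mathcal{P})=\mathcal{B}([0,1]^2)$, i.e., $F=F_X\otimes F_Y$, which is exactly $X\independent Y$.

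The argument is largely bookkeeping and the only mildly technical point is the verification that the half-open dyadic rectangles form a $\pi$-system generating the Borel $\sigma$-algebra; both properties are standard and rely on the fact that dyadic intervals are closed under intersection (yielding again a dyadic interval or the empty set) and that every point in $[0,1]^2$ has a basis of neighborhoods consisting of dyadic rectangles. No appeal to Lemma~\ref{lem:equiv_k_indep} or Lemma~\ref{lem:k_independence} is needed for this theorem, though the latter makes clear that the family of hypotheses is nested in a coarse-to-fine fashion, consistent with the refinement structure of $\mathcal{P}$.
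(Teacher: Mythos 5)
Your proof is correct, and it is in essence the same $\pi$--$\lambda$ argument as the paper's, differing only in packaging. The paper works on the underlying probability space: it forms the two classes $\mathcal{C}_X=\{[X\in B]:B\in\mathcal{I}\}$ and $\mathcal{C}_Y=\{[Y\in B]:B\in\mathcal{I}\}$ of dyadic-interval preimages, checks that each is a $\pi$-system with $\sigma(\mathcal{C}_X)=\sigma(X)$ and $\sigma(\mathcal{C}_Y)=\sigma(Y)$, and invokes the Basic Criterion of Resnick (1999, p.~92)---independent $\pi$-systems generate independent $\sigma$-fields---to conclude $\sigma(X)\independent\sigma(Y)$. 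You instead work on the sample space $[0,1]^2$ with the single $\pi$-system $\mathcal{P}$ of dyadic rectangles and apply the uniqueness-of-measures corollary of Dynkin's theorem to get $F=F_X\otimes F_Y$ directly; the hypothesis of $(k_1,k_2)$-independence for all $k_1,k_2$ is exactly agreement of $F$ and $F_X\otimes F_Y$ on $\mathcal{P}$, as you say. The two corollaries of the $\pi$--$\lambda$ theorem are interchangeable here, and your route is arguably a bit more self-contained, needing only the textbook measure-uniqueness statement rather than the two-class independence criterion; your verification that $\mathcal{P}$ is a $\pi$-system correctly rests on the fact that two dyadic intervals at possibly different levels are nested or disjoint, and your claim that $\sigma(\mathcal{P})=\mathcal{B}([0,1]^2)$ is standard. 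One cosmetic point you share with the paper: under the half-open convention $I^k_l=[(l-1)/2^k,l/2^k)$ the point $1$ is omitted, so strictly the top interval at each level should be closed at $1$ (as in the paper's informal description of $\om_{01}$ etc.) for the generation claim on $[0,1]$; this affects neither argument, since both measures are probabilities and the discrepancy is a single slice. Your observation that Lemmas~\ref{lem:equiv_k_indep} and \ref{lem:k_independence} are not needed also matches the paper, whose proof of this theorem makes no use of them.
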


The relationship between $X\independent Y$ and $X\independentk Y$ suggests a natural multi-scale strategy for nonparametrically testing independence---simply through testing $(k_1,k_2)$-independence from coarse to fine strata. In practice, one is typically only interested in the dependency up to a practical level of details. In other words, $(k_1,k_2)$-independence for some large enough $k_1$ and $k_2$ is in fact what is sought after by practitioners in application areas.

The next question is, then, how to effectively test for $(k_1,k_2)$-independence. 
A brute-force strategy based on classical tests applied on the entire $(k_1,k_2)$-stratum, such as a $\chi^2$-test, will face two fundamental difficulties. First, for even just moderately large $k_1$ and $k_2$, the tests would incur very many degrees of freedom, $(2^{k_1}-1)(2^{k_2}-1)$ to be exact. On the other hand, at such a fine discretization, most, if not all, of the $2^{k_1+k_2}$ windows will typically contain only a small number of observations thereby making the asymptotic approximation to the sampling distributions unreliable.

To overcome these difficulties, we seek an alternative strategy that aims to be prudent in ``using up the degrees of freedom'' in the test. 
The following theorem gives an alternative way to characterizing $(k_1,k_2)$-independence in terms of ORs on $2\times 2$ subtables in coarser stratifications, which will serve as the basis for a coarse-to-fine scanning test strategy. From now on, we use $\A^{(k_1,k_2)}=\cup_{k_1'\leq k_1, k_2'\leq k_2}\A^{k_1',k_2'}$ to denote the totality of all windows in coarser strata than $\A^{k_1,k_2}$.
\begin{thm}
\label{thm:independent_k_or}
For any $k_1,k_2=1,2,\ldots$, 
\[
X\independentk Y \quad \Leftrightarrow \quad \theta(A)=0 \quad \text{for all $A\in \A^{(k_1-1,k_2-1)}$}.
\]
\end{thm}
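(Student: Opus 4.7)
The plan is to prove the two directions separately.

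For the forward direction $(\Rightarrow)$, assuming $X \independentk Y$, I would fix any $A \in \A^{(k_1-1,k_2-1)}$, say $A = I^{k_1'}_{l_1} \times I^{k_2'}_{l_2} \in \A^{k_1',k_2'}$ with $k_1' \leq k_1-1$ and $k_2' \leq k_2-1$. The four sub-quadrants $A_{ab}$ are level-$(k_1'+1, k_2'+1)$ windows, a stratum coarser than $(k_1, k_2)$, so Lemma~\ref{lem:k_independence} yields $(k_1'+1, k_2'+1)$-independence and hence $F(A_{ab}) = F_X(\pi_X(A_{ab})) F_Y(\pi_Y(A_{ab}))$, where $\pi_X, \pi_Y$ denote the coordinate projections. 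Since $\pi_X(A_{ab})$ depends only on $a$ and $\pi_Y(A_{ab})$ only on $b$, direct multiplication gives $F(A_{00}) F(A_{11}) = F(A_{01}) F(A_{10})$, i.e., $\theta(A) = 0$.

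For the reverse direction $(\Leftarrow)$, I would induct on $k_1 + k_2$. The base case $(k_1, k_2) = (1,1)$ is immediate: $\A^{(0,0)} = \{\Omega\}$, and thanks to Lemma~\ref{lem:equiv_k_indep} the single nontrivial pairwise-product identity needed for $(1,1)$-independence is precisely $\theta(\Omega) = 0$. For the inductive step, note that if $k_1 \geq 2$ then $\A^{(k_1-2, k_2-1)} \subset \A^{(k_1-1, k_2-1)}$, so the hypothesis gives $\theta(A) = 0$ for all $A \in \A^{(k_1-2, k_2-1)}$, and the inductive hypothesis yields $X \independent_{k_1-1, k_2} Y$; symmetrically, if $k_2 \geq 2$ then $X \independent_{k_1, k_2-1} Y$.

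It remains to combine these coarser-stratum independences with the local conditions $\theta(A) = 0$ for $A \in \A^{k_1-1, k_2-1}$ to obtain $(k_1, k_2)$-independence. Writing $p(l_1, l_2) = F(I^{k_1}_{l_1} \times I^{k_2}_{l_2})$ and focusing on the generic case $k_1, k_2 \geq 2$, the $(k_1, k_2-1)$-independence gives $p(l_1, 2j-1) + p(l_1, 2j) = u'_{l_1} w_j$ for some positive sequences, while each local rank-one condition states that the ratio $p(2i-1, l_2)/p(2i, l_2)$ takes a common value $\gamma_i(j)$ across $l_2 \in \{2j-1, 2j\}$. Substituting these ratios into the column-pair sums forces $\gamma_i(j) = u'_{2i-1}/u'_{2i}$, independent of $j$, so rows $2i-1$ and $2i$ are in a constant ratio across all columns. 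Finally, $(k_1-1, k_2)$-independence determines the row-pair sums up to a product form, which combined with the constant row-pair ratios yields $p(l_1, l_2) = r_{l_1} s_{l_2}$ for all $l_1, l_2$. The edge cases $k_1 = 1$ or $k_2 = 1$ are analogous but invoke only the one nontrivial coarser-stratum independence, since constant ratio between two rows (or columns) is already the desired factorization.

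The hard part will be the algebraic synthesis just sketched: one must track carefully how the two coarser marginal rank-one structures and the local $2 \times 2$ rank-one conditions combine to force full rank-one structure on the finest stratum. Secondary care is needed for cells with zero probability, which can be handled by restricting attention to the support of $F$ or by a continuity/limit argument.
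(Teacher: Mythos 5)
Your proposal is correct, and its hard direction ($\Leftarrow$) takes a genuinely different route from the paper's. The forward direction is essentially the paper's argument (Lemma~\ref{lem:k_independence} plus the definition applied window by window). For the reverse, the paper fixes one margin and refines the other ($(n_1,n_2)\to(n_1,n_2+1)$), and its core labor is showing that \emph{every} adjacent local odds ratio at the finest level equals one: it first reduces independence of an $I\times J$ table to the $(I-1)(J-1)$ adjacent local odds ratios (its Proposition~\ref{prop:local_odds_ratio}), then handles column pairs straddling a coarse dyadic boundary ($j$ even) via a $2\times 4$-table lemma (Proposition~\ref{prop:2by4table}) and sibling pairs ($j$ odd) via an inner induction on the dyadic level $m(i)$ of the row index, with aggregation arguments throughout. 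You instead induct on $k_1+k_2$, harvest $(k_1-1,k_2)$- and $(k_1,k_2-1)$-independence from the inductive hypothesis---legitimate, since $\A^{(k_1-2,k_2-1)}$ and $\A^{(k_1-1,k_2-2)}$ are both contained in $\A^{(k_1-1,k_2-1)}$, and together with $\A^{k_1-1,k_2-1}$ they exhaust it---and close with a purely local computation on each finest-stratum block. Your sketched synthesis does go through, and in fact can be made division-free, which also disposes of your zero-cell worry: with $u'_{l_1}=F_X(I^{k_1}_{l_1})$, $t_{l_2}=F_Y(I^{k_2}_{l_2})$, $v_i=F_X(I^{k_1-1}_i)=u'_{2i-1}+u'_{2i}$, $w_j=F_Y(I^{k_2-1}_j)=t_{2j-1}+t_{2j}$, write $a,b,c,d$ for the four cells of the block on rows $2i-1,2i$ and columns $2j-1,2j$; the two coarser independences pin the block margins to $a+b=u'_{2i-1}w_j$, $c+d=u'_{2i}w_j$, $a+c=v_i t_{2j-1}$, $b+d=v_i t_{2j}$, and reading $\theta(A)=0$ as the cross-product identity $ad=bc$ (a convention the paper's own use of odds ratios also needs implicitly) gives $aN=(a+b)(a+c)$ with $N=v_iw_j$, hence $a=u'_{2i-1}t_{2j-1}$ when $N>0$ and trivially when $N=0$; similarly for $b,c,d$, which is exactly $(k_1,k_2)$-independence---no ratios $\gamma_i(j)$ and no support restriction required. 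Comparing the two: the paper's route isolates the reusable classical fact that adjacent local odds ratios characterize independence and confines all work to the finest stratum, at the cost of the $2\times 4$ lemma and the nested induction on dyadic row levels; your route exploits the coarse-to-fine structure directly, is shorter and more elementary, and makes transparent how the $(2^{k_1}-1)(2^{k_2}-1)$ conditions are consumed across the inductive calls and the final synthesis.
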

Note that there are $(2^{k_1}-1)(2^{k_2}-1)$ elements in $\A^{(k_1-1,k_2-1)}$. 
Hence the cell probabilities of the $(k_1,k_2)$-stratum when all $\theta(A)=0$ has $2^{k_1+k_2} - 1 - (2^{k_1}-1)(2^{k_2}-1)=2^{k_1}+ 2^{k_2} - 2$ free parameters, which matches the degrees of freedom in that table under $X\independentk Y$. This theorem will allow us to transform a complex alternative of $(2^{k_1}-1)(2^{k_2}-1)$ degrees of freedom into $(2^{k_1}-1)(2^{k_2}-1)$ simple alternatives each of 1 degree of freedom in testing $(k_1,k_2)$-independence.

\subsection{Likelihood factorization on contingency tables given margins}

We next establish the main theoretical result that will help us derive inference recipes for FES---namely a factorization of the likelihood under Fisher's multivariate hypergeometric (MHG) distribution into the product of a collection of (univariate) HG likelihoods defined on $2\times 2$ subtables corresponding to the windows in coarser strata. 

Though there is a more general version of the theorem for $R\times C$ contingency tables with $R$ and $C$ greater than 1, we shall describe it in the particular case when $R=2^{k_1}$ and $C=2^{k_2}$ as it is the current context. (The proof of the theorem applies to the more general case with only minor modifications.) Hereafter, for $i,j \geq 0$, we use $\bn_{i,j}=\{n(A):A\in \A^{i,j}\}$ to represent the $2^{i}\times 2^{j}$ contingency table corresponding to the $(i,j)$-stratum.

\begin{thm}[Multi-scale factorization of the multivariate hypergeometric likelihood]
\label{thm:likelihood_factorization}
Suppose the counts $\bn_{k_1,k_2}=\{n(A):A\in\A^{k_1,k_2}\}$ in a $2^{k_1}\times 2^{k_2}$ contingency table arise from Poisson, multinomial, or product-multinomial sampling. Then if the two marginal variables are independent, the conditional sampling probability given the row totals $\bn_{k_1,0}=\{n(A):A\in \A^{k_1,0}\}$ and column totals $\bn_{0,k_2}=\{n(A):A\in\A^{0,k_2}\}$, which is a (Fisher's) MHG likelihood, factorizes into the product of the likelihood of the HG likelihood on the $2\times 2$ subtables on all $A\in \A^{(k_1-1,k_2-1)}$. That is,
\[
p(\bn_{k_1,k_2}\,|\,\bn_{k_1,0},\bn_{0,k_2}) = 
\prod_{A\in \A^{(k_1-1,k_2-1)}} p(n(A_{00})\,|\,n(A_{0\cdot}),n(A_{\cdot 0}),n(A))
\]
where $p(\bn_{k_1,k_2}\,|\,\bn_{k_1,0},\bn_{0,k_2})$ is the MHG pmf for the whole table given the marginal totals, and $p(n(A_{00})\,|\,n(A_{0\cdot}),n(A_{\cdot 0}),n(A))$ the HG pmf on the $2\times 2$ subtable on each $A$ given its row and column totals.
\end{thm}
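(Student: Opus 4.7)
The plan is a direct combinatorial proof that tracks, for each dyadic window $B$ at some level $(i,j)$ with $0\leq i\leq k_1$ and $0\leq j\leq k_2$, how many times the factor $n(B)!$ appears in the numerator versus the denominator of the right-hand side product, and matches the net result to the explicit multivariate hypergeometric pmf
\[
p(\bn_{k_1,k_2}\,|\,\bn_{k_1,0},\bn_{0,k_2}) = \frac{\prod_{i=1}^{2^{k_1}} n_{i\cdot}!\,\prod_{j=1}^{2^{k_2}} n_{\cdot j}!}{n!\,\prod_{i,j} n_{ij}!}.
\]
First I would invoke the classical reduction (see, e.g., Sec.~3.5 of \cite{agresti2013categorical}) that under Poisson, multinomial, or product-multinomial sampling with independent margins, the conditional pmf given both marginal totals equals exactly this expression. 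Then I would expand each HG factor in the product: the $A$th term contributes $n(A_{0\cdot})!\,n(A_{1\cdot})!\,n(A_{\cdot 0})!\,n(A_{\cdot 1})!$ to the numerator and $n(A)!\,n(A_{00})!\,n(A_{01})!\,n(A_{10})!\,n(A_{11})!$ to the denominator.

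For a fixed window $B$ at level $(i,j)$, the factor $n(B)!$ enters the numerator iff $B = A_{a\cdot}$ for some parent $A\in \A^{i-1,j}\cap \A^{(k_1-1,k_2-1)}$ (which requires $i\geq 1$ and $j\leq k_2-1$), or iff $B = A_{\cdot b}$ for some parent $A\in \A^{i,j-1}\cap \A^{(k_1-1,k_2-1)}$ (which requires $j\geq 1$ and $i\leq k_1-1$); each alternative supplies exactly one occurrence whenever its condition holds. Similarly, $n(B)!$ enters the denominator iff $B = A$ itself with $A\in \A^{(k_1-1,k_2-1)}$ (needing $i\leq k_1-1$ and $j\leq k_2-1$), or $B$ is a child $A_{ab}$ of some $A\in \A^{i-1,j-1}\cap \A^{(k_1-1,k_2-1)}$ (needing $i,j\geq 1$). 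A case analysis over the nine regions defined by $i\in\{0\}\cup\{1,\ldots,k_1-1\}\cup\{k_1\}$ and the analogous three-way split in $j$ shows that all numerator and denominator occurrences balance except at four ``corner'' regions: level $(0,0)$ leaves $n!$ in the denominator, each level-$(k_1,0)$ window leaves $n_{i\cdot}!$ in the numerator, each level-$(0,k_2)$ window leaves $n_{\cdot j}!$ in the numerator, and each level-$(k_1,k_2)$ window leaves $n_{ij}!$ in the denominator. These surviving boundary factors assemble into precisely the MHG pmf displayed above.

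The main obstacle is keeping the indicator bookkeeping straight across the boundary cases, since the ``parent exists'' and ``child exists'' conditions degenerate asymmetrically at the extreme levels $i\in\{0,k_1\}$ and $j\in\{0,k_2\}$; but once the nine-region table is written down, the check is mechanical. As a cross-check and alternative route, induction on $k_1+k_2$ works cleanly: splitting the $y$-axis at $1/2$ gives a three-way factorization of the full MHG pmf into (i) the MHG on a $2^{k_1}\times 2$ column-collapsed super-table, (ii) the MHG on the left half given its own row and column margins, and (iii) the MHG on the right half given its own margins, and the index set $\A^{(k_1-1,k_2-1)}$ partitions accordingly into $\A^{(k_1-1,0)}$ and the rescaled index sets of the two halves. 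The inductive hypothesis then finishes the argument, and the one-step factorization needed per split is again a direct factorial cancellation. Either route extends to the general $R\times C$ case by the same bookkeeping with nested non-dyadic splits in place of the dyadic ones.
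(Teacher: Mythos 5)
Your proof is correct, but it takes a genuinely different route from the paper's. The paper proceeds by induction on $(k_1,k_2)$ along the filtration of strata, with the key step being a probabilistic ``urn argument'' showing that $p(\bn_{i-1,j}\,|\,\bn_{i,0},\bn_{0,j})=p(\bn_{i-1,j}\,|\,\bn_{i-1,0},\bn_{0,j})$---i.e., refining the conditioning margins does not alter the law of a coarser table---after which the factorization falls out of the chain rule and the conditional independence of the non-overlapping windows within each stratum. Your primary route instead starts from the closed-form MHG pmf and verifies the identity by pure factorial bookkeeping; I checked your nine-region count and it balances: writing the net multiplicity of $n(B)!$ for $B\in\A^{i,j}$ as
\[
N(i,j)=\I\{i\geq 1,\,j\leq k_2-1\}+\I\{j\geq 1,\,i\leq k_1-1\}-\I\{i\leq k_1-1,\,j\leq k_2-1\}-\I\{i\geq 1,\,j\geq 1\},
\]
one gets $N=0$ on all five interior/edge regions, $N=-1$ at $(0,0)$ and $(k_1,k_2)$, and $N=+1$ at $(k_1,0)$ and $(0,k_2)$, exactly reproducing $\prod_i n_{i\cdot}!\prod_j n_{\cdot j}!\big/\bigl(n!\prod_{i,j}n_{ij}!\bigr)$; the uniqueness of the dyadic parent guarantees ``exactly one occurrence'' in each case, and your fallback induction (split the $y$-axis at $1/2$, use the standard aggregation property that column-collapsing an MHG yields an MHG with conditionally independent MHG halves) is the same idea as the paper's but with the urn argument replaced by factorial cancellation. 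What each approach buys: yours is more elementary and self-contained---a finite algebraic identity once the classical reduction to the MHG conditional (your Agresti citation) is granted---whereas the paper's probabilistic argument delivers as a free by-product the invariance $p(\bn_{k_1,k_2}\,|\,\bn_{k_1',0},\bn_{0,k_2'})=p(\bn_{k_1,k_2}\,|\,\bn_{k_1,0},\bn_{0,k_2})$ for all finer $k_1',k_2'$ (hence conditioning on marginal order statistics, Remark~I after the theorem), which your cancellation argument does not yield directly and which the paper needs later for the FES inference recipe.
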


\noindent Remark I: In fact, the proof of the theorem (see Supplementary Materials~S1) implies that
\[
p(\bn_{k_1,k_2}\,|\,\bn_{k_1',0},\bn_{0,k_2'}) = p(\bn_{k_1,k_2}\,|\,\bn_{k_1,0},\bn_{0,k_2}) \quad \text{for any $k_1'\geq k_1$ and $k_2' \geq k_2$.}
\]
By letting $k_1',k_2'\uparrow \infty$, we see that the same factorization of the probability of $\bn_{k_1,k_2}$ holds even if we condition on the marginal order statistics for $X$ and $Y$. Intuitively, under independence, additional knowledge about the marginals does not inform us about the conditional distribution on the $(k_1,k_2)$-stratum beyond what the corresponding discretization does. 
\vspace{0.5em}

\noindent 
Remark II: It is worth noting that the theorem regards the conditional distribution of $\bn_{k_1,k_2}$ given the marginal order statistics, and it remains valid regardless of the marginal distributions of $X$ and $Y$. An implication is that after any monotone transformations to the two margins $X$ and $Y$ (or equivalently varying the strata partitioning points along either margin) will not affect the validity of the factorization. Of course, after such transforms the resulting contingency tables will be different, but the theorem holds still.
\vspace{0.5em}

The likelihood factorization implies a form of orthogonality in an information theoretic sense---the empirical evidence contained in each of the $2\times 2$ tables is non-overlapping once we condition on the corresponding marginal totals. At first glance, this result is surprising because the $2\times 2$ tables can either be nested or partially overlapping (so non-nested) with each other as illustrated in \ref{fig:overlap_windows}. The windows share observations and thus empirical evidence. The theorem suggests that the shared information is all contained in the marginals. 

\begin{figure}[t!]
\begin{center}
\includegraphics[width=16em,clip=TRUE, trim=0 20mm 0 20mm]{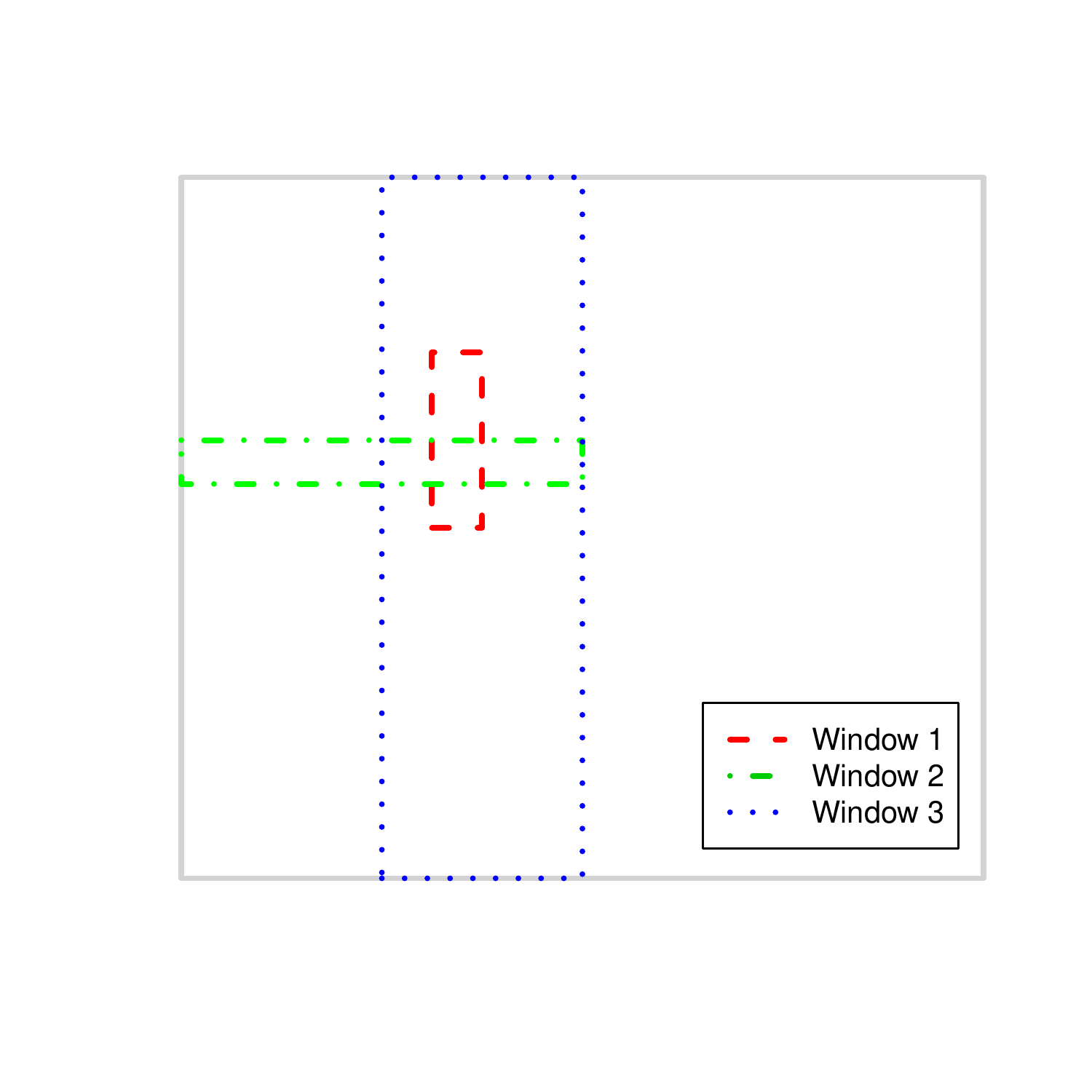}
\caption{Three example scanning windows. Window 1 is nested in Window 3, while Window 2 is partially overlapping with the other two.}
\label{fig:overlap_windows}
\end{center}
\vspace{-1em}
\end{figure}

\begin{figure}[t!]
\begin{center}
\includegraphics[width=27em, clip=TRUE, trim=90mm 0mm 50mm 0mm]{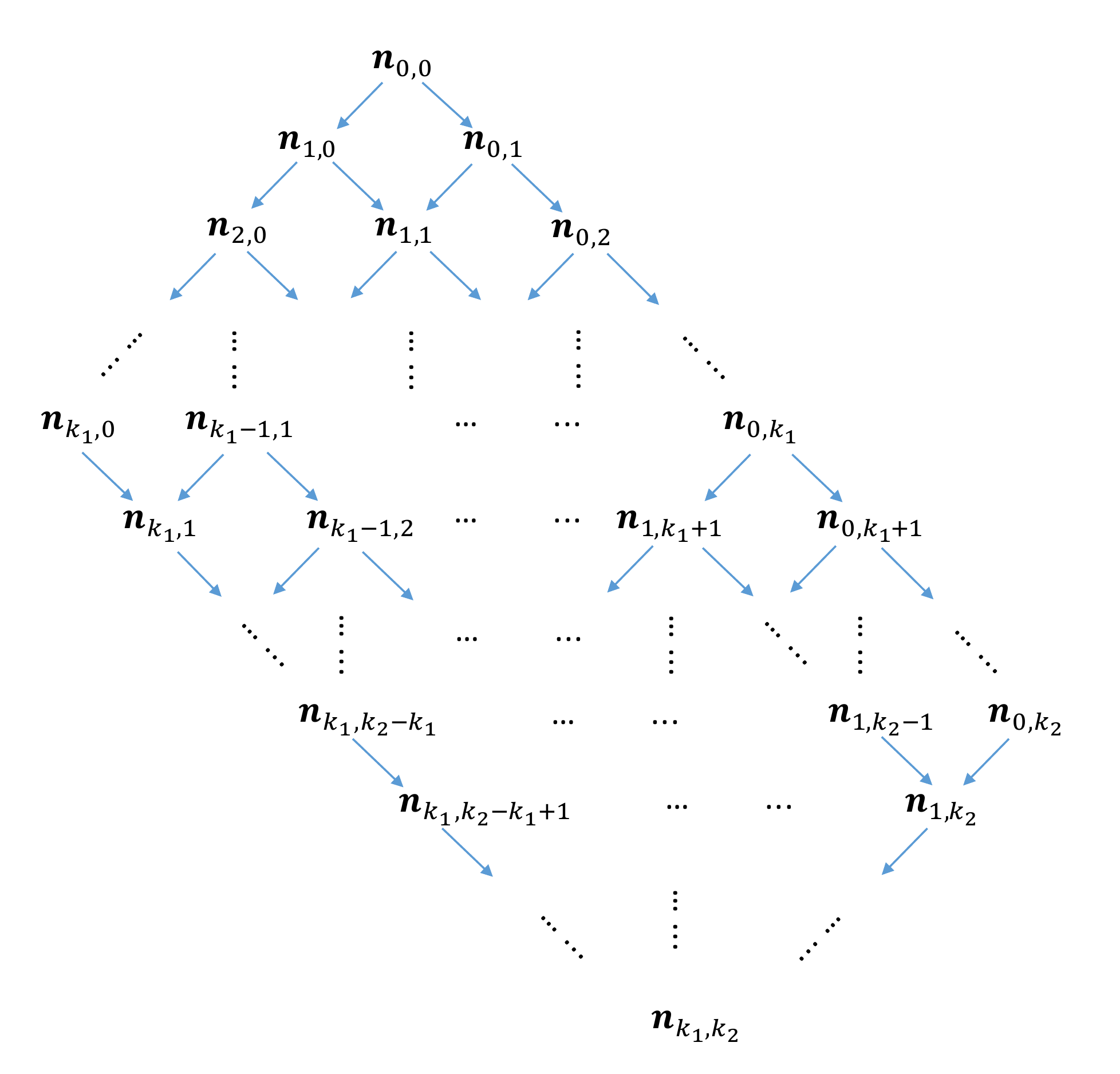}
\caption{A graphical representation in terms of a Bayesian network of the sequential generative model for the MHG distribution implied by Theorem~\ref{thm:likelihood_factorization}. Without loss of generality, here we assume $k_1\leq k_2$. We use $\bn_{i,j}=\{n(A):A\in \A^{i,j}\}$ to represent the $2^{i}\times 2^{j}$ table corresponding to the $(i,j)$-stratum. Given the marginals---which are those tables on the topleft and topright sides---every inner stratum is generated independently from HG distributions on the corresponding $2\times 2$ subtables given their marginals determined in their parent strata.}
\label{fig:filtration_tree}
\end{center}
\vspace{-1em}
\end{figure}

Theorem~\ref{thm:likelihood_factorization} provides a multi-scale sequential generative model for the MHG distribution, based on generating strata coarse-to-fine using independent draws from HG distributions given the coarser strata. The model can be represented as a directed acyclic graph (DAG), i.e., a Bayesian network, illustrated in \ref{fig:filtration_tree}. 
One immediate application of the generative model is a Monte Carlo sampler for MHG distributions by sequentially generating the corresponding $2\times 2$ tables---each from an HG given the previously generated row and column totals---from coarse to fine strata. In the following, we use the sequential generative model to devise an inference recipe for identifying variable dependence.

\subsection{Fisher exact scanning}
\label{sec:fes}
Now we are ready to introduce FES for testing dependency. Theorem~\ref{thm:independent_k_or} suggests a ``divide-and-conquer'' strategy for testing and characterizing $(k_1,k_2)$-dependency between $X$ and $Y$, one based on testing $H_0(A):\theta(A)=0$ for all $A \in \A^{(k_1-1,k_2-1)}$. 

The question is then, what test to employ for testing each $H_0(A)$. Some obvious candidates include common tests such as the likelihood ratio test and Pearson's $\chi^2$ test. The significance level of these tests can be evaluated using the $\chi^2$ distribution when the number of counts is large. Such asymptotic assumptions may be appropriate for large subtables, i.e., such $A\in \A^{i,j}$ when $i$ and $j$ are small, but will typically not hold for the vast majority of the subtables in high resolutions.

We adopt Fisher's approach by carrying out a conditional test on $H_0(A)$ given the corresponding marginal totals. This choice is not merely because we desire to condition out the marginals, but as we shall see will result in very simple inference recipe due to the likelihood factorization of MHG.
In particular, we carry out Fisher's exact test on the $2\times 2$ tabulation of each window $A$. We refer to this strategy as {\em Fisher exact scanning} (FES) as it is essentially multi-scale scanning using Fisher's exact test.  The most simple version of FES is  through carrying out Fisher's exact test exhaustively on all windows $A\in \A^{(k_1-1,k_2-1)}$. In practice, however, many, if not all, of the windows in the fine strata contain so few data that they cannot provide strong empirical evidence. As such, one can carry out Fisher's exact test on strata up to some maximum resolution. From now on, we call $i+j$ the {\em resolution} of $\A^{i,j}$, and so one may choose a maximum resolution $M\leq k_1+k_2-2$, and scan over windows in all $\A^{i,j}$ with $0\leq i+j\leq M$ using Fisher's exact test.

We summarize the entire FES procedure, including both an optional pre-processing step involving monotonically transforming the marginals and post-processing involving evaluating statistical significance and reporting significant findings in Algorithm~\ref{alg:FES}. We will explain the key components in the procedure in the rest of this subsection but readers may refer to Algorithm~\ref{alg:FES} for an overview of the method.

\begin{algorithm}[p]
\caption{Fisher exact multi-scale scanning (FES) with three-stage \v{S}id\'{a}k correction}
\label{alg:FES}

\begin{algorithmic}
\\
\State{Apply empirical CDF (or rank) transform to each margin:} \Comment{Optional preprocessing}
\[ (x_i,y_i)\rightarrow (\hat{F}_{X}(x_i),\hat{F}_{Y}(y_i)) \quad \text{for each observation $(x_i,y_i)$} \vspace{0.5em}\]

\For{$r$ in $0,1,2,\ldots,M \leq k_1+k_2-2$}  \Comment{Scan from low to high resolutions}\\
\vspace{-0.5em}

\For{$i$ in $0,1,2,\ldots,\min\{k_1-1,r\}$}
\State{$j = r-i$}
\State{$L(i,j) = 0$} \Comment{Initialize the test counter for level $(i,j)$}\\
\vspace{-0.5em}

\For{each $A\in \A^{i,j}$}\\
\vspace{-0.5em}

\If{$S(A)=1$} \Comment{If $A$ passes the screening rule}
\State{Compute the $p$-value $p(A)$ for Fisher's exact test on the $2\times 2$ table on $A$.}
\State{$L(i,j) \leftarrow L(i,j)+1$} \Comment{Update the test counter}
\Else
\State{Skip testing on $A$.} \Comment{When $A$ does not pass screening, simply skip it.}
\EndIf\\
\vspace{-0.5em}

\EndFor\\
\vspace{-0.5em}

\State{Compute \v{S}id\'{a}k's $p$-value for the tests on $\A^{i,j}$:} \Comment{Multiplicity control per stratum}
\[
p(i,j) = 1 - \left(1 - \min_{A\in\A^{i,j}}p(A)\right)^{L(i,j)}.
\]
\EndFor\\
\vspace{-0.5em}

\State{Compute \v{S}id\'{a}k's $p$-value for resolution $r$:} \Comment{Multiplicity control per resolution}
\[
p_{resol}(r) = 1 - \left(1 - \min_{(i,j):i+j = r, L(i,j)>0} p(i,j)\right)^{|\{(i,j):i+j = r, L(i,j)>0\}|}.
\]
\EndFor\\
\vspace{-0.5em}

\State{Compute the overall \v{S}id\'{a}k's $p$-value:} \Comment{Overall multiplicity control}
\[
p_{overall} = 1 - \left(1 - \min_{r} p_{resol}(r)\right)^{M+1}.
\]
\State{Reject the null hypothesis of independence at level $\alpha$ if $p_{overall} < \alpha$.}
\\
\State{Report windows with $p(A)<\alpha(A)$  where} \Comment{Report significant windows}
\[ \alpha(A) = 1-(1-\alpha)^{1/(M+1)\cdot 1/T(r)\cdot 1/L(i,j)} \text{ for all $A\in \A^{i,j}$.} \]
\end{algorithmic}
\end{algorithm}

\vspace{0.5em}

{\em Multiplicity adjustment.} FES transforms the characterization of arbitrary dependency structure into a multiple testing problem---through testing a collection of hypotheses on $2\times 2$ subtables of various sizes. In summarizing the results from the subtables, one must properly adjust for multiple testing.
We will show in the following that the likelihood factorization of the MHG justifies extremely convenient means to multiplicity adjustment. 

In particular, as we shall see, when $X$ and $Y$ are independent, the Fisher's exact tests, e.g., in terms of their $p$-values, are {\em mutually independent} (up to deviations from independence due to the discreteness of the HG distributions).
Again this may first appear puzzling because the windows can overlap (either nested or partly so). Thus one might not have expected the test statistics to be independent of each other. We show next that it follows from Theorem~\ref{thm:likelihood_factorization} through a data augmentation for the sequential generative model for MHG (\ref{fig:filtration_tree}).

 Specifically, suppose on each $A\in \A^{(k_1-1,k_2-1)}$, for all possible triplets of integers $(a,b,c)$ satisfying $0\leq a \leq c$, $0\leq b \leq c$, and $c\geq 1$, we generate a collection of mutually independent random variables 
$\{n_{a,b,c}(A): 0\leq a \leq c, 0\leq b \leq c, c\geq 1,A\in\A^{(k_1-1,k_2-1)}\}$ such that $n_{a,b,c}$ has the HG distribution with first row total $a$, first column total $b$, and overall total $c$, which from now on we shall denote as ${\rm HG}_{a,b,c}$. Now starting from the coarsest stratum, $\A^{0,0}$, we generate the strata coarse-to-fine from the sequential generative mechanism. Suppose all coarser strata have been generated, then for each $A\in\A^{i,j}$, we draw the corresponding $2\times 2$ table by letting $n(A_{00}) = n_{n(A_{0\cdot}),n(A_{\cdot 0}),n(A)}(A)$, that is, we let $n(A_{00}) = n_{a,b,c}(A)$ for $a=n(A_{0\cdot})$, $b=n(A_{\cdot 0})$, and $c=n(A)$. Here the coarser $2\times 2$ tables serve as the {\em selector} variables that determine which random variables are observed in the finer strata, but do not affect the sampling distributions of the latter otherwise. 

Now on each $A$ and for each $(a,b,c)$ combination, we can define $p_{a,b,c}(A)$ to be the corresponding two-sided Fisher's exact test $p$-value for $n_{a,b,c}(A)$ under ${\rm HG}_{a,b,c}$. Then, under $H_0(A)$, the realized $p$-value for the test on each $A$ is given by
\[
p(A) = p_{n_{0\cdot}(A),n_{\cdot 0}(A),n(A)}(A)
\]
and thus ${\rm P}(p(A)\leq \alpha\,|\,H_0(A))=_{dsc} \alpha$, where and hereafter ``$=_{dsc}$'' means ``equal up to deviations caused by discreteness''. For example, here ${\rm P}(p(A)\leq \alpha\,|\,H_0(A))$ takes the largest attainable value not exceeding $\alpha$.

Next, we argue that for all $r\geq 0$, and $\alpha_{A}\in [0,1]$ for all $A$,
\[ {\rm P}(p(A)\leq \alpha_{A}\,\, \forall A\in \cup_{i+j\leq r}\A^{i,j} \,|\,X\independent Y)=_{dsc} \prod_{A\in \cup_{i+j\leq r}\A^{i,j}} \alpha_{A}.\]
Intuitively, this means that $p(A)$'s are as mutually independent as possible as allowed under the discreteness of the sample space. This is true because
\begin{align*}
&{\rm P}(p(A)\leq \alpha_{A}\,\,\forall A\in \A^{i,j} \text{ with }i+j \leq r\,|\,X\independent Y)\\
=&{\rm E}\big\{
\I\left(p(A)\leq \alpha_A \,\,\forall A\in \A^{i,j}\text{ with }i+j\leq r-1\right)\\
& \times {\rm P}\left(p(A)\leq \alpha_A\,\,\forall A\in\A^{i,j}\text{ with }i+j=r\,|\,\text{all $\bn_{i,j}$ such that $i+j=r+1$},X\independent Y\right)\,|\,X\independent Y\big\}
\end{align*}
where $\I(\cdot)$ is the indicator function for an event. But since
\begin{align*}
&\,\,\,\,\,\,{\rm P}\left(p(A)\leq \alpha_A\,\,\forall A\in\A^{i,j}\text{ with }i+j=r\,|\,\text{all $\bn_{i,j}$ such that $i+j=r+1$},X\independent Y\right)\\
&={\rm P}\left(p_{n_{0\cdot}(A),n_{\cdot 0}(A),n(A)}(A)\leq \alpha_A\,\,\forall A\in\A^{i,j}\text{ with }i+j=r\,|\,\text{all $\bn_{i,j}$ such that $i+j=r+1$},X\independent Y\right)\\
&=_{dsc} \prod_{A\in\A^{i,j}:i+j=r} \alpha_{A}, 
\end{align*}
by iteratively applying this argument for $i+j= r,r-1,\ldots,0$, we have
\begin{align*}
&{\rm P}(p(A)\leq \alpha_{A}\,\,\forall A\in \A^{i,j} \text{ with }i+j \leq r\,|\,X\independent Y)\\
=_{dsc} &{\rm P}(p(A)\leq \alpha_{A}\,\,\forall A\in \A^{i,j} \text{ with }i+j \leq r-1\,|\,X\independent Y)\times \prod_{A\in\A^{i,j}:i+j=r} \alpha_{A}\\
=_{dsc} & \prod_{A\in\A^{i,j}:i+j\leq r} \alpha_{A}.
\end{align*}

Due to the independence (modulo discreteness) among the $p$-values, simple strategies for controlling the family-wise error rate (FWER) such as \v{S}id\'{a}k and Bonferroni correction to $p$-values are effective. There are different ways to applying such corrections, our preferred strategy is to correct the $p$-value in three stages: (i) first among the windows in each stratum $\A^{i,j}$, then (ii) among all strata in each resolution $r$---i.e., for those $\A^{i,j}$ with $i+j=r$---for $r=0,1,\ldots,M$, and (iii) across the $M+1$ resolution levels. Adjusting multiplicity in these stages takes into account the fact that there are many more windows in finer strata and thus treating all tests equally across strata will result in overly large penalty on larger windows. 

Specifically,  in the first stage we aim to compute $p(i,j)$, the corrected $p$-value for the minimum $p$-value in each stratum $\A^{i,j}$. Let $L(i,j)$ be the number of windows in $\A^{i,j}$ on which Fisher's exact test is applied. Then with, for example, \v{S}id\'{a}k's correction,
\[
p(i,j) = 1 - \left(1 - \min_{A\in\A^{i,j}}p(A)\right)^{L(i,j)}.
\]
In the second stage, assume that we have carried out Fisher's exact scanning up to a maximum resolution level $M \leq k_1+k_2-2$. Then for each $r=1,2,\ldots,M$, let $p_{resol}(r)$ denote the corrected $p$-value for the $r$th resolution level, i.e., for all $\A^{i,j}$ with $i+j=r$. Let $T(r)$ be the number of $(i,j)$ pairs with $i+j=r$. (Here $T(r)=r+1$ but later when we introduce screening $T(r)$ is not necessarily $r+1$.) Again, with \v{S}id\'{a}k's correction,
\[
p_{resol}(r) = 1 - \left(1 - \min_{(i,j):i+j = r, L(i,j)>0} p(i,j)\right)^{T(r)}.
\]
Finally, let $p_{overall}$ be the ``overall'' corrected $p$-value. With \v{S}id\'{a}k's correction, it is
\[
p_{overall} = 1 - \left(1 - \min_{r} p_{resol}(r)\right)^{M+1}.
\]
The following theorem shows that using the overall $p$-value for rejecting/accepting the null hypothesis of independence achieves the desired level of FWER. 

\begin{thm}[FWER control]
\label{thm:fwer_control}
If the overall null hypothesis $H_0: X\independent Y$ is true,  
\[
{\rm P}(p_{overall} \leq \alpha\,|\,\bn_{k_1,0},\bn_{0,k_2}) \leq \alpha \quad \forall \alpha \in [0,1].
\]
\end{thm}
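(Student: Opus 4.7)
The plan is to leverage the joint super-uniform behavior (modulo discreteness) of the Fisher's exact test $p$-values $\{p(A)\}$ already established via the MHG likelihood factorization and data augmentation in the paragraphs preceding the statement, and then propagate super-uniformity through the three Šidák corrections from the inside out.

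First I would record, in its complementary form, the joint conditional super-uniformity of the individual $p$-values: conditional on $\bn_{k_1,0}$ and $\bn_{0,k_2}$ and under $X\independent Y$, for any deterministic thresholds $\alpha_A\in[0,1]$ indexed by the scanned windows,
\[
P\Big(p(A)>\alpha_A \text{ for all } A\in \textstyle\bigcup_{i+j\le M}\A^{i,j}\,\Big|\,\bn_{k_1,0},\bn_{0,k_2}\Big) \;\ge\; \prod_{A}(1-\alpha_A).
\]
This is derivable directly from the selector-variable representation already set up. Conditioning recursively on every coarser stratum fixes the triplets $(n(A_{0\cdot}),n(A_{\cdot 0}),n(A))$ for all $A$ in the current stratum, so the residual randomness in $\{p(A)\}$ at that stratum reduces to the independent HG draws $\{n_{a,b,c}(A)\}$ across different $A$. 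Since each Fisher's exact test itself satisfies $P(p(A)>\alpha\mid\text{its margins})\ge 1-\alpha$, the product bound holds conditionally within that stratum; iterating the tower of conditional expectations from coarse to fine yields the global displayed inequality.

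I would then propagate the bound stage by stage. At stage~1, for any $\beta_{i,j}\in[0,1]$, set $\alpha_A = 1-(1-\beta_{i,j})^{1/L(i,j)}$ for every scanned $A\in\A^{i,j}$, and $\alpha_A=1$ for windows skipped by the screening rule $S(A)=0$ (so those factors contribute $1$). By the definition of $p(i,j)$, the event $\{p(i,j)>\beta_{i,j}\}$ coincides with $\{p(A)>\alpha_A \text{ for all } A\in\A^{i,j}\}$, so the previous bound collapses to
\[
P\big(p(i,j)>\beta_{i,j} \text{ for all }(i,j)\big)\;\ge\;\prod_{(i,j)}(1-\beta_{i,j}).
\]
Repeating the same trick one level up---setting $\beta_{i,j}=1-(1-\gamma_r)^{1/T(r)}$ for all $(i,j)$ with $i+j=r$, and $\beta_{i,j}=1$ otherwise---gives $P(p_{resol}(r)>\gamma_r\text{ for all }r)\ge\prod_{r=0}^{M}(1-\gamma_r)$. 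A final application with $\gamma_r = 1-(1-\alpha)^{1/(M+1)}$ yields $P(p_{overall}>\alpha\mid\bn_{k_1,0},\bn_{0,k_2})\ge 1-\alpha$, which rearranges to the stated FWER inequality.

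The substantive step is the first one. The $p(A)$'s are computed from overlapping observations---windows can be nested or partially overlapping, as in \ref{fig:overlap_windows}---so joint independence is far from evident a priori. The factorization in Theorem~\ref{thm:likelihood_factorization} and the augmentation scheme are exactly what is needed to decouple each window's Fisher test from its siblings once the sufficient marginal statistics have been conditioned on; the rest of the argument is mechanical Šidák bookkeeping, where the only care required is to ensure that the screening rule does not break the product bound, which is handled by assigning threshold $\alpha_A=1$ to any window skipped in stage~1.
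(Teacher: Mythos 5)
Your proposal is correct and is essentially the paper's own proof: both derive the conditional mutual independence (up to discreteness) of the window $p$-values from the MHG factorization via the selector/augmentation construction, use super-uniformity of each individual Fisher exact $p$-value, and unwind the three-stage \v{S}id\'{a}k correction by coarse-to-fine tower conditioning, with screened-out windows absorbed harmlessly (your $\alpha_A=1$ convention plays the same role as the paper restricting products to $S(A)=1$, and your product over all $r=0,\ldots,M$ matches the paper's final bound $1-(1-\alpha)^{|\{r:T(r)>0\}|/(M+1)}\leq\alpha$). The only point to tighten is that with screening the exponents $1/L(i,j)$ and $1/T(r)$ make your stage-wise thresholds random, so your opening inequality cannot be invoked verbatim with \emph{deterministic} $\alpha_A$; but since $S(A)$---and hence $L(i,j)$ and $T(r)$---are measurable with respect to the $\sigma$-algebra generated by $\bn_{i+1,j}$ and $\bn_{i,j+1}$, the thresholds are already fixed at the moment your recursive conditioning reaches resolution $r$, which is exactly how the paper's proof handles it by conditioning on $\{\bn_{i,j}: i+j=r+1\}$ at each stage.
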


We emphasize that modern state-of-the-art nonparametric tests of dependency mainly rely on resampling such as permutation to obtain the proper significance threshold. The ability to achieve effective significance evaluation without resampling makes FES computationally desirable especially for large data sets.

\vspace{0.5em}

{\em Screening rules.} Windows in finer strata often contain few data points and thus due to the discreteness of the HG distributions cannot produce $p$-values small enough to be significant. Those windows can be skipped thereby reducing the number of tests and the multiple testing penalty incurred on the other windows. To this end we can adopt a screening rule for deciding whether a window should be tested or skipped. 
One simple criterion for screening is based on sample size thresholding---one can skip testing a $2\times 2$ table $A$ if $n(A)\leq s$ for some minimal required sample size $s$, and/or if min$(n(A_{0\cdot}),n(A_{1\cdot}),n_{\cdot 0}(A),n_{\cdot 1}(A)) \leq s'$, i.e., one of the row margins or one of the column margins is less than some threshold $s'$, because such a table cannot render very statistically significant $p$-value. 

More generally, we let $S(A)$ denote such a screening rule for each $2\times 2$ subtable $A$ with $S(A)=1$ indicating that $A$ passes the screening and thus a test is to be carried out on $A$, and $S(A)=0$ otherwise. When employing the screening rule, the three-stage multiplicity correction stays the same except that now $L(i,j)$ and $T(r)$ become random variables: $L(i,j)=\sum_{A\in\A^{i,j}} S(A)$, i.e., the number of tests carried out in $\A^{i,j}$, and $T(r)=|(i,j):i+j=r,L(i,j)>0|$, i.e., the number of resolutions in which at least one Fisher's exact test is applied. 

One concern about screening regards its effect on the multiple testing control and in particular on the independence among the $p$-values. To this end, one can check that provided that the screening rule $S(A)$ for each $A\in\A^{i,j}$ is measurable w.r.t.\ the $\sigma$-algebra generated by $\bn_{i+1,j}$ and $\bn_{i,j+1}$---e.g., when $S(A)$ is a function of $(n_{0\cdot}(A),n_{1\cdot}(A),n_{\cdot 0}(A),n_{\cdot 1}(A))$, the $p$-values will still be independent (up to deviations caused by discreteness) because the screening only modifies the selector and will not affect the sampling distribution of $p$-values given the selectors. Moreover, Theorem~\ref{thm:fwer_control} still holds with the new definition of $L(i,j)$ and $T(r)$. (We consider the general case with screening in the proof of that theorem.)

An optional but desirable property for the screening rule is {\em inheritability}, that is, if a window $A$ does not pass the screening, then any window contained in $A$ also does not pass the screening. When the screening rule has this property, screening becomes optional stopping. One can carry out FES from coarse to fine resolutions, and terminate the procedure on portions of the sample space as soon as a window does not pass the screening.

\vspace{0.5em}

{\em Identifying significant windows.}
A unique feature of the FES approach is its capability for identifying subsets of the data set that accounts for the detected dependency, if any. This feature is a consequence of the multi-scale scanning lineage, and is achieved simply through reporting the windows whose $p$-values are significant after multiple testing control. This feature is particularly useful when the underlying dependency is local in nature, involving only a small portion of the sample space and/or observations. 

With the three-stage \v{S}id\'{a}k's correction, at an overall FWER level $\alpha$, the null hypothesis of independence will be rejected if and only if there exists at least one window $A$ in some stratum $\A^{i,j}$ such that $p(A)\leq \alpha(i,j)$, where
$\alpha(i,j) = 1-(1-\alpha)^{1/(M+1)\cdot 1/T(r)\cdot 1/L(i,j)}$ is the adjusted level for the windows in that stratum. Thus we can report all windows whose $p$-values less than the corresponding threshold as significant. This will be illustrated in the numerical examples.

\vspace{0.5em}

{\em Optional preprocessing through marginal empirical CDF transform.} As stated in the remarks after Theorem~\ref{thm:likelihood_factorization}, inference under FES regards only the conditional distribution given the marginal order statistics, therefore monotonically transforming the marginal observations does not affect the validity of any of the previous theorems. A useful (though optional) preprocessing step is to apply the empirical CDF transform to each margin which turns the marginal observations into values such as $0,1/n,2/n,\ldots,(n-1)/n$.

One could of course apply FES on the original data without any marginal transform. We do recommend applying an empirical CDF transform (i.e., a rank transform) to the two margins first, because very often the marginal distributions (which are assumed unknown) are far from uniform and thus even under the null hypothesis of independence, some windows can contain a lot of observations while others in the same stratum may contain very few, causing the power for identifying a deviation from independence on the scanning windows to vary substantially across the sample space. Applying an empirical CDF helps even out the number of observations over the windows, thereby evening out the power of detecting any given level of dependency over the windows. The theoretical justifications for FES remains valid with or without the transform.
\vspace{0.5em}

{\em Choosing $k_1$, $k_2$, and $M$.} In applying FES, one needs to specify the maximum level of partitioning $k_1$ and $k_2$ for the two margins, as well as the maximum resolution $M$ for the scanning. After the empirical CDF transformation on each margin, the marginal order statistics lie on an equi-spaced grid, and so an upperbound for the values of $k_1$ and $k_2$ to be considered is $\lceil \log_2 n \rceil$, as any higher levels of partitioning will not generate any window with more than a single observation. In practice, however, it is unnecessary to choose such a large value. Instead, one can choose $k_1$ and $k_2$ in conjunction with the screening rule adopted. For example, if the screening rule is such that no table is tested with any row or column having less than $s'$ observations (e.g., $s'=10$), then $k_1$ and $k_2$ can be set to $\lfloor \log_2 (n/s') \rfloor$. 

The choice on $M$ can follow from a similar sample size consideration. Under the null hypothesis of independence, the data are on average evenly spread over an equi-spaced grid after the empirical CDF transformation, and so if we adopt a screening rule that only tests windows with at least $s$ observations (e.g., $s=25$), then a reasonable choice for $M+1$ is either $\lfloor \log_2 (n/s) \rfloor$ or some value slightly larger. The cost of choosing an $M$ too large lies in the multiple testing adjustment. The more resolutions are tested, the more penalty is incurred. In Supplementary Materials~S2, we carry out a sensitivity analysis of the performance of FES with respect to the choice of $k_1$, $k_2$, and $M$.

\vspace{0.5em}

{\em Conservativeness due to discreteness.} 
While we have focused our discussions on the desirable features that FES inherits from Fisher's exact test and multi-scale scanning, one undesirable property is also passed down. In particular, a commonly criticized drawback of Fisher's exact test is its conservativeness due to the discreteness of the HG distributions. For the same reason, FES also tends to be conservative, especially with small sample sizes and in higher resolutions where the numbers of data points in many windows are small. To address this issue, continuity corrections to the Fisher's exact test designed to overcome its conservativeness, such as the so-called ``mid $p$-value'' \citep[p.17]{agresti2013categorical}, can  be adopted in FES. Our experience suggests that this can substantially attenuate, though often not completely eliminate, the conservativeness. (See \ref{fig:null_power} for example.) We apply the mid $p$-value correction in all of our numerical studies and implement it in our software. 

\vspace{0.5em}

{\em Large sample consistency.} Next we investigate the behavior of the three-stage \v{S}id\'{a}k correction as sample size increases. In particular, it would be reassuring if FES results in a consistent test as sample size grows with respect to any alternative. That is, the power of the test for detecting any arbitrary alternative from independence converges to 1 as the total sample size increases. There are two relevant types of consistency in the current context---(i) the consistency in rejecting the global null of independence $H_0:X\independent Y$, i.e., the probability of rejecting $H_0$ converges to 1 as $n\rightarrow \infty$ when $X\not\independent Y$, and (ii) the consistency of rejecting each local null $H_0(A):\theta(A)= 0$ for every $A$ such that $\theta(A)\neq 0$ at the significance level adjusted for multiple testing. The second type of consistency is stronger than the first type. It not only implies the first type and thus ensures the ability to distinguish from the global null hypothesis, but allows one to correctly characterize the dependency---that is, identify where and how the underlying distribution deviates from independence.
\begin{thm}[Local testing consistency]
\label{thm:local_consistency}
Suppose $X\not\independent Y$, and we observe i.i.d.\ pairs $(X_i,Y_i)$ from their joint distribution $F$. Let $A\in \A^{i,j}$ for some $i,j$ such that $F(A_{\cdot 0}),F(A_{\cdot 1}),F(A_{0\cdot}),F(A_{1\cdot})$ are all non-zero and $\theta(A) \neq 0$. Suppose either
\begin{itemize}
\item[(i)] $k_1$, $k_2$ and $M$ are fixed but large enough that $i\leq k_1$, $j\leq k_2$, and $i+j<M-1$,
\end{itemize}
or
\begin{itemize}
\item[(ii)] $k_1,k_2,M \rightarrow \infty$ such that they are $O(\log n)$.
\end{itemize}
Then we have as $n\rightarrow \infty$
\[
{\rm P}(p(A) < \alpha(A)\,|\,\bn_{k_1,0},\bn_{k_2,0}) \rightarrow 1 \quad \text{$F^{\infty}$-a.s.}
\]
and
\[
{\rm P}(p(A) < \alpha(A)) \rightarrow 1
\]
where $\alpha(A) = 1-(1-\alpha)^{1/(M+1)\cdot 1/T(r)\cdot 1/L(i,j)}$ is the window-specific adjusted level-$\alpha$ threshold under the three-stage \v{S}id\'{a}k correction as defined before. 
\end{thm}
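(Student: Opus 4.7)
\emph{Proof plan.} The plan is to combine two ingredients: under a fixed local alternative ($\theta(A)\neq 0$), the Fisher exact $p$-value $p(A)$ will decay \emph{exponentially} in $n$, while the adjusted threshold $\alpha(A)$ decays only very slowly in $n$ (polylogarithmically in case (ii), and not at all in case (i)), so the exponential beats whatever decay is in the threshold. First I would invoke the SLLN: since the observations are i.i.d.\ from $F$, $n(B)/n\to F(B)$ a.s.\ for each $B\in\{A,A_{00},A_{0\cdot},A_{\cdot 0}\}$. Writing $\hat p(A)=n(A_{00})/n(A)$ and $\hat p_0(A)=n(A_{0\cdot})n(A_{\cdot 0})/n(A)^2$, this gives
$$\hat p(A)-\hat p_0(A)\longrightarrow \frac{F(A_{00})}{F(A)}-\frac{F(A_{0\cdot})F(A_{\cdot 0})}{F(A)^2}=:\delta_A\quad\text{a.s.,}$$
with $\delta_A\neq 0$ because $\theta(A)\neq 0$ and all four marginal masses of $A$ are positive. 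Hence for all large $n$, $n(A)\geq F(A)n/2$, $n(A_{0\cdot})\geq F(A_{0\cdot})n/2$, and $|\hat p(A)-\hat p_0(A)|\geq |\delta_A|/2$ a.s.

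Next I would establish exponential decay of $p(A)$. Given the subtable row/column totals of $A$, the Fisher's exact test statistic $n(A_{00})$ has the central hypergeometric law, so a Hoeffding-type bound (Chv\'atal--Serfling) yields
$$P_{HG}\bigl(|N-E_{HG}[N]|\geq s\bigr)\leq 2\exp\bigl(-2s^2/n(A_{0\cdot})\bigr).$$
Applied with $s=|n(A_{00})-E_{HG}[n(A_{00})]|=n(A)|\hat p(A)-\hat p_0(A)|$, and combined with the bounds above, this gives $p(A)\leq 2\exp(-c_A n)$ for some $c_A>0$ on a probability-one event and for all large $n$. In parallel, the concavity of $x\mapsto 1-(1-\alpha)^x$ gives $1-(1-\alpha)^x\geq x\alpha$ for $x\in[0,1]$, so
$$\alpha(A)\;\geq\;\frac{\alpha}{(M+1)\,T(r)\,L(i,j)}\;\geq\;\frac{\alpha}{(M+1)^2\,2^{i+j}},$$
where I used $T(r)\leq M+1$ and $L(i,j)\leq 2^{i+j}$ (and $i,j$ are fixed since $A\in\A^{i,j}$). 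In case (i) this is a positive constant; in case (ii) with $M=O(\log n)$ it is at least $c/\log^2 n$. Either way $p(A)/\alpha(A)\to 0$ a.s., so $p(A)<\alpha(A)$ for all $n$ large, a.s.

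The unconditional statement then follows by bounded convergence. For the conditional statement, the same reasoning runs inside the conditional law given $\bn_{k_1,0}$ and $\bn_{0,k_2}$: on the probability-one event that the overall marginal proportions converge to their population values, the (noncentral) MHG conditional distribution of $\bn_{k_1,k_2}$ still obeys the SLLN on the subtable of $A$, and the HG concentration bound reproduces the exponential estimate on $p(A)$ conditionally. The main technical obstacle is this transfer in the exponential-decay step: turning the HG tail bound, which is a \emph{null} statement for fixed marginals, into an almost-sure bound on the realized $p$-value under the alternative with random marginals. The fix is to condition on the subtable marginals first (making HG valid) and then combine with the SLLN controls on $n(A_{0\cdot})=\Theta(n)$ and $|n(A_{00})-E_{HG}[n(A_{00})]|=\Theta(n)$ to produce the uniform-in-marginals exponential rate.
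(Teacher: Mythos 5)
Your argument for the unconditional statement is sound and takes a genuinely different route from the paper. The paper works through the asymptotics of the \emph{noncentral} hypergeometric law given the subtable margins, imported from Kou and Ying (1996): a CLT for the standardized count, the bound $e^{-\theta/2}\leq C_n\leq e^{\theta/2}$ on the ratio of null to alternative conditional standard deviations, a conditional mean shift $B_n\asymp \sqrt{n}$, and---for case (ii), where $\alpha_n(A)=O(1/\log n)$---a Berry--Esseen theorem to show the null quantile $F_{A,n}^{-1}(1-\alpha_n(A)/2)$ is $o(\sqrt{n})$ and is therefore beaten by $B_nC_n$. You bypass all of this: since $p(A)$ is a deterministic function of the observed table, the Chv\'{a}tal--Serfling/Hoeffding bound for sampling without replacement bounds it pointwise by $2\exp\left(-2s^2/n(A_{0\cdot})\right)$ with $s=n(A)|\hat{p}(A)-\hat{p}_0(A)|=|n(A_{00})-{\rm E}_{HG}[n(A_{00})]|$, and the SLLN makes $s\asymp n$ almost surely, so the $p$-value decays exponentially and trivially beats the polylogarithmic decay of $\alpha(A)$. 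Your concavity bound $1-(1-\alpha)^x\geq x\alpha$ is correct, as is the claim that $\theta(A)\neq 0$ with nonzero subtable margins forces $\delta_A\neq 0$ (note $\delta_A$ equals $\left(F(A_{00})F(A_{11})-F(A_{01})F(A_{10})\right)/F(A)^2$). This route is more elementary, avoids the noncentral-HG machinery entirely, and yields exponential rather than unquantified $o(1)$ rates. One small caveat: your tail bound matches the double-tail definition of the two-sided $p$-value (which is also what the paper's proof uses, via $F_{A,n}^{-1}(\alpha(A)/2)$ and $F_{A,n}^{-1}(1-\alpha(A)/2)$); under the minimum-likelihood definition you pick up at most a factor $n(A)+1$ from the opposite tail, harmless against $e^{-c_An}$.

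The genuine gap is the conditional statement ${\rm P}(p(A)<\alpha(A)\,|\,\bn_{k_1,0},\bn_{0,k_2})\to 1$ $F^{\infty}$-a.s. You assert that the noncentral MHG conditional distribution ``still obeys the SLLN on the subtable of $A$,'' but that is precisely the nontrivial point: conditional on the full margins the cell counts are no longer sums of i.i.d.\ variables, and a law of large numbers under the Fisher noncentral (M)HG law is not something you can cite off the shelf---it is the step for which the paper leans on Kou and Ying's results. Moreover, almost-sure convergence of the indicators $\I\{p(A)<\alpha(A)\}\to 1$ only yields $L^1$, hence in-probability, convergence of the conditional probabilities, because the conditioning $\sigma$-algebras change with $n$ (and with $k_1,k_2$ in case (ii)) and are not nested, so no martingale-convergence argument applies. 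The fix stays entirely within your framework and exploits the exponential rate you already have: replace the SLLN by Hoeffding bounds for the multinomial counts $n(B)$, $B\in\{A,A_{00},A_{0\cdot},A_{\cdot 0}\}$, so that the good event fails with probability at most $e^{-c'n}$ and hence ${\rm P}(p(A)\geq \alpha(A))\leq e^{-cn}$ unconditionally; then Markov's inequality gives ${\rm P}\left({\rm P}(p(A)\geq\alpha(A)\,|\,\bn_{k_1,0},\bn_{0,k_2})>e^{-cn/2}\right)\leq e^{-cn/2}$, and Borel--Cantelli delivers the almost-sure conditional statement. As written, however, the conditional half of the theorem is asserted rather than proved.
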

\noindent Remark: 

In particular, the theorem guarantees the consistency of our recommended choice of $k_1$, $k_2$, and $M$ based on $n$.
\vspace{0.5em}

An immediate implication of the local, window-specific consistency is the global consistency that ensures the rejection of the joint null $X\independent Y$ under any alternative  $X\not \independent Y$.
\begin{thm}[Global testing consistency]
\label{thm:global_consistency}
Suppose $X\not\independent Y$, and either
\begin{itemize}
\item[(i)] $k_1$, $k_2$, and $M$ are fixed but large enough that there exists at least one $A\in\A^{i,j}$ with $i< k_1$, $j< k_2$, and $i+j<M-1$ such that $F(A_{\cdot 0}),F(A_{\cdot 1}),F(A_{0\cdot}),F(A_{1\cdot})$ are all non-zero and $\theta(A)\neq 0$;
\end{itemize}
or
\begin{itemize}
\item[(ii)] $k_1,k_2,M \rightarrow \infty$ as $n\rightarrow \infty$ such that they are $O(\log n)$.
\end{itemize}
Then we have as $n\rightarrow \infty$
\[
{\rm P}(p_{overall}<\alpha\,|\,\bn_{k_1,0},\bn_{0,k_2})\rightarrow 1 \quad \text{$F^{\infty}$-a.s.}
\]
and
\[
{\rm P}(p_{overall}<\alpha) \rightarrow 1.
\]
\end{thm}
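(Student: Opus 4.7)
The plan is to derive the global consistency as a direct consequence of the local consistency result in Theorem~\ref{thm:local_consistency}, together with the characterizations of independence in Theorems~\ref{thm:independent_k} and \ref{thm:independent_k_or}. The key observation is that, by construction of the three-stage \v{S}id\'{a}k adjustment, the event $\{p_{overall}<\alpha\}$ is implied by $\{p(A)<\alpha(A)\}$ for \emph{any single} window $A$, where $\alpha(A)=1-(1-\alpha)^{1/(M+1)\cdot 1/T(r)\cdot 1/L(i,j)}$. So it suffices to exhibit one window $A$ for which local consistency applies.

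First I would treat case (i). The hypothesis directly hands me a window $A\in\A^{i,j}$ with $i<k_1$, $j<k_2$, $i+j<M-1$, with positive margins $F(A_{0\cdot}), F(A_{1\cdot}), F(A_{\cdot 0}), F(A_{\cdot 1})$ and $\theta(A)\neq 0$. This is precisely the hypothesis of Theorem~\ref{thm:local_consistency}(i), which gives $\mathrm{P}(p(A)<\alpha(A)\,|\,\bn_{k_1,0},\bn_{0,k_2})\to 1$ $F^\infty$-a.s.\ and the same unconditional convergence. Since $\{p(A)<\alpha(A)\}\subseteq \{p_{overall}<\alpha\}$, both conclusions of Theorem~\ref{thm:global_consistency} follow immediately (the unconditional version via dominated convergence, or directly from Theorem~\ref{thm:local_consistency}).

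Next I would handle case (ii). Here I must first \emph{produce} a suitable window. Since $X\not\independent Y$, Theorem~\ref{thm:independent_k} implies there exist finite $k_1^*,k_2^*$ such that $X\not\independent_{k_1^*,k_2^*}Y$, and then Theorem~\ref{thm:independent_k_or} supplies a window $A^*\in\A^{(k_1^*-1,k_2^*-1)}$ with $\theta(A^*)\neq 0$. Before invoking Theorem~\ref{thm:local_consistency} I need to verify that the four margins $F(A^*_{0\cdot}), F(A^*_{1\cdot}), F(A^*_{\cdot 0}), F(A^*_{\cdot 1})$ are all strictly positive; this follows from $\theta(A^*)\neq 0$ because the LOR is finite and nonzero only when all four cell probabilities $F(A^*_{ab})$ are positive, which in turn forces each marginal sum to be positive. (If some $A^*$ had a zero margin one would either shrink or expand to a neighboring window; since the definition of LOR already requires the four cells to be nonzero, I would include a short remark handling this boundary case.) Now because $k_1,k_2,M\to\infty$, for all $n$ sufficiently large we have $k_1^*-1<k_1$, $k_2^*-1<k_2$, and $(k_1^*-1)+(k_2^*-1)<M-1$, so the hypothesis of Theorem~\ref{thm:local_consistency}(ii) is fulfilled at $A^*$. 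Applying that theorem yields $\mathrm{P}(p(A^*)<\alpha(A^*))\to 1$ (and the a.s.\ conditional version), and containment in $\{p_{overall}<\alpha\}$ closes the argument.

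The steps are essentially bookkeeping once Theorem~\ref{thm:local_consistency} is in hand; the only nontrivial point is the existence of the suitable $A^*$ in case (ii), which uses the multi-scale characterization of independence (Theorems~\ref{thm:independent_k}--\ref{thm:independent_k_or}). The main obstacle I anticipate is a careful verification in case (ii) that the chosen $A^*$ has strictly positive margins under $F$, since the characterization in Theorem~\ref{thm:independent_k_or} guarantees only $\theta(A^*)\neq 0$ without a priori ruling out degenerate margins; the argument above that nonzero finite LOR forces all four cell probabilities to be positive resolves this cleanly.
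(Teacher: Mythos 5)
Your proposal is correct and takes essentially the same approach as the paper, whose entire proof is the one-line observation that $\{p(A)<\alpha(A)\}\subseteq\{p_{overall}<\alpha\}$ for any single window $A$, so that the result follows immediately from Theorem~\ref{thm:local_consistency}. The extra work you do in case (ii)---producing the witness window $A^*$ via Theorems~\ref{thm:independent_k} and \ref{thm:independent_k_or} and verifying positive margins---is detail the paper leaves implicit, and it is sound; in fact your boundary worry resolves even more cleanly than you suggest, since whenever $F(A^*_{00})F(A^*_{11})\neq F(A^*_{10})F(A^*_{01})$ the two cells on the strictly larger side have positive probability, which already forces all four margins of $A^*$ to be positive even when the LOR is infinite.
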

\vspace{-1em}

\section{Numerical examples}
\label{sec:num_exam}
\subsection{Power study}
\label{sec:simulations}
Next we carry out a numerical study to evaluate the performance of FES and compare it both quantitatively and qualitatively to several state-of-the-art methods for testing variable dependency---namely, the distance correlation (dCor) \citep{szekely:2007,szekely:2009}, the maximal information coefficient (MIC) \citep{reshef:2011}, the mutual information statistic estimated based on $k$-nearest neighbors (MI-KNN) \citep{kraskov:2004,kinney:2014}, as well as three more classical test statistics for dependency---namely, Pearson's correlation, Hoeffding's $D$-statistic \citep{hoeffding:1948}, and a generalization of Fisher's exact test to $R\times C$ contingency tables based on the tail probabilities of multivariate hypergeometric distributions \citep{mehta:1986}. To make the classical Fisher exact test for $R\times C$ tables comparable to our FES method, we set $R=2^{k_1}$ and $C=2^{k_2}$. We first carry out a power study to evaluate the statistical performance of the different methods. 

In evaluating the statistical power, we consider a total of six signature dependency scenarios that are chosen to be representative of a wide variety of dependency structures. Five of the dependency settings---namely, linear, sine, circular, parabolic, and checkerboard---have been widely adopted in recent works for evaluating metrics of variable dependency \citep{reshef:2011,kinney:2014,filippi:2015}. We include one additional scenario which we believe is also very important in modern applications and especially ``big data'' settings, and that is when the dependency is local---involving only a small portion of the observations/probability mass.  \ref{fig:data_sample_scenarios} presents a realization of the six scenarios (at a small enough noise level that the patterns are clearly visible). The specific simulation settings are summarized in \ref{tab:simulation_setting}. We simulate from each of the six scenarios at 20 different noise levels ranging from low to high (1 to 20).
The sample size and the noise variance are chosen so that the power of the methods cover almost the whole range of $(0,1)$.

\begin{table}[h]
\begin{center}
\begin{tabular}{c|c|c}
Scenario & \# of data points & Simulation setting\\
\hline
&&\\
Linear & $300$ &  $X=U$ and $Y=X + 3\epsilon$ \\
&&\\
Sine & 300 & $X=U$ and $Y=\sin(5\pi X) + 4\epsilon$\\
&&\\
Circular & 300 & $X=\cos(\theta)+\epsilon$ and $Y=\sin(\theta)+\epsilon'$\\
&&\\
Parabolic & $300$ & $X=U$ and $Y=(X-0.5)^2 + 0.75\epsilon$\\
&&\\
Checkerboard & 500 & $X=W+\epsilon$ and $Y=\begin{cases}V_1 + \epsilon' & \text{if $W$ is odd}\\ V_2 + \epsilon' & \text{if $W$ is even} \end{cases}$  \\
&&\\
Local & 1000& $X=\epsilon$ and $Y=\begin{cases} X + 0.25\epsilon''  & \text{if $0\leq \epsilon,\epsilon'\leq 0.7$}\\ \epsilon' & \text{otherwise}\end{cases}$\\
&& \\
\hline
\end{tabular}
\end{center}
\caption{Six simulation scenarios. At each noise level $l=1,2,\ldots,20$, $\epsilon,\epsilon',\epsilon''\iid {\rm N}(0,(l/20)^2)$, and the following random variables are all independent: $U\sim {\rm Uniform}(0,1)$, $\theta\sim {\rm Unif}(-\pi,\pi)$, $W\sim$ Multi-Bern$(\{1,2,3\},(1/3,1/3,1/3))$, $V_1\sim$ Multi-Bern$(\{1,3,5\},(1/3,1/3,1/3))$, and $V_2\sim$ Bern$(\{2,4\},(1/2,1/2))$.}
\label{tab:simulation_setting}
\end{table}%

\begin{figure}[h]
\begin{center}
\includegraphics[width=30em]{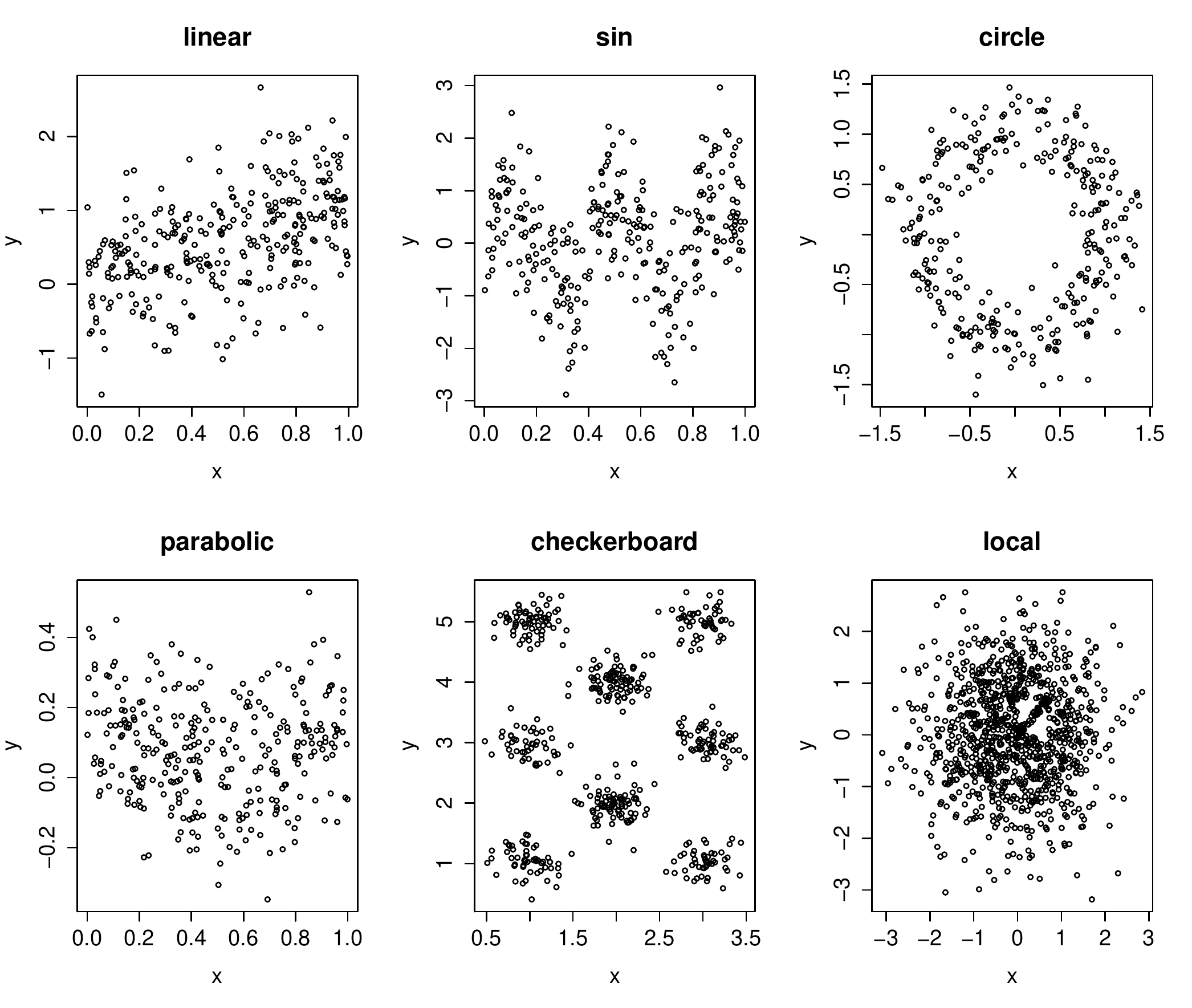}
\caption{Illustration of the six simulation scenarios. This gives an example simulation draw for the six scenarios at noise level $l=2$.}
\label{fig:data_sample_scenarios}
\end{center}
\vspace{-1em}
\end{figure}

We carry out 10,000 simulations under each setting and noise level, and estimate the power of six different methods---FES, dCor, MIC, Hoeffding's $D$ test, MI-KNN (with $k=10$), and Pearson's correlation ($R^2$). For FES, we adopt a screening rule that a window must contain at least $s=25$ observations and each row and column in the $2\times 2$ subtable must contain at least $s'=10$ data points. Accordingly, following our suggestion in Section~\ref{sec:fes}, we set $k_1=k_2=M+1=\lfloor \log_2(n/10)\rfloor$. 
The power as a function of the noise level for each of the methods is presented in \ref{fig:power_functions}.
\begin{figure}[ph!]
\begin{center}
\includegraphics[width=38em]{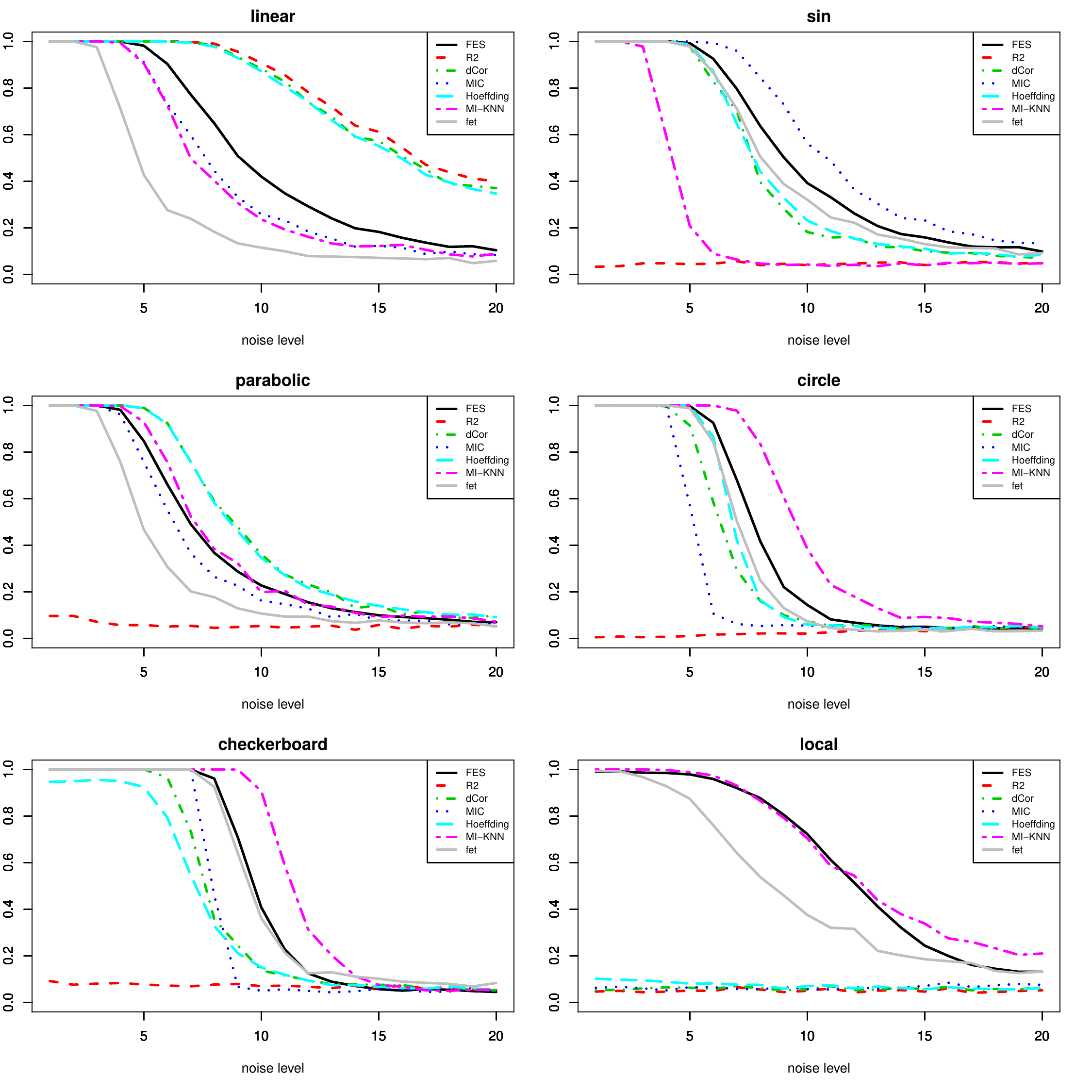}
\caption{Power under the six scenarios at 20 noise levels. Seven methods are compared: FES using three-stage exact \v{S}id\'{a}k's correction, Pearson's correlation ($R^2$), distance correlation (dCor), maximal information coefficient (MIC), Hoeffding's $D$ test, $k$-nearest neighbor based mutual information (MI-KNN with $k=10$), and a generalization of Fisher exact test to $R\times C$ tables (fet). The significance thresholds for all methods except FES and Hoeffding's $D$ are computed through simulation---standard Monte Carlo for fet and permutation for the rest.}
\label{fig:power_functions}
\end{center}
\vspace{-1em}
\end{figure}

In summary, no methods uniformly dominate all else under all simulation settings. FES behaves robustly across scenarios. Specifically, when pitched against each of the other methods, FES outperforms each competitor in at least as many scenarios as those in which it underperforms. This is consistent with earlier theoretical results that show that multi-scale scan tests using axis-aligned rectangles enjoy minimax optimality \citep{walther:2010}. Also, it is interesting to note that Pearson's correlation, dCor, MIC, and Hoeffding essentially lose all power in the ``local'' scenario when the dependency involves only a small subset of the observations. FES and MI-KNN, which do measure local features of the joint distribution, are the most powerful in such cases as expected. 

We also verify that the methods properly control the FWER, for otherwise the power comparison is not meaningful. To this end, we also carry out 10,000 simulations under a ``null'' scenario, under which $X$ and $Y$ are independent standard normal variables. (Note that FES is invariant to marginal transformations on the data, and hence its behavior under independence is not affected by the choice of the marginal distributions of $X$ and $Y$ at all.) We carry out this null simulation under 20 different sample sizes, $100, 200, \ldots, 2,000$, and for each simulation we applied the six methods compared previously. 

\ref{fig:null_power} presents the estimated power under the null (i.e., the FWER) as a function of the sample size for all methods. Except for FES and Hoeffding's $D$, all other methods are based on simulation---standard Monte Carlo for the classical Fisher's exact test on $R\times C$ tables and permutation for all other methods. Indeed we see that while the three-stage \v{S}id\'{a}k's correction controls the FWER at 5\%, FES with exact adjustment tends to be conservative like Fisher's exact test. The extent of conservativeness is not large after the mid-$p$ value correction, with the estimated FWER typically around 4\% across sample sizes. 
\begin{figure}[th]
\begin{center}
\includegraphics[clip=TRUE, trim=0 0 0 8mm, width=26em]{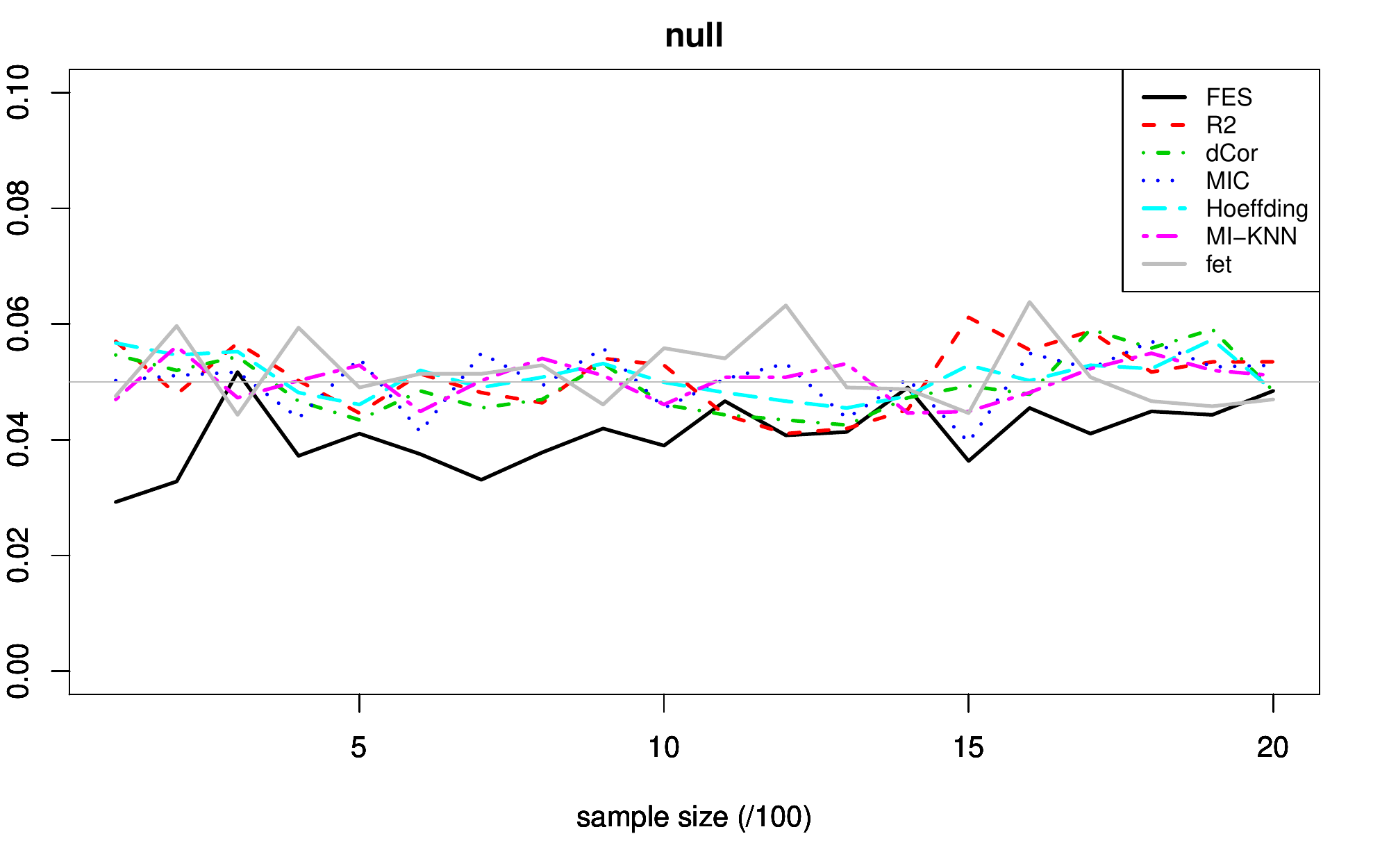}
\caption{Estimated FWER versus sample size for seven methods. The horizontal line marks 5\%---the level at which each method is aimed to control the FWER.}
\label{fig:null_power}
\end{center}
\vspace{-1em}
\end{figure}

In Supplementary Materials~S2, we present the results of two additional simulation studies. One is for comparing the seven methods under the six dependency scenarios at 20 different sample sizes, and the other investigates the sensitivity of FES to the choice of the resolution parameters $k_1$, $k_2$, and $M$.

\begin{figure}[p]
\begin{center}
\includegraphics[width=35em]{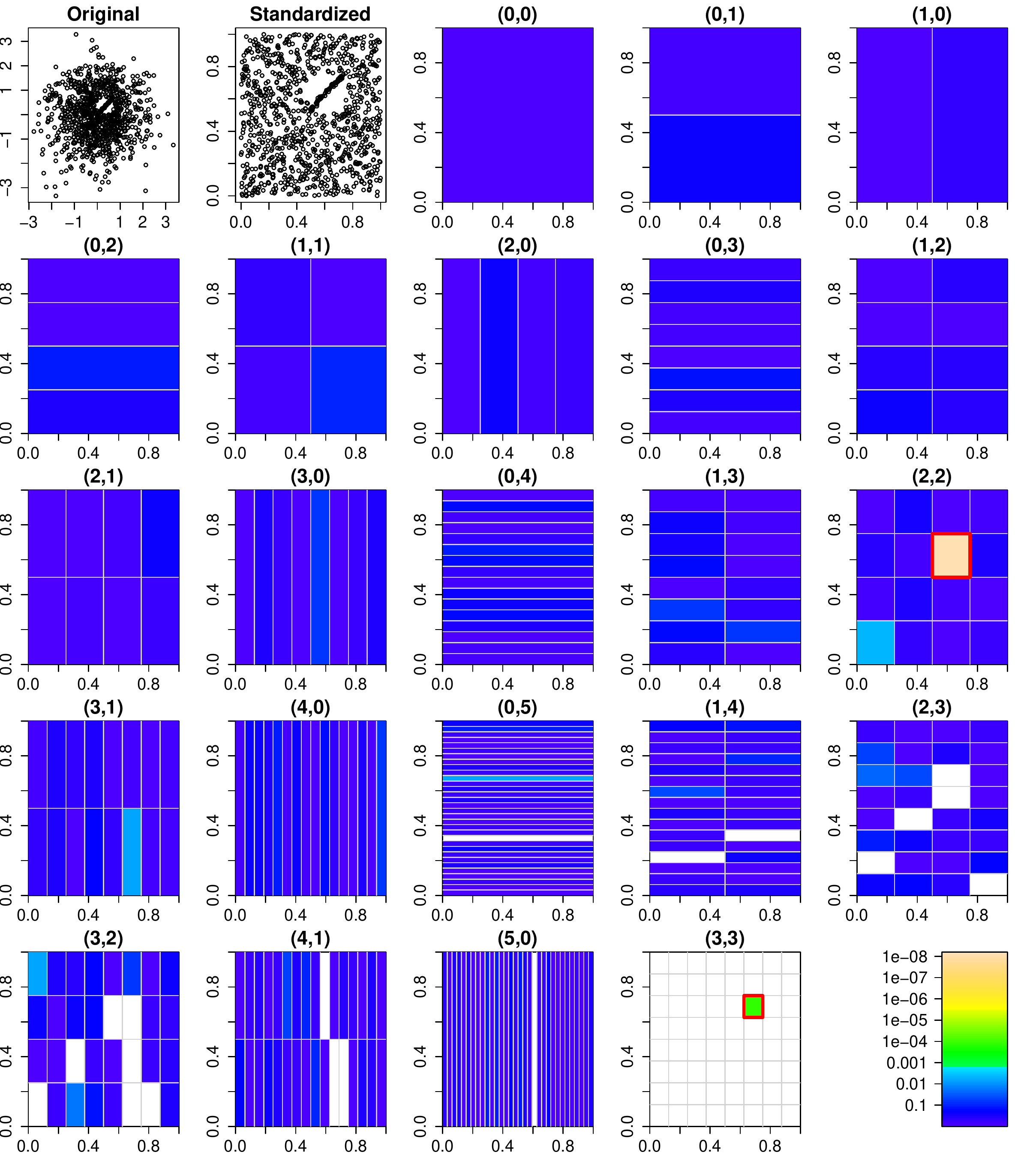}
\caption{Nominal $p$-values for the Fisher's exact test on each window scanned under FES on a sample drawn from the ``local'' scenario. The first two plots in the first row show the original data as well as the transformed data after empirical CDF transformation to the two margins. The last plot in the last row shows the color scale for the $p$-values. The other plots show the $p$-values for each stratum $\A^{i,j}$---with $(i,j)$ marked on top of each plot---that has at least one window passing the screening. White windows are those that have failed screening and so no $p$-values have been computed on them. The windows with red boundaries are deemed significant at the 5\% level---they have $p$-values smaller than the corrected 5\% threshold, i.e., $\alpha(i,j)$---under the three-stage \v{S}id\'{a}k's correction. In this scan, we have adopted $k_1=k_2=M+1=7$. The screening rule is that each column and row must have at least 10 observations with the whole window containing at least 25 data points.}
\label{fig:identifying_dependency}
\end{center}
\vspace{-1em}
\end{figure}

\vspace{-0.5em}

\subsection{Identifying local dependency}
Among the aforementioned methods, FES enjoys a unique ability to identify the structure of the underlying dependency, especially when it is local. As an illustration, we apply FES to a simulated sample under the ``local'' scenario. \ref{fig:identifying_dependency} presents the $p$-values from all scanning windows that have passed screening. (As we apply FES after transforming the marginals using the empirical CDFs, the windows are on the scale of the empirical quantiles. Thus we also present the standardized data in the figure.) Two windows, one in $\A^{2,2}$ and the other in $\A^{3,3}$ have $p$-values that are less than the respective \v{S}id\'{a}k adjusted critical threshold $\alpha(2,2)$ and $\alpha(3,3)$ and hence are identified as significant. These two windows indeed cover the actual portion of the sample space where the local dependency exists. They are marked using red boundaries.  A number of other windows, though having nominal Fisher's exact $p$-value less than 1\%, are not deemed significant after multiple testing adjustment. 
\vspace{-0.5em}

\subsection{Computational scalability}
The next comparison we make is in computational efficiency, and in particular the ability to handle large data sets. Huge sample sizes are commonplace in modern applications and are necessary for identifying weak or local dependencies. It is thus of interest to see how FES and the existing methods scale with the sample size in terms of computational demands. As such, we report the typical CPU time of a single run of each method {\em without permutation} as a function of sample size in \ref{fig:computing_time}. 

\begin{figure}[!ht]
\begin{center}
\includegraphics[width=38em]{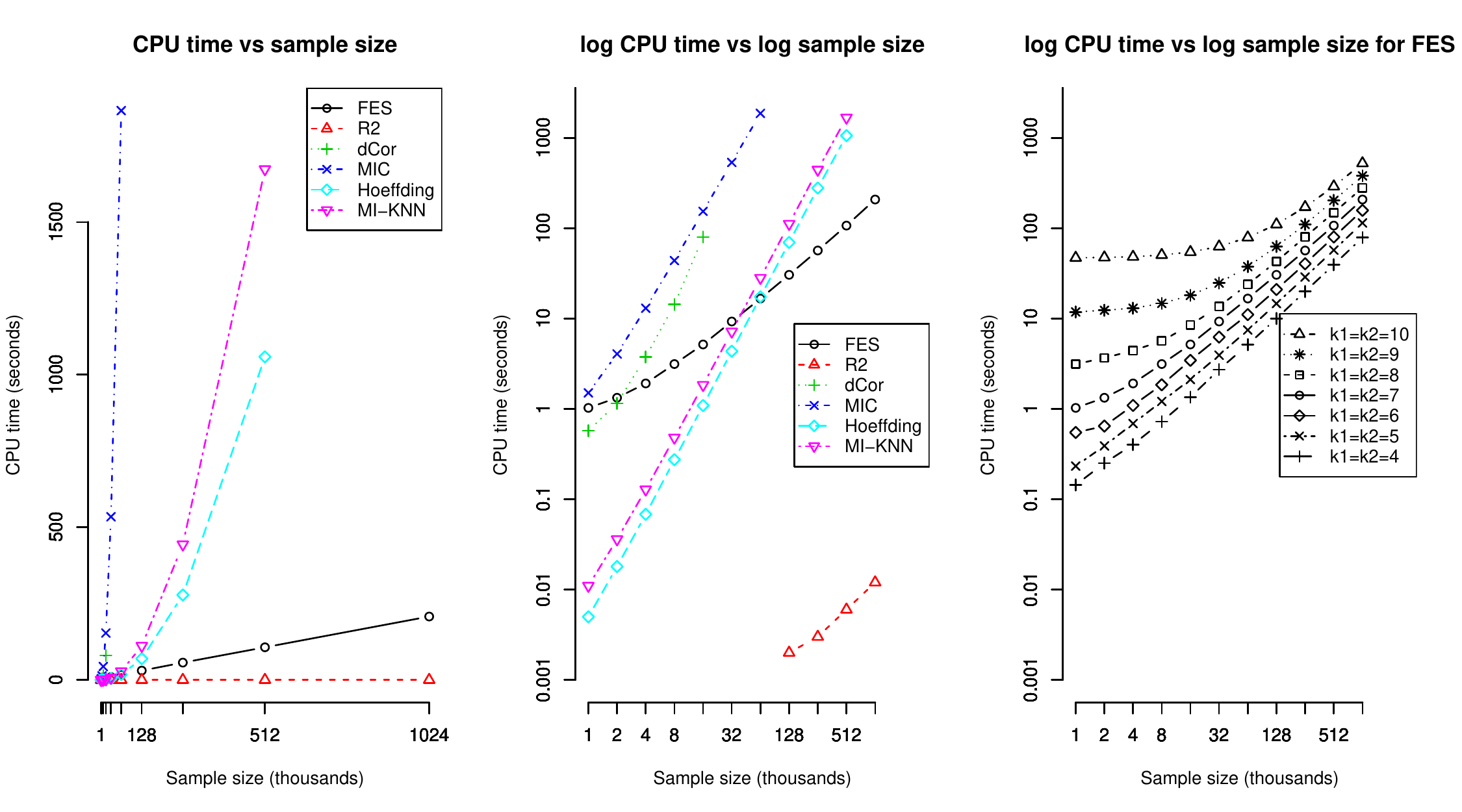}
\caption{Computing time (in seconds) of a single run without permutation vs sample size (in thousands). Left: Six methods in original scale, with $k_1=k_2=7$ for FES. Middle: Six methods in log-log scale, with $k_1=k_2=7$ for FES. Right: FES for three different choices of $k_1$ and $k_2$---from 4 to 10---in log-log scale. In the left and middle plots, FES is configured with $k_1=k_2=7$. In all runs, $M+1=\lfloor \log_2(n/16) \rfloor$ for FES. The sample sizes are $2^i\times 1000$ for $i=0,1,2,\ldots,10$. Each method is measured up to $i=10$, i.e., a sample size of $1.024$ million, or the maximum sample size for which computing is under 2,000 seconds, whichever is smaller, except for dCor, which is evaluated up to the sample size of 16,000 as the larger samples require more RAM than is available (32 Gbs) on our desktop.}
\label{fig:computing_time}
\end{center}
\vspace{-0.5em}
\end{figure}

In particular, 
FES and Pearson's correlation scale approximately linearly with the sample size, MIC, Hoeffding's $D$, and MI-KNN scale quadratically with sample size. The dCor also has quadratic complexity though it is hard to see from our figure. (Though the {\tt R} implementation of dCor, which we adopt here, has quadratic complexity in sample size, we note that there is a recent work \citep{huo:2016} proposes a new algorithm for computing dCor that has computational complexity $O(n\log n)$ in sample size.)

To see how the computation under FES scale with the marginal maximum resolution $k_1$ and $k_2$, we repeat the analysis for $k_1=k_2=4,5,\ldots,10$ respectively, and report the computing time in the right panel of \ref{fig:computing_time}. Different $(k_1,k_2)$ values affect the constant factor in the complexity but not the linear complexity itself, and the constant approximately grows by a factor of about $1.4$ for each simultaneous unit increment in $k_1$ and $k_2$. We note that while the actual scanning in FES for any fixed resolution specification is $O(n)$ in sample size, the optional preprocessing rank transform step recommended in FES has complexity $O(n\log n)$, although for the investigated sample sizes the computing time is dominated by the scanning in FES.

Again, we note that for FES, due to the exact inference recipe, a single run is sufficient, whereas for the methods compared here except Hoeffding's $D$, permutation is needed to properly control FWER. This makes FES even more attractive for data sets of massive sample sizes. 

While the comparison focuses on CPU time, FES uses constant memory for each combination of $k_1$, $k_2$, and $M$, regardless of the sample size (aside from the memory that is required for storing the data), and so RAM is not a concern in applying FES to massive data sets. The above simulations required less than 500 Mbs of RAM for FES.

\section{Application to the American Gut microbiome data}
\label{sec:data_exam}

The human microbiome is the community of numerous microbes that inhabit the human body. Understanding the microbiome can provide insights into various aspects of human health. Microbiome data is often presented in the form of OTU (Operational Taxonomic Unit, which could be viewed as pragmatic proxies for "species") tables, which consist of counts of various OTUs in a number of microbiome samples. A common task in analyzing microbiome data is evaluating pairwise dependency in OTU relative abundance \citep{Mandal:2015aa, reshef:2011}. (The relative abundance of an OTU in each sample is the proportion of counts among all counts for that sample.)

We apply FES to detect statistically significant dependency in relative abundance among OTU pairs in a data set from the American Gut Project \citep{mcdonald:2015}. The project collects fecal, oral, skin, and other body site microbiome samples from a large number of participants. 
The OTU table being analyzed comes from the July 29, 2016 version of the fecal data which contains the counts of $27774$ OTUs.
The data are freely available to the public. Although the total number of OTUs in a typical sample is huge, the OTU table is very sparse---with most OTUs having essentially no counts from all but a very small number of samples. In this illustration, we analyze the $100$ OTUs with the largest overall counts across samples, use the samples with at most $15$ zero counts in the top $100$ OTUs ($n=514$). These top $100$ OTUs contain about $2/3$ of the total counts in the OTU table. 

Instead of simply presenting a list of most significantly dependent OTU pairs, we investigate how FES behaves in relation to commonly adopted metrics for measuring dependency. Our motivation is simple---if popular metrics such as MIC and dCor give a numeric score that quantifies the extent of dependency without directly providing an evaluation on the statistical significance (not without resampling), then it would be interest to see whether the $p$-values produced from FES give roughly consistent ranking of the OTU pairwise dependency with respect to the dependency metric. (The consistency cannot be perfect as our power study shows.) If this is the case, then one can in fact use FES in combination with the corresponding metric---with the latter giving an overall summary of the extent of dependency and FES providing a quick, resampling-free evaluation on the statistical significance.

To this end, we consider four popular dependency metrics---MIC, dCor, KNN-MI, $R^2$. We rank all OTU pairs in terms of the FES $p$-value (with three-stage \v{S}id\'{a}k's correction),
and plot that ranking versus each of the four metrics (\ref{fig:compare_plot_fes_only}). A strict monotone decreasing pattern will correspond to a perfect consistency between FES and the metric. 

\begin{figure}[!t]
\begin{center}
\includegraphics[width=40em]{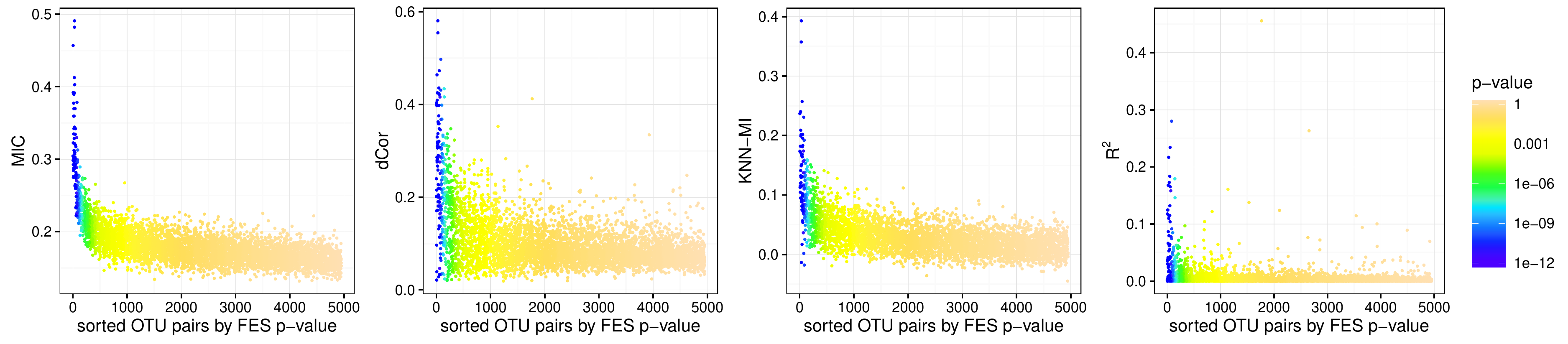}
\caption{FES $p$-value ranking versus four dependency metrics. In each subplot, OTU pairs are sorted and colored by the FES p-value. The $x$-axis shows the ranking of each OTU pair, the $y$-axis shows one of the four dependency measures (MIC, dCor, KNN-MI, $R^2$). The dashed vertical lines mark the nominal $5\%$ $p$-value cutoffs.}
\label{fig:compare_plot_fes_only}
\end{center}
\end{figure}

\begin{figure}[!h]
\begin{center}
\includegraphics[width=36em]{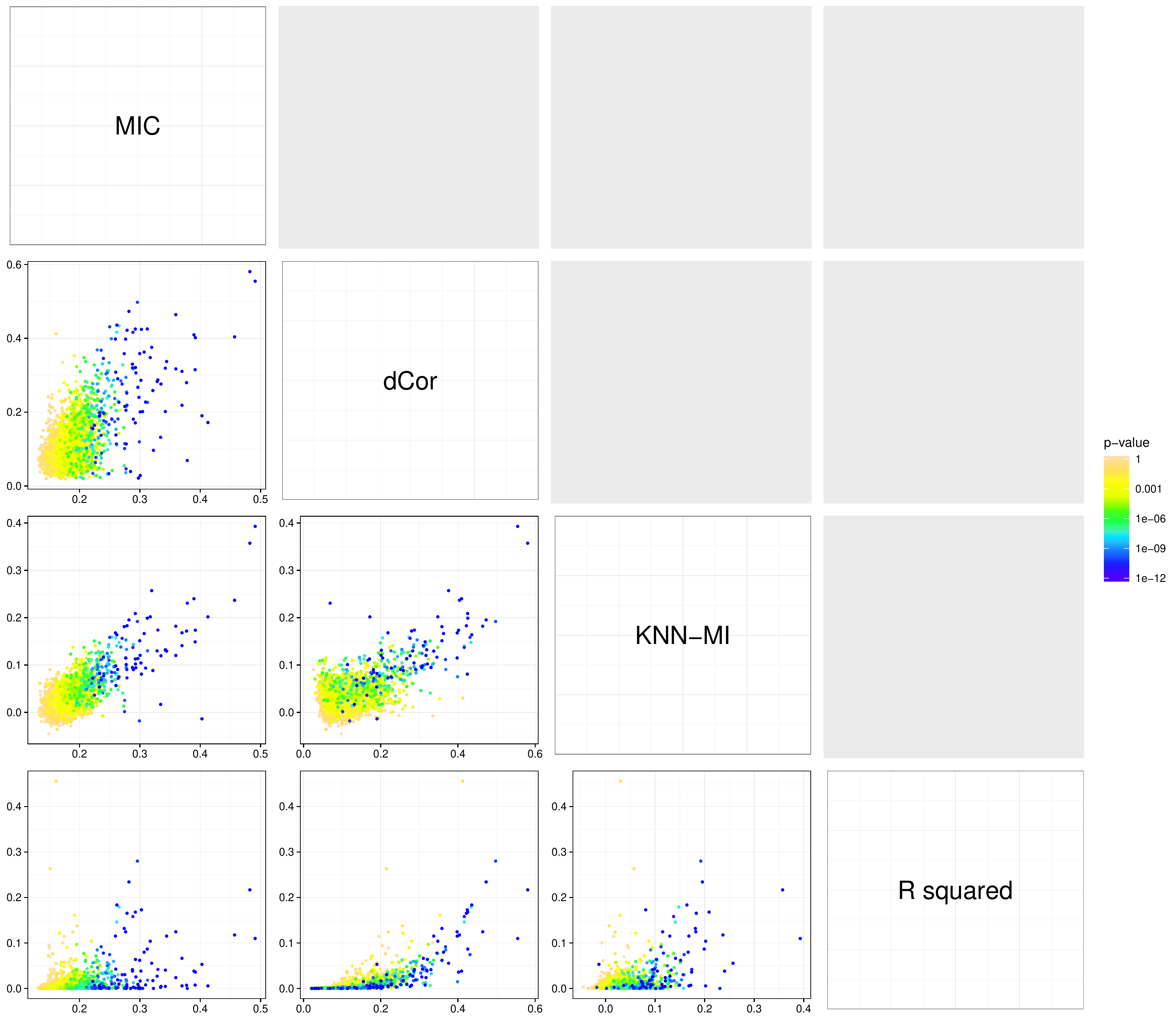}
\caption{Pairwise comparison for four dependency metrics with color coded by FES p-value.}
\label{fig:matrix_plot_lower}
\end{center}
\vspace{-1em}
\end{figure}

\ref{fig:compare_plot_fes_only} shows that the statistical significance ranking rendered by FES is most consistent with the two information theoretic metrics MIC and KNN-MI. Also, note that most pairs that have very small FES $p$-values have Pearson's correlation closed to zero, indicating that dependency among OTU pairs is generally non-linear. 

To investigate how the four dependency metrics are consistent among each other, we plot every metric against every other with the color again determined by the FES $p$-value (\ref{fig:matrix_plot_lower}). The two information theoretic metrics, MIC and KNN-MI, show the most consistent pattern with each other. On the other hand, dCor appears to show stronger consistency with KNN-MI than with MIC. The consistency between Pearson's correlation with each of the other metrics is weak, suggesting that each metric is capable to characterizing non-linear dependency in their own ways.

\begin{figure}[!th]
\begin{center}
\includegraphics[width=35em]{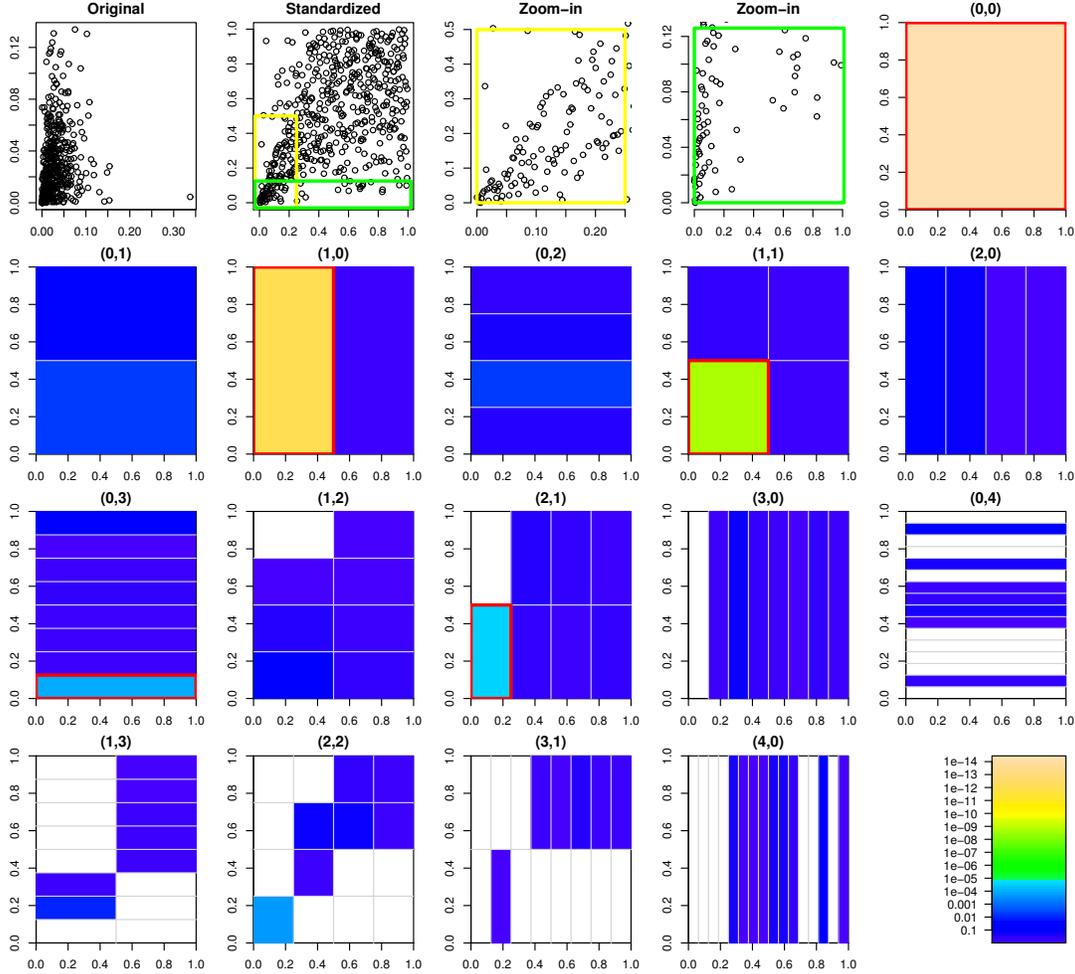}
\caption{Nominal $p$-values for the Fisher's exact test on each window scanned under FES on a certain pair of OTU samples. The first two plots in the first row show the original data as well as the transformed data after empirical CDF transformation to the two margins. The third and the fourth plot in the first row show the zoom-in scatter plot of the transformed samples corresponding to two windows in resolution level 3 that are deemed significant at the $5\%$ level by FES after the three-stage \v{S}id\'{a}k's correction. The other plots show the $p$-values for each stratum with the red rectangles indicating the significant windows. The overall p-value under the three-stage \v{S}id\'{a}k's correction is  $1.465\times 10^{-14}$. }
\label{fig:local}
\end{center}
\end{figure}

There are many indications of local dependency---which we defined empirically as the case when particularly strong evidence of dependency is localized in a subset of the observations---among the OTUs. 
We illustrate this in an example of OTU pairs (OTU 4481131 and OTU 4478125) in which there is strong evidence of dependency among individuals with low levels of abundance in both of these two bacteria, while among those individuals with high abundance in one or both OTUs, such dependency disappears. \ref{fig:local} shows the FES scanning plots for this pair of OTUs, and we note those highly significant windows in the lower left corner of the sample space. One possible explanation for the local dependency is that these two OTUs are functionally highly similar, and therefore responds similarly to the environmental/dietary conditions in which each can grow. As such, individuals with very low levels of one OTU tend to have low levels of the other as well. On the other hand, when the environment or dietary conditions are conducive to growth for these two species, they function largely independently and therefore do not display dependency in cases where one or both are highly abundant. This conjecture is supported through verifying their evolutionary relationship on the phylogenetic tree. It turns out that they have the same taxonomic ranks, and belong to the same subspecies \textit{prausnitzii} under the genus of {\it Faecalibacterium}.

\section{Concluding remarks}
\label{sec:discussion}

We have introduced Fisher exact scanning as a method for testing and identifying dependency conditional on the marginals of the observations in a fashion that generalizes Fisher's exact test on $2\times 2$ tables. We have showed that FES performs well in a variety of non-linear dependency settings, and is particularly powerful for identifying local dependencies. Moreover, its simple statistical properties allow exact inference without resorting to resampling. This, along with its linear computational scalability, makes it a desirable method for handling data sets with lots of observations. Moreover, not only does it allow the test of the null hypothesis of independence, but the identification of the nature of the dependency as well. 

We have mainly concerned ourselves in this work regarding evaluating statistical significance in testing dependence. In practice, one is often also interested in evaluating the scientific significance though measuring the extent of dependency. To this end, one can also report the empirical odds-ratios or a confidence interval for the OR on the windows identified as significant under FES. Alternatively, one can apply FES in conjunction with popular metrics of dependency as we investigated in \ref{sec:data_exam}. In this regard, our numerical results suggest that FES provides decent ranking for information theoretic metrics such as MIC and KNN-MI. Thus one could use FES to identify {\em statistically} significant variable pairs and use MIC or KNN-MI as a numeric score for measuring the dependency relationship.  This avoids the computational burden in resampling-based significance evaluation for big data.

Finally, we believe that the factorization of the MHG likelihood on $R\times C$ contingency tables into a product of HG likelihoods has further applications beyond what is exploited here, and is worth further investigation on its own right. 

\section*{Software}
For MIC, we use the {\tt MINE} application downloaded from {\tt http://www.exploredata.net/}. For KNN-MI, we use the {\tt mutinfo} function in the {\tt R} package {\tt FNN}. For dCor, we use the {\tt dcor} function in the {\tt R} package {\tt energy}. For Hoeffding's $D$, we use the {\tt hoeffd} function in the {\tt R} package {\tt Hmisc}. For the classical Fisher exact test for $R\times C$ tables, we use {\tt fisher.test} in {\tt R}. Our {\tt R} package {\tt FES} is freely available at {\tt https://github.com/MaStatLab/FES}.

\section*{Acknowledgment}
This research is partly supported by NSF grant DMS-1612889 and a Google Faculty Research Award. 

\bibliography{FES}

\newpage

\beginsupplement
\setcounter{page}{1}

\section*{Supplementary Materials}
\subsection*{S1.~Proofs}
\begin{proof}[Proof of Lemma~\ref{lem:equiv_k_indep}]
The necessity of the condition when $X\independentk Y$ follows immediately from Definition~\ref{defn:k1k2_indep}. The sufficiency follows because for any $l_1,l_2$,
\begin{align*}
F(I^{k_1}_{l_1}\times I^{k_2}_{l_2}) &= F(I^{k_1}_{l_1}\times I^{k_2}_{l_2})\sum_{l_1'}\sum_{l_2'} F(I^{k_1}_{l_1'}\times I^{k_2}_{l_2'})\\
&=\sum_{l_1'} F(I^{k_1}_{l_1'}\times I^{k_2}_{l_2}) \cdot \sum_{l_2'}  F(I^{k_1}_{l_1}\times I^{k_2}_{l_2'})=F_{X}(I^{k_1}_{l_1})F_{Y}(I^{k_2}_{l_2}).
\end{align*}
This completes the proof.
\end{proof}

\begin{proof}[Proof of Lemma~\ref{lem:k_independence}]
Without loss of generality, we just need to show that $(k_1,k_2)$-independence implies $(k_1-1,k_2)$-independence for $k_1\geq 1$. This follows immediately from the definition of $(k_1,k_2)$-independence and the fact that any $I^{k_1-1}_{l_1} = I^{k_1}_{2l_1-1}\cup I^{k_1}_{2l_1}$, and so
\begin{align*}
F(I^{k_1-1}_{l_1}\times I^{k_2}_{l_2}) &= F(I^{k_1}_{2l_1-1}\times I^{k_2}_{l_2}) + F(I^{k_1}_{2l_1}\times I^{k_2}_{l_2})= F_{X}(I^{k_1}_{2l_1-1}) F_{Y}(I^{k_2}_{l_2}) + F_{X}(I^{k_1}_{2l_1}) F_{Y}(I^{k_2}_{l_2})\\
&= F_{X}(I^{k_1-1}_{l_1}) F_{Y}(I^{k_2}_{l_2}).
\end{align*}
This completes the proof.
\end{proof}

\begin{proof}[Proof of Theorem~\ref{thm:independent_k}]
First, the fact that independence implies $(k_1,k_2)$-independence for all $k_1$ and $k_2$ follows immediately from the definition of the latter. To see the reverse, let $\mathcal{I}=\bigcup_{k=0}^{\infty}\mathcal{I}^k.$ For $A\in\mathcal{B}([0,1])$, let $[X\in A]=\{\omega: X(\omega)\in A\}$. We have $\sigma(\mathcal{I})=\mathcal{B}([0,1])$, where $\mathcal{B}([0,1])$ denotes the Borel $\sigma$-algebra on $[0,1]$. Let $\sigma(X)=\{[X\in A], A\in\mathcal{B}([0,1]) \}$, $\mathcal{C}_X=\{[X\in B], B\in\mathcal{I}\}$. We claim that $\sigma(\mathcal{C}_X)=\sigma(X)$. To see this, note that
$$
\sigma(\mathcal{C}_X)=\sigma(X^{-1}(B), B\in\mathcal{I})
=\sigma(X^{-1}(\mathcal{I}))
=X^{-1}(\sigma(\mathcal{I}))
=\sigma(X).
$$
Similarly, we could define $\sigma(Y), \mathcal{C}_Y$ and have $\sigma(\mathcal{C}_Y)=\sigma(Y)$. Following the definition of nested dyadic partition, we know that $\mathcal{C}_X, \mathcal{C}_Y$ are $\pi$-systems (with the empty set included).  Since $\mathcal{C}_X, \mathcal{C}_Y$ are independent classes, based on the Basic Criterion in \citep{resnick:1999} (page 92) we have that $\sigma(\mathcal{C}_X)$ and $\sigma(\mathcal{C}_Y)$ are independent $\sigma$-fields. Therefore, $\sigma(X)$ and $\sigma(Y)$ are independent, and thus $X$ and $Y$ are independent.
\end{proof}

\begin{proof}[Proof of Theorem~\ref{thm:independent_k_or}]
First, suppose $X\independentk Y$. 
For any $A\in \A^{k_1-1,k_2-1}$, $A_{00},A_{01},A_{10},A_{11}\in \A^{k_1,k_2}$ and so $\theta(A)=0$ by the definition of $(k_1,k_2)$-independence. Lemma~\ref{lem:k_independence} implies that $X$ and $Y$ are $(k_1',k_2')$-independent for all $0\leq k_1'\leq k_1-1$ and $0 \leq k_2' \leq k_2-1$, and so by the above reasoning for all such $(k_1',k_2')$, $\theta(A)=0$ for $A\in \A^{k_1'-1,k_2'-1}$. This proves that 
\[
X\independentk Y \quad \Rightarrow \quad \theta(A)=0 \quad \text{for all $A\in \A^{(k_1-1,k_2-1)}$}.
\]

To see the reverse, we first state and prove two propositions.

\begin{prop}
For a $2\times 4$ contingency table with cell probabilities $\pi_{11}=A, \pi_{12}=B,\cdots, \pi_{24}=H$ (see \ref{app_tab_1}), the following two sets of conditions are equivalent:
\begin{equation*}
\begin{aligned}
\mathcal{P}_1 &= \big\{AF=BE, CH=DG, BG=CF\big\} \\
\mathcal{P}_2 &= \big\{AF=BE, CH=DG, (A+B)(G+H)=(E+F)(C+D)\big\}
\end{aligned}
\end{equation*}
\label{prop:2by4table}
\end{prop}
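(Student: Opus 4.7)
The plan is to rewrite each odds-ratio equality as a statement that two column-wise ratios coincide, which reduces the proposition to elementary algebra. The three equalities in $\mathcal{P}_1$ become $A/E = B/F$, $B/F = C/G$, and $C/G = D/H$, which chained together are equivalent to the existence of a single common ratio $r$ with $A = rE$, $B = rF$, $C = rG$, $D = rH$. The aggregated condition in $\mathcal{P}_2$ likewise says that the ratios obtained by pooling columns 1--2 and columns 3--4 agree: $(A+B)/(E+F) = (C+D)/(G+H)$.

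For the forward direction $\mathcal{P}_1 \Rightarrow \mathcal{P}_2$, I would introduce the common ratio $r$ described above and sum: $A+B = r(E+F)$ and $C+D = r(G+H)$, so both sides of the aggregated identity evaluate to $r(E+F)(G+H)$. For the reverse direction, set $s := A/E = B/F$ and $t := C/G = D/H$, which are well defined under the first two equalities of $\mathcal{P}_2$. Substituting into the aggregated identity yields $s(E+F)(G+H) = t(E+F)(G+H)$, so $s = t$, which is precisely $B/F = C/G$, i.e., $BG = CF$. This closes the equivalence in both directions.

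The only subtlety, and the main obstacle I would need to address, is that the ratio calculations implicitly divide by entries in the second row and by $(E+F)(G+H)$. I would therefore state the standing positivity assumption on the cell probabilities, which is natural in the context of Theorem~\ref{thm:likelihood_factorization} where these probabilities arise as conditional cell probabilities of a joint distribution under independence. A direct example with $E = F = 0$, $A, G, H > 0$, and $B = 0$ shows that $\mathcal{P}_1$ can hold while the aggregated identity in $\mathcal{P}_2$ fails, confirming that positivity cannot be dropped without additional qualification. Under positivity the algebraic chain above goes through cleanly, and the proof reduces to the two short calculations described.
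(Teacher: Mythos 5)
Your proof is correct and takes a recognizably different route from the paper's. The paper proves $\mathcal{P}_1 \Rightarrow \mathcal{P}_2$ by multiplying cross-product ratios pairwise (e.g., $\frac{AF}{BE}\cdot\frac{BG}{CF}=\frac{AG}{CE}$) to obtain $AG=CE$, $AH=DE$, $BH=DF$, and then summing these with $BG=CF$ to get the aggregated identity; it proves $\mathcal{P}_2 \Rightarrow \mathcal{P}_1$ by contradiction, showing that $BG>CF$ (resp.\ $BG<CF$) forces $AG$, $AH$, $BH$ to strictly exceed (resp.\ fall below) $CE$, $DE$, $DF$, violating $(A+B)(G+H)=(E+F)(C+D)$. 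You instead recast $\mathcal{P}_1$ as proportionality of the two rows (a single common column ratio $r$) and the first two conditions of $\mathcal{P}_2$ as defining column-pair ratios $s=A/E=B/F$ and $t=C/G=D/H$, after which both directions reduce to direct one-line cancellations with no trichotomy argument. The underlying algebra is the same, but your reverse direction is structurally cleaner, and your treatment of degeneracy is a genuine improvement on the paper: its proof silently divides by $BE$, $CF$, $DG$, and your example (first row $(A,0,0,0)$, second row $(0,0,G,H)$ with $A,G,H>0$) shows the equivalence as literally stated fails when cells vanish, so a positivity hypothesis is needed under either proof. One small correction to your framing: the proposition is invoked in the induction proving Theorem~\ref{thm:independent_k_or}, not Theorem~\ref{thm:likelihood_factorization}; there the positivity is implicitly covered because the hypothesis $\theta(A)=0$ only makes sense when the four cell probabilities of $A$ are positive, but making the assumption explicit, as you do, is the right call.
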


\begin{table}[h]
\begin{center}
  \begin{tabular}{ | c  | c | c | c |}
    \hline
    $A$ & $B$  & $C$ & $D$ \\  \hline
    $E$ & $F$  & $G$ & $H$\\  \hline
  \end{tabular}
\end{center}
\caption{$2\times 4$ contingency table.}
\label{app_tab_1}
\end{table}

To see that $\mathcal{P}_1 \Rightarrow \mathcal{P}_2$, we only need to show that under $\mathcal{P}_1$, $AG+AH+BG+BH=CE+CF+DE+DF.$ 
Since $\frac{AF}{BE}=\frac{BG}{CF}=1$, we have $\frac{AF}{BE}\times\frac{BG}{CF}=\frac{AG}{CE}=1.$ Similarly, $\frac{BH}{DF}=\frac{AH}{DE}=1.$ Therefore, $AG+AH+BG+BH=CE+CF+DE+DF.$ To see that $\mathcal{P}_2 \Rightarrow \mathcal{P}_1$, We only need to show that $\mathcal{P}_2$ implies $BG=CF$. Here we use proof by contradiction. If $BG>CF$, since $\frac{AF}{BE}\times\frac{BG}{CF}=\frac{AG}{CE}$ and $\frac{AF}{BE}=1$, we have $AG>CE$. Since $\frac{BG}{CF}\times\frac{CH}{DG}=\frac{BH}{DF}$ and $\frac{CH}{DG}=1$, we have $BH>DF$. Since $\frac{AF}{BE}\times\frac{BG}{CF}\times\frac{CH}{DG}=\frac{AH}{DE}$, $\frac{CH}{DG}=1$ and $\frac{AF}{BE}=1$, we have $AH>DE$. Therefore, $AG+AH+BG+BH>CE+CF+DE+DF$, contradiction! If $BG<CF$, similarly, we get a contradiction. This establishes Proposition~\ref{prop:2by4table}.

\begin{prop}
For an $I\times J$ contingency table for discrete random variables $\tilde{X}$ and $\tilde{Y}$ with cell probabilities $\pi_{i,j}$, we could define a set of $(I-1)(J-1)$ local odds ratios
$$
\beta_{i,j}=\frac{\pi_{i,j}\pi_{i+1,j+1}}{\pi_{i,j+1}\pi_{i+1,j}},\quad i=1,\ldots, I-1,\quad j=1,\ldots, J-1.
$$
Then $\tilde{X}\independent \tilde{Y} \Leftrightarrow \beta_{i,j}=1,$ for $i=1,\ldots, I-1$ and $j=1,\ldots, J-1$.
\label{prop:local_odds_ratio}
\end{prop}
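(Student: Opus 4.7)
The plan is to prove the two directions separately, with the forward direction being essentially a direct substitution and the reverse direction requiring a short bookkeeping argument to promote local (adjacent-cell) odds ratio constraints into a global rank-one factorization of the joint pmf.

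For the forward direction, I would simply substitute $\pi_{i,j}=\pi_{i,\cdot}\pi_{\cdot,j}$ (which holds under $\tilde X\independent \tilde Y$, where $\pi_{i,\cdot}=\sum_{j'}\pi_{i,j'}$ and $\pi_{\cdot,j}=\sum_{i'}\pi_{i',j}$) into the definition of $\beta_{i,j}$. The numerator becomes $\pi_{i,\cdot}\pi_{\cdot,j}\pi_{i+1,\cdot}\pi_{\cdot,j+1}$ and the denominator becomes $\pi_{i,\cdot}\pi_{\cdot,j+1}\pi_{i+1,\cdot}\pi_{\cdot,j}$, so $\beta_{i,j}=1$ term by term.

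For the reverse direction, I would rewrite $\beta_{i,j}=1$ as $\pi_{i,j}\pi_{i+1,j+1}=\pi_{i,j+1}\pi_{i+1,j}$, i.e.\ $\pi_{i,j}/\pi_{i+1,j}=\pi_{i,j+1}/\pi_{i+1,j+1}$, which says that the row-to-row ratio $a_i(j):=\pi_{i,j}/\pi_{i+1,j}$ is constant in $j$; denote the common value by $a_i$. Iterating down to row $I$ yields $\pi_{i,j}=\bigl(\prod_{k=i}^{I-1}a_k\bigr)\pi_{I,j}=:r_i\, c_j$, a rank-one factorization. Using the normalization $\sum_{i,j}\pi_{i,j}=1$, one gets $(\sum_i r_i)(\sum_j c_j)=1$, so $\pi_{i,\cdot}=r_i\sum_j c_j$ and $\pi_{\cdot,j}=c_j\sum_i r_i$, giving $\pi_{i,\cdot}\pi_{\cdot,j}=r_i c_j=\pi_{i,j}$. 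Hence $\tilde X\independent \tilde Y$.

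The main obstacle, such as it is, concerns cells with $\pi_{i,j}=0$, where the ratio $\pi_{i,j}/\pi_{i+1,j}$ is not literally well-defined. I would handle this by working with the multiplicative form $\pi_{i,j}\pi_{i+1,j+1}=\pi_{i,j+1}\pi_{i+1,j}$ (which is the natural interpretation of $\beta_{i,j}=1$ on the boundary of the simplex) and noting that a zero in any row (column) of the matrix $(\pi_{i,j})$ forces an entire row (column) to be zero by this relation; restricting to the submatrix of nonzero rows and columns, the argument above applies and the factorization extends trivially to the full table. Alternatively, since Proposition~\ref{prop:local_odds_ratio} will be used as a stepping stone to Theorem~\ref{thm:independent_k_or}, it is harmless to state and prove the proposition under the convention that $\beta_{i,j}=1$ means $\pi_{i,j}\pi_{i+1,j+1}=\pi_{i,j+1}\pi_{i+1,j}$, which sidesteps the boundary issue entirely.
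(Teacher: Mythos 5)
Your main argument, for tables with strictly positive cells, is correct and takes a genuinely different route from the paper's. The paper proves the reverse direction by telescoping: the odds ratio of an arbitrary (not necessarily adjacent) pair of rows and columns is expressed as the product
\[
\frac{\pi_{i,j}\pi_{i+k,j+l}}{\pi_{i,j+l}\pi_{i+k,j}}=\prod_{s=0}^{k-1}\prod_{t=0}^{l-1}\beta_{i+s,j+t},
\]
so that all global odds ratios equal one, and then invokes the summation argument from the proof of Lemma~\ref{lem:equiv_k_indep} to conclude independence. Your approach instead observes that $\beta_{i,j}=1$ makes the row ratio $\pi_{i,j}/\pi_{i+1,j}$ constant in $j$ and iterates this to an explicit rank-one factorization $\pi_{i,j}=r_ic_j$, closing with the normalization identity. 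This folds the telescoping step and the Lemma~\ref{lem:equiv_k_indep}-style argument into one computation and yields the product structure explicitly; both routes are valid and of comparable length in the positive case.

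Your treatment of zero cells, however, contains a genuine error, and in particular your closing suggestion---that it is ``harmless'' to read $\beta_{i,j}=1$ as the multiplicative identity $\pi_{i,j}\pi_{i+1,j+1}=\pi_{i,j+1}\pi_{i+1,j}$---makes the reverse implication false. Take the $2\times 3$ table
\[
(\pi_{i,j}) = \begin{pmatrix} 0.4 & 0 & 0.1 \\ 0.1 & 0 & 0.4 \end{pmatrix}.
\]
Both adjacent $2\times 2$ blocks contain the zero middle column, so both multiplicative identities hold as $0=0$; yet $\pi_{1,\cdot}=\pi_{\cdot,1}=1/2$ while $\pi_{1,1}=0.4\neq 0.25$, so $\tilde X$ and $\tilde Y$ are not independent. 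The failure is precisely in your restriction step: deleting a zero column makes columns $1$ and $3$ adjacent in the submatrix, and no original local constraint governs the newly adjacent block (its odds ratio above is $16$). The situation is even worse with an interior zero row: if row $2$ of a $3\times J$ table is identically zero, every multiplicative condition reads $0=0$ and rows $1$ and $3$ are left completely unconstrained. So even granting your claim that a zero forces an entire row or column to vanish, the argument does not ``extend trivially'' to the full table.

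The correct resolution is the opposite of your proposed convention: keep the literal ratio reading. If $\beta_{i,j}$ is well defined and equals $1$, then $\pi_{i,j}\pi_{i+1,j+1}=\pi_{i,j+1}\pi_{i+1,j}>0$, so all four cells of that block are strictly positive; and since every cell of an $I\times J$ table with $I,J\geq 2$ lies in at least one adjacent block, the hypothesis ``$\beta_{i,j}=1$ for all $i,j$'' already forces every $\pi_{i,j}>0$. Your rank-one argument then applies verbatim with no boundary case to handle, which is implicitly how the paper's statement must be read as well (under independence with a vanishing marginal, some $\beta_{i,j}$ are undefined, so the equivalence is tacitly a statement about positive tables).
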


To see this, first note that obviously $\tilde{X}\independent \tilde{Y}$ implies that all the local odds ratios equal one. For the reverse, note that for any two rows $i$ and $i+k$ and two columns $j$ and $j+l$, where $1\leq k\leq I-i$ and $1\leq l\leq J-j$, the corresponding odds ratio could be expressed as the product of a set if local odds ratios:
$$
\frac{\pi_{i,j}\pi_{i+k,j+l}}{\pi_{i,j+l}\pi_{i+k,j}}=\prod\limits^{k-1}_{s=0}\prod\limits^{l-1}_{t=0}\beta_{i+s,j+t}.
$$
Therefore, all the odds ratios are equal to $1$. By similar arguments to the proof of Lemma~\ref{lem:equiv_k_indep}, we know that $\tilde{X} \independent \tilde{Y}$. This establishes Proposition~\ref{prop:local_odds_ratio}.

Putting together the above two propositions, we see that $\mathcal{P}_1$ and $\mathcal{P}_2$ are two sets of conditions that guarantee the independence of a $2\times 4$ contingency table.

We now prove that $\theta(A)=0 \text{ for all } A\in \mathcal{A}^{(k_1-1, k_2-1)}$ implies $X \independentk Y$ by induction.

\begin{enumerate}[(i).]
\item For $k_1=1, k_2=1$, it is easy to check that the result holds.

\item Assume that the result holds for $k_1=n_1, k_2=n_2$, for $n_1\geq 1, n_2\geq 1$, that is, 
\begin{equation}
\begin{aligned}
X \independent_{\!\!\!n_1,n_2} Y \quad\Leftarrow\quad \theta(A)=0\quad \text{for all } A\in \mathcal{A}^{(n_1-1, n_2-1)},
\end{aligned}
\label{eq_1}
\end{equation}
we now prove that the result holds for $k_1=n_1, k_2=n_2+1$. 

For $s\in\{1,2,\ldots,2^{n_1}-1\}, t\in\{1,2,\ldots,2^{n_2}-1\}$, let
$$
\beta^{n_1,n_2}_{s,t}:=\frac{F(I^{n_1}_{s}\times I^{n_2}_{t})F(I^{n_1}_{s+1}\times I^{n_2}_{t+1})}{F(I^{n_1}_{s}\times I^{n_2}_{t+1})F(I^{n_1}_{s+1}\times I^{n_2}_{t})}.
$$
According to Proposition~\ref{prop:local_odds_ratio}, (\ref{eq_1}) is equivalent to 
\begin{equation}
\begin{aligned}
\beta^{n_1,n_2}_{s,t}=1\quad  \Leftarrow \quad\theta(A)=0 \quad \text{ for all } A\in \mathcal{A}^{(n_1-1, n_2-1)},
\end{aligned}
\label{eq_2}
\end{equation}
for $s\in\{1,2,\ldots,2^{n_1}-1\}, t\in\{1,2,\ldots,2^{n_2}-1\}$.


\item Again by Proposition~\ref{prop:local_odds_ratio}, we now need to show that
\begin{equation}
\begin{aligned}
\beta^{n_1,n_2+1}_{i,j}=1 \quad \Leftarrow \theta(A)=0  \text{ for all } A\in \mathcal{A}^{(n_1-1, n_2)},
\end{aligned}
\label{eq_3}
\end{equation}
for $i\in\{1,2,\ldots,2^{n_1}-1\}, j\in\{1,2,\ldots,2^{n_2+1}-1\}$. For each $j$, it falls into one of the two cases:

\begin{enumerate}


\item $j$ is odd. $I_j^{n_2+1}\cup I_{j+1}^{n_2+1}=I^{n_2}_{(j+1)/2}\in\mathcal{I}^{n_2}$.

Note that there is a one-to-one mapping between the set of local odds ratios in this case and $\mathcal{A}^{(n_1-1),n_2}$ ($:=\cup_{n^\prime_1\leq n_1-1, n_2^\prime=n_2}\mathcal{A}^{n^\prime_1,n_2^\prime}$):
$$
\beta^{n_1,n_2+1}_{i,j} \mapsto I^{n_1-1-m(i)}_{[(i/2^{m(i)})+1]/2}\times I^{n_2}_{(j+1)/2}\in\mathcal{A}^{(n_1-1),n_2}
$$
where $0\leq m(i)\leq n_1-1$ satisfies that $i/2^{m(i)}$ is odd.

\begin{enumerate}[(1).]
\item
If $m(i)=0$, we have $\beta^{n_1,n_2+1}_{i,j}=1$ by (\ref{eq_3}).
\item

Suppose that $\beta^{n_1,n_2+1}_{i,j}=1$ for all $i$ such that $m(i)\leq m$ and some $m\leq n_1-2$.

Then for $i$ such that $m(i)=m+1$, consider $I^{n_1}_{i^\prime}$, where $i-2^{m(i)}+1\leq i^\prime\leq i+2^{m(i)}-1$, it is obvious that $m(i^\prime)\leq m$. Therefore, we have $\beta^{n_1,n_2+1}_{i^\prime,j}=1$. Consider the $2^{m(i)}\times 2$ table formed by $I^{n_1}_{i^\prime}$ with $i-2^{m(i)}+1\leq i^\prime\leq i$ in the $X$ dimension as well as $I^{n_2+1}_{j}$ and $I^{n_2+1}_{j+1}$ in the $Y$ dimension. In this table, all the local odds ratios are $1$. Using Proposition~\ref{prop:local_odds_ratio}, we have 
\begin{equation}
\begin{aligned}
F(I^{n_1}_{i^\prime}\times I^{n_2+1}_{j})F(I^{n_1}_{i}\times I^{n_2+1}_{j+1})=F(I^{n_1}_{i^\prime}\times I^{n_2+1}_{j+1})F(I^{n_1}_{i}\times I^{n_2+1}_{j}).
\end{aligned}
\label{eq_4}
\end{equation}

Summing (\ref{eq_4}) over $i^\prime$, we have
$$
F(\cup_{i^\prime< i}I^{n_1}_{i^\prime}\times I^{n_2+1}_{j})F(I^{n_1}_{i}\times I^{n_2+1}_{j+1})=F(\cup_{i^\prime< i}I^{n_1}_{i^\prime}\times I^{n_2+1}_{j+1})F(I^{n_1}_{i}\times I^{n_2+1}_{j}).
$$
Similarly, applying the same argument to the $2^{m(i)}\times 2$ table formed by $I^{n_1}_{i^\prime}$ with $i+1\leq i^\prime\leq i+2^{m(i)}$ in the $X$ dimension as well as $I^{n_2+1}_{j}$ and $I^{n_2+1}_{j+1}$ in the $Y$ dimension, we have
$$
F(\cup_{i^\prime>i}I^{n_1}_{i^\prime}\times I^{n_2+1}_{j})F(I^{n_1}_{i}\times I^{n_2+1}_{j+1})=F(\cup_{i^\prime>i}I^{n_1}_{i^\prime}\times I^{n_2+1}_{j+1})F(I^{n_1}_{i}\times I^{n_2+1}_{j})
$$

Note that $\cup_{i^\prime=i-2^{m(i)}+1}^{i+2^{m(i)}} I^{n_1}_{i^\prime}\in \mathcal{I}^{n_1-1-m(i)}$ and because of (\ref{eq_3}), we have 
$$
F(\cup_{i^\prime\leq i}I^{n_1}_{i^\prime}\times I^{n_2+1}_{j})F(\cup_{i^\prime> i}I^{n_1}_{i^\prime}\times I^{n_2+1}_{j+1})=
F(\cup_{i^\prime\leq i}I^{n_1}_{i^\prime}\times I^{n_2+1}_{j+1})F(\cup_{i^\prime> i}I^{n_1}_{i^\prime}\times I^{n_2+1}_{j}).
$$
According to Proposition~\ref{prop:2by4table}, we have $\beta^{n_1,n_2+1}_{i,j}=1$.

\end{enumerate}


\item $j$ is even. $I_j^{n_2+1}\cup I_{j+1}^{n_2+1}\not\in\mathcal{I}^{n_2}$.

In this case,  $I_{j-1}^{n_2+1}\cup I_{j}^{n_2+1}\in\mathcal{I}^{n_2}$ and $I_{j+1}^{n_2+1}\cup I_{j+2}^{n_2+1}\in\mathcal{I}^{n_2}$. Since $j-1, j+1$ are odd, we have $\beta^{n_1,n_2+1}_{i,j-1}=\beta^{n_1,n_2+1}_{i, j+1 }=1$. On the other hand, according to the induction hypothesis (\ref{eq_2}), $\beta^{n_1,n_2}_{i,j/2}=1$. 
Therefore, using Proposition~\ref{prop:2by4table}, we have $\beta^{n_1,n_2+1}_{i,j}=1.$ (See \ref{app_tab_2} for illustration.)

\begin{table}[h]
\begin{center}
  \begin{tabular}{ | c  | c | c | c |}
    \hline
    $\pi_{i,j-1}$ & $\pi_{i,j}$  & $\pi_{i,j+1}$ & $\pi_{i,j+2}$ \\  \hline
    $\pi_{i+1,j-1}$ & $\pi_{i+1,j}$  & $\pi_{i+1,j+1}$ & $\pi_{i+1,j+2}$\\  \hline
  \end{tabular}
\end{center}
\caption{Case (b).}
\label{app_tab_2}
\end{table}

\end{enumerate}

\item By symmetry, the result holds for $k_1=n_1+1, k_2=n_2+1$ following similar arguments.
\end{enumerate}
\end{proof}

\begin{proof}[Proof of Theorem~\ref{thm:likelihood_factorization}]
For $i=0,1,\ldots,k_1$ and $j=0,1,2,\ldots,k_2$, let $\bn_{i,j}:=\{n(A):A\in \A^{i,j}\}$ be the corresponding $2^i\times 2^j$ contingency table for the $(i,j)$-stratum, and let $\F_{i,j}$ be the $\sigma$-algebra generated by $\bn_{i,j}$. For any sequence $i_1\leq i_2 \leq \cdots$ and $j_1 \leq j_2 \leq \cdots$, $\F_{i_1,j_1}\subset \F_{i_2,j_2}\subset \cdots$ form a filtration. 

The proof can be completed by induction. First, the factorization holds by definition for $k_1=k_2=1$. (Also, it holds trivially whenever $k_1=0$ or $k_2=0$, in which case $p(\bn_{k_1,k_2}\,|\,\bn_{k_1,0},\bn_{0,k_2})=1$.) Now without loss of generality suppose the inductive hypothesis holds for $k_1=i-1$ and $k_2=j$ where $i\geq 2$ and $j\geq 1$. Then for $k_1=i$ and $k_2=j$, we have
\begin{align*}
p(\bn_{i,j}\,|\,\bn_{i,0},\bn_{0,j}) = p(\bn_{i,j}\,|\,\bn_{i-1,j},\bn_{i,0},\bn_{0,j})\cdot p(\bn_{i-1,j}\,|\,\bn_{i,0},\bn_{0,j}).
\end{align*}

First, we claim that
\[
p(\bn_{i-1,j}\,|\,\bn_{i,0},\bn_{0,j})=p(\bn_{i-1,j}\,|\,\bn_{i-1,0},\bn_{0,j}),
\]
which can be seen from the following urn argument. Suppose there are $2^{i-1}$ different colors of balls in an urn and the total number of balls of each color in the urn is known. For balls of each color, we randomly assign them $2^{j}$ different labels with the total number of each label assigned also known. We can for example do that by starting with Label~1, and drawing balls without replacement from the urn and assign them Label~1 until the desired number of Label~1 has been assigned. Then we proceed in the same manner with Label~2 and so on and so forth. Now suppose after the assignments we are given the additional information that the balls are of two different sizes---some are large and others are small, and so there are a total of $2^{i-1}\times 2$ different types of balls in the urn and we know the total number of balls of each size within each color. Now, knowing the size does not affect the distribution of the label assignment for the balls as that information was not used in assigning the labels. From this argument, we see that the above equality holds. (In fact, even if the balls are of a variety of different sizes and we are informed of the exact size of each ball, the distribution of the number of each label-color combination is still the same, implying that in fact $p(\bn_{i-1,j}\,|\,\bn_{i',0},\bn_{0,j})=p(\bn_{i-1,j}\,|\,\bn_{i-1,0},\bn_{0,j})$ for all $i'\geq i$.)

On the other hand, because $\F_{0,j}\subset \F_{i-1,j}$, 
\[
p(\bn_{i,j}\,|\,\bn_{i-1,j},\bn_{i,0},\bn_{0,j}) = p(\bn_{i,j}\,|\,\bn_{i-1,j},\bn_{i,0}),
\]
but then by repeatedly applying the above urn argument, we have 
$p(\bn_{i,j'}\,|\,\bn_{i-1,j},\bn_{i,j'-1})=p(\bn_{i,j'}\,|\,\bn_{i-1,j'},\bn_{i,j'-1})$ for $j'=1,2,\ldots,j$. Therefore we have
\begin{align*}
p(\bn_{i,j}\,|\,\bn_{i-1,j},\bn_{i,0}) &= \prod_{j'=1}^{j} p(\bn_{i,j'}\,|\,\bn_{i-1,j'},\bn_{i,j'-1})\\
&=\prod_{j'=0}^{k_2-1}\prod_{A\in \A^{k_1-1,j'}}p(n(A_{00})\,|\,n(A_{0\cdot}),n(A_{\cdot 0}),n(A)).
\end{align*}
The last equality follows because for any $i,j\geq 0$, the $A$'s in $\A^{i,j}$ are non-overlapping and so $n(A_{00})$ are mutually independent conditional on the corresponding row and column totals of~$A$.

Now by the inductive hypothesis, we have
\begin{align*}
p(\bn_{i-1,j}\,|\,\bn_{i-1,0},\bn_{0,j}) &= \prod_{i'=0}^{i-2}\prod_{j'=0}^{j-1}\prod_{A\in\A^{i',j'}} p(n(A_{00})\,|\,n(A_{0\cdot}),n(A_{\cdot 0}),n(A))\\
&= \prod_{i'=0}^{k_1-2}\prod_{j'=0}^{k_2-1}\prod_{A\in\A^{i',j'}} p(n(A_{00})\,|\,n(A_{0\cdot}),n(A_{\cdot 0}),n(A)).
\end{align*}
Putting the two pieces together, we get
\begin{align*}
&p(\bn_{k_1,k_2}\,|\,\bn_{k_1-1,k_2},\bn_{k_1,k_2-1}) \\
=&\prod_{j'=0}^{k_2-1}\prod_{A\in \A^{k_1-1,j'}}p(n(A_{00})\,|\,n(A_{0\cdot}),n(A_{\cdot 0}),n(A))\cdot \prod_{i'=0}^{k_1-2}\prod_{j'=0}^{k_2-1}\prod_{A\in\A^{i',j'}} p(n(A_{00})\,|\,n(A_{0\cdot}),n(A_{\cdot 0}),n(A))\\
=&\prod_{i'=0}^{k_1-1}\prod_{j'=0}^{k_2-1}\prod_{A\in\A^{i',j'}} p(n(A_{00})\,|\,n(A_{0\cdot}),n(A_{\cdot 0}),n(A)).
\end{align*}
By exactly the same argument, one can show that if the inductive hypothesis holds for $k_1=i\geq 1$ and $k_2=j-1\geq 1$, then it also holds for $k_1=i$ and $k_2=j$. 
Therefore, the inductive hypothesis holds for all $k_1$ and $k_2$. This completes the proof.
\end{proof}

\begin{proof}[Proof of Theorem~\ref{thm:fwer_control}]
Suppose we have a screening rule on each window $A$, denoted as a random variable $S(A)$, such that $S(A)=1$ if $A$ passes the screening and so a Fisher's test is applied on $A$, and if $S(A)=0$, then $A$ fails the screening and no test is carried out. The special case without screening will immediate follow by setting $S(A)\equiv 1$. Suppose for each $A\in\A^{i,j}$, $S(A)$ is measurable w.r.t.\ the $\sigma$-algebra generated by $\bn_{i+1,j}$ and $\bn_{i,j+1}$. Correspondingly, $L(i,j)$ is measurable w.r.t.\ that $\sigma$-algebra as well. Now,
\begin{align*}
{\rm P}(p_{overall} \leq \alpha\,|\,H_0,\bn_{k_1,0},\bn_{0,k_2}) &=  {\rm P}(\min _{r} p_{resol}(r) \leq 1-(1-\alpha)^{1/(M+1)}\,|\,H_0,\bn_{k_1,0},\bn_{0,k_2})\\
&= 1 - {\rm P}(p_{resol}(r) > 1 - (1-\alpha)^{1/(M+1)} \text{ for all $r$}\,|\,H_0,\bn_{k_1,0},\bn_{0,k_2}).
\end{align*}
By Theorem~\ref{thm:likelihood_factorization},
\begin{align*}
&{\rm P}(p_{resol}(r) > 1 - (1-\alpha)^{1/(M+1)} \text{ for all $r$}\,|\,H_0,\bn_{k_1,0},\bn_{0,k_2})\\
=&{\rm E}\left(\! \prod_{r:T(r)>0}\!\!\!\! {\rm P}\left(p_{resol}(r) \!>\! 1\!-\!(1-\alpha)^{\frac{1}{M+1}} | H_0,\bn_{k_1,0},\bn_{0,k_2},\{\bn_{i,j}:i+j = r+1\} \right)\Big | H_0,\bn_{k_1,0},\bn_{0,k_2}\!\!\right)\\
=& {\rm E}\left(  \prod_{r:T(r)>0} {\rm P}\left(p_{resol}(r) > 1-(1-\alpha)^{\frac{1}{M+1}}\,|\, H_0,\{\bn_{i,j}:i+j = r+1\}  \right) \,\Big |\, H_0,\bn_{k_1,0},\bn_{0,k_2}\right).
\end{align*}
Now for each $r$ such that $T(r)>0$,
\begin{align*}
&{\rm P}\left(p_{resol}(r) > 1-(1-\alpha)^{1/(M+1)}\,|\, H_0,\{\bn_{i,j}:i+j = r+1\}  \right)\\
=&{\rm P}\left(\min_{i,j:i+j=r,L(i,j)>0} p(i,j) > 1-(1-\alpha)^{1/(M+1)\cdot 1/T(r)}\,|\, H_0,\{\bn_{i',j'}:i'+j' = r+1\} \right)\\
=&\prod_{i,j:i+j=r,L(i,j)>0} {\rm P}\left(p(i,j) > 1-(1-\alpha)^{1/(M+1)\cdot 1/T(r)}\,|\, H_0,\bn_{i+1,j},\bn_{i,j+1} \right)\\
=&\prod_{i,j:i+j=r,L(i,j)>0}\prod_{A\in \A^{i,j}, S(A)=1} {\rm P}\left(p(A) > 1-(1-\alpha)^{1/(M+1)\cdot 1/T(r)\cdot 1/L(i,j)}\,|\, H_0,\bn_{i+1,j},\bn_{i,j+1}\right)\\
\geq& \prod_{i,j:i+j=r,L(i,j)>0}\prod_{A\in \A^{i,j}, S(A)=1} (1-\alpha)^{1/(M+1)\cdot 1/T(r)\cdot 1/L(i,j)}\\
=&(1-\alpha)^{1/(M+1)}.
\end{align*}
Hence 
\begin{align*}
{\rm P}(p_{overall} \leq \alpha\,|\,H_0,\bn_{k_1,0},\bn_{0,k_2}) &\leq {\rm E}\left(1 - (1-\alpha)^{1/(M+1)\cdot |\{r: T(r)>0\}|}\,|\,H_0,\bn_{k_1,0},\bn_{0,k_2}\right)\\ 
&\leq 1- (1-\alpha)^{1/(M+1)\cdot (M+1)}=\alpha.
\end{align*}
\end{proof}

\begin{proof}[Proof of Theorem~\ref{thm:local_consistency}]
Let us focus attention on one $A$ such that $F(A_{\cdot 0}),F(A_{\cdot 1}),F(A_{0\cdot}),F(A_{1\cdot})>0$. In the following,
let ${\rm E}_{\theta}[n(A_{00})\,|\,n(A_{0\cdot}),n(A_{\cdot 0}),n(A)]$ and ${\rm Var}_{\theta}[n(A_{00})\,|\,n(A_{0\cdot}),n(A_{\cdot 0}),n(A)]$ denote respectively the expectation and variance of $n(A_{00})$ given $n(A_{0\cdot}),n(A_{\cdot 0})$, and $n(A)$ when $\theta(A)=\theta$. Then, for any $A$ such that $F(A_{\cdot 0}),F(A_{\cdot 1}),F(A_{0\cdot}),F(A_{1\cdot})>0$, we have that 
${\rm Var}_{\theta}[n(A_{00})\,|\,n(A_{0\cdot}),n(A_{\cdot 0}),n(A)]\rightarrow\infty$
with $F^{\infty}$ probability 1, because with $F^{\infty}$ probability 1, $n(A_{0\cdot})n(A_{1\cdot})n(A_{\cdot 0})n(A_{\cdot 1})/n(A)^3\rightarrow \infty$ \citep{kou:1996}. Now by Theorem~2.2 in \cite{kou:1996}, we have that given $n(A_{0\cdot}),n(A_{\cdot 0}),n(A)$,
\[
Z_{n,\theta}(A) =\frac{n(A_{00})-{\rm E}_{\theta} [n(A_{00})\,|\,n(A_{0\cdot}),n(A_{\cdot 0}),n(A)]}{{\rm Var}^{1/2}_{\theta}[n(A_{00})\,|\,n(A_{0\cdot}),n(A_{\cdot 0}),n(A)]} \rightarrow_{\mathcal{L}} {\rm N}(0,1).
\]
Now,
\begin{align*}
&{\rm P}(p(A) < \alpha(A)\,|\,\theta(A)=\theta,\bn_{k_1,0},\bn_{0,k_2})\\ 
=&  {\rm P}\left(Z_{n,0}(A) < F_{A,n}^{-1}(\alpha(A)/2)\,|\,\theta(A)=\theta,\bn_{k_1,0},\bn_{0,k_2}\right)\\
&\hspace{7em} +{\rm P}\left(Z_{n,0}(A) > F_{A,n}^{-1}(1-\alpha(A)/2)\,|\,\theta(A)=\theta,\bn_{k_1,0},\bn_{0,k_2}\right)\\
\geq & {\rm P}\left(Z_{n,0}(A) > F_{A,n}^{-1}(1-\alpha(A)/2)\,|\,\theta(A)=\theta,\bn_{k_1,0},\bn_{0,k_2}\right)
\end{align*}
where $F_{A,n}$ denotes the exact cdf of $Z_n$ given the marginal totals with $\theta(A)=0$, i.e., the (central) hypergeometric distribution. 

Now, without loss of generality, let us assume that $\theta(A)=\theta >0$. 
By the normal approximation to the hypergeometric distribution we have that
\begin{align*}
&\lim_{n} {\rm P}\left(Z_{n,0}(A) > F_{A,n}^{-1}\left(1-\alpha(A)/2\right)\,|\,\theta(A)=\theta,\bn_{k_1,0},\bn_{0,k_2}\right)\\
=& \lim_{n} {\rm P}\left(Z_{n,\theta}(A) > C_{n}F_{A,n}^{-1}(1-\alpha(A)/2) - B_n C_n\,\Big|\,\theta(A)=\theta,\bn_{k_1,0},\bn_{0,k_2}\right)
\end{align*}
where 
\[
C_{n}=\frac{{\rm Var}^{1/2}_{0}[n(A_{00})\,|\,n(A_{0\cdot}),n(A_{\cdot 0}),n(A)]}{{\rm Var}^{1/2}_{\theta}[n(A_{00})\,|\,n(A_{0\cdot}),n(A_{\cdot 0}),n(A)]}\]
and
\[
B_{n}=\frac{{\rm E}_{\theta} [n(A_{00})\,|\,n(A_{0\cdot}),n(A_{\cdot 0}),n(A)] - {\rm E}_{0} [n(A_{00})\,|\,n(A_{0\cdot}),n(A_{\cdot 0}),n(A)]}{{\rm Var}^{1/2}_{0}[n(A_{00})\,|\,n(A_{0\cdot}),n(A_{\cdot 0}),n(A)]}. 
\]
Now $e^{-\theta/2} \leq C_n \leq e^{\theta/2}$ for all $n$ \cite[Corollary~2.1]{kou:1996}, while $B_n \asymp \sqrt{n}$ with $F^{\infty}$ probability 1. Accordingly, if $k_1$, $k_2$, and $M$ are fixed and thus $\alpha(A)$ is also fixed, i.e., not changing with $n$, then $F_{A,n}^{-1}(1-\alpha(A)/2)\rightarrow \Phi^{-1}(1-\alpha(A)/2)$ and thus $C_{n}F_{A,n}^{-1}(1-\alpha(A)/2) - B_n C_n\rightarrow -\infty$ with $F^{\infty}$ probability 1. Therefore, with $F^{\infty}$ probability 1,
\[
\lim_{n} {\rm P}(p(A) < \alpha(A)\,|\,\bn_{k_1,0},\bn_{0,k_2}) = 1.
\]
Now, in the case when $k_1$, $k_2$, and $M$ can depend on $n$, let $\alpha_n(A)$ be the corresponding window-specific threshold, which is $O(1/\log(n))$. To establish the consistency as above, since $C_n B_n=O(\sqrt{n})$ with $F^{\infty}$ probability 1, we just need to show that $C_n F_{A,n}^{-1}(1-\alpha(A)/2)$ is $o(\sqrt{n})$ with $F^{\infty}$ probability 1. To this end, note that by a Berry-Essen theorem for hypergeometric distributions \cite[Theorem 2.3]{kou:1996}, we have that with $F^{\infty}$ probability 1,
\[
|\Phi(F_{A,n}^{-1}(1-\alpha_n(A)/2) - (1-\alpha_n(A)/2)| < \gamma/\sqrt{n}
\]
for some positive constant $\gamma$ and all large enough $n$. Thus
\[
\Phi^{-1}(1-\alpha_n(A)/2-\gamma/\sqrt{n}) < F_{A,n}^{-1}(1-\alpha_n(A)/2) < \Phi^{-1}(1-\alpha_n(A)/2 + \gamma/\sqrt{n})
\]
Because $1-\Phi(x) \asymp e^{-x^2/2}/x$ as $x\rightarrow \infty$, for  
$\alpha_n(A)=O(1/\log(n))$, $|\Phi^{-1}(1-\alpha_n(A)/2-\gamma/\sqrt{n})-\Phi^{-1}(1-\alpha_n(A)/2)|\rightarrow 0$ and $|\Phi^{-1}(1-\alpha_n(A)/2 + \gamma/\sqrt{n})-\Phi^{-1}(1-\alpha_n(A)/2)|\rightarrow 0$. Hence,
\[
|F_{A,n}^{-1}(1-\alpha_n(A)/2)-\Phi^{-1}(1-\alpha_n(A)/2)| \rightarrow 0.
\]
On the other hand, because $1-\Phi(x) \asymp e^{-x^2/2}/x$ as $x\rightarrow \infty$, we have that $\Phi^{-1}(1-\alpha_n(A)/2)=o(\sqrt{n})$ for $\alpha_n(A)=O(1/\log(n))$. Putting the pieces together, we have with $F^{\infty}$ probability~1,
\[
C_n F_{A,n}^{-1}(1-\alpha(A)/2) = C_n\Phi(1-\alpha_n(A)/2)+ C_n\left(F_{A,n}^{-1}(1-\alpha_n(A)/2)-\Phi^{-1}(1-\alpha_n(A)/2)\right) = o(\sqrt{n}).
\]
This completes the proof.
\end{proof}

\begin{proof}[Proof of Theorem~\ref{thm:global_consistency}]
This theorem follows immediately from the previous one because $p_{overall} < \alpha$ when $p(A)<\alpha(A)$ on any $A$.
\end{proof}

\subsection*{S2.~Additional simulation results}
\subsubsection*{Power study with varying sample size}
In Section~\ref{sec:simulations}, we carried out simulation studies under six different dependency scenarios at fixed sample sizes and varying noise level. Here we carried a power study under the same six dependency scenarios but now with fixed noise levels and varying sample size. The simulation setup is exactly the same as before. We complete 10,000 simulations for each scenario at 20 different sample sizes. For all but the local scenario, the sample sizes range from 50 to 1,000 in increments of 50, while for the local scenario, the sample size ranges from 100 to 2,000 in increments of 100. For each scenario, the noise level is fixed at a particular level that makes the resulting power curve informative. Specifically, the local scenario, the noise level $l=10$ and for the other five scenarios the noise level $l=7$, where the noise level $l$ is defined as in \ref{tab:simulation_setting}. \ref{fig:power_varying_sample_size} presents the power curves of the seven methods.
\begin{figure}[p]
\begin{center}
\includegraphics[width=38em]{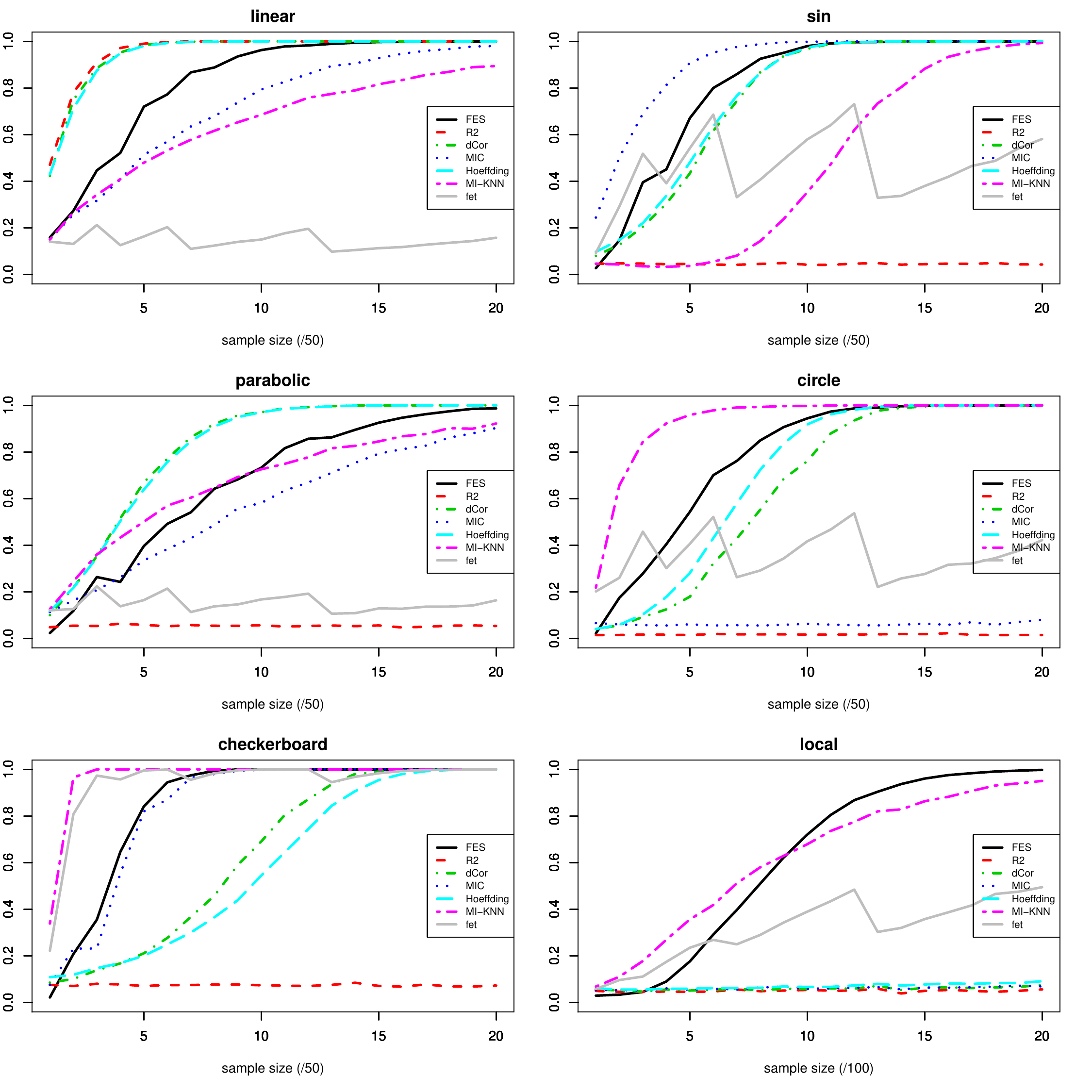}
\caption{Power under the six scenarios at 20 different sample sizes. Seven methods are compared: FES using three-stage exact \v{S}id\'{a}k's correction, Pearson's correlation ($R^2$), distance correlation (dCor), maximal information coefficient (MIC), Hoeffding's $D$ test, $k$-nearest neighbor based mutual information (MI-KNN with $k=10$), and the $R\times C$ Fisher exact test (fet). The significance thresholds for all methods except FES, Hoeffding's $D$, and the classical Fisher's test are computed through permutation. That for the classical Fisher's exact test is computed through standard Monte Carlo.}
\label{fig:power_varying_sample_size}
\end{center}
\vspace{-1em}
\end{figure}
While the results are mostly consistent with the power study with varying noise level and fixed sample size, we note two observations. First, at very small sample sizes, the discreteness of FES does result in a loss of power in comparison to the other methods. Second, a very interesting (and undesirable) feature of the classical $R\times C$ Fisher's exact test (fet) is that its power is not monotonically increasing in sample size, but can display an oscillating pattern. We believe this inconsistency in the performance might be due to the fact that the rejection region of this test is defined as all of the tables with the same marginal totals that have no larger multivariate hypergeometric pmf than the observed one, and hence this rejection region varies as new observations arrive, which alter the marginal totals, in a way not consistent with certain alternatives. We note that through dividing the multivariate hypergeometric into multiple univariate hypergeometric, FES avoids this difficulty. 

\subsubsection*{Sensitivity to choice of $k_1,k_2,M$}
Our next set of simulations investigate the effect of different choices of the resolution parameters $k_1$, $k_2$, $M$ on the power of FES. To this end, we repeat the same simulations as done in Section~\ref{sec:simulations}, but this time, we apply FES at three different choices resolution levels---(i) the recommended resolution level $k_1=k_2=M+1=\lfloor \log(n/10) \rfloor$; (ii) the one level coarse (``-1'') specification $k_1=k_2=M+1=\lfloor \log(n/10) \rfloor -1$; and (iii) the one level finer (``+1'') specification $k_1=k_2=M+1=\lfloor \log(n/10) \rfloor +1$. \ref{fig:power_resolution_perturb} presents the power of FES under the three different resolution choices.
\begin{figure}[p]
\begin{center}
\includegraphics[width=38em]{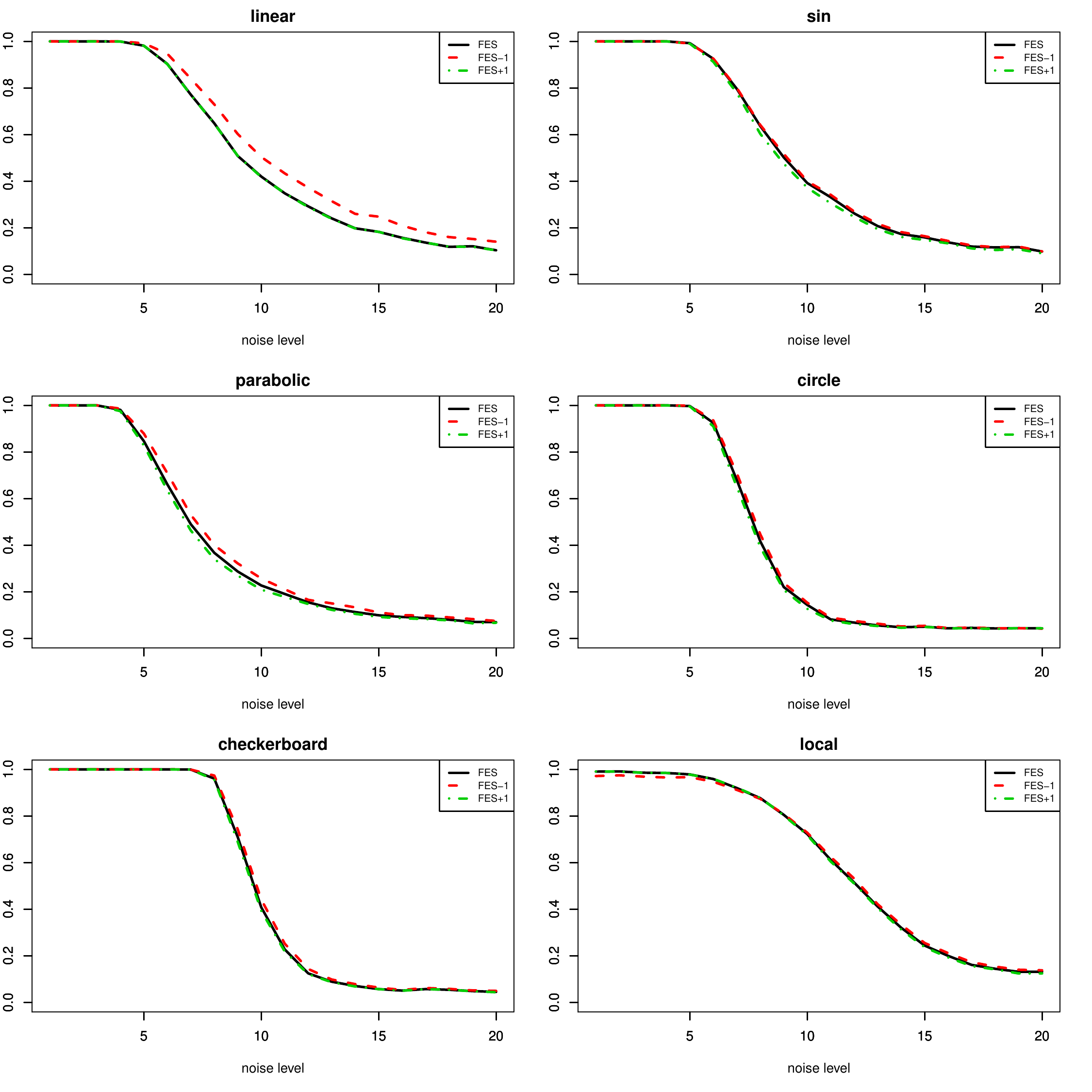}
\caption{Power of FES under the six scenarios at 20 noise levels at three resolution specifications---FES: $k_1=k_2=M+1=\lfloor \log(n/10) \rfloor$; FES$-1$:$k_1=k_2=M+1=\lfloor \log(n/10) \rfloor-1$; FES$+1$:$k_1=k_2=M+1=\lfloor \log(n/10) \rfloor+1$.}
\label{fig:power_resolution_perturb}
\end{center}
\vspace{-1em}
\end{figure}
The results show that FES is generally robust to the choice of resolution levels. Lower resolution levels do tend to result in higher power, especially when the dependency structure is of a large global scale (i.e., affects large portions of the sample space), while higher resolution incur some multiple testing penalty. We also investigated the FWER of FES different resolution specifications through the same null simulation at 20 different sample sizes as done in Section~\ref{sec:simulations}, and \ref{fig:null_power_resolution_perturb} presents the estimated FWER.
\begin{figure}[t]
\begin{center}
\includegraphics[clip=TRUE, trim=0 0 0 8mm, width=26em]{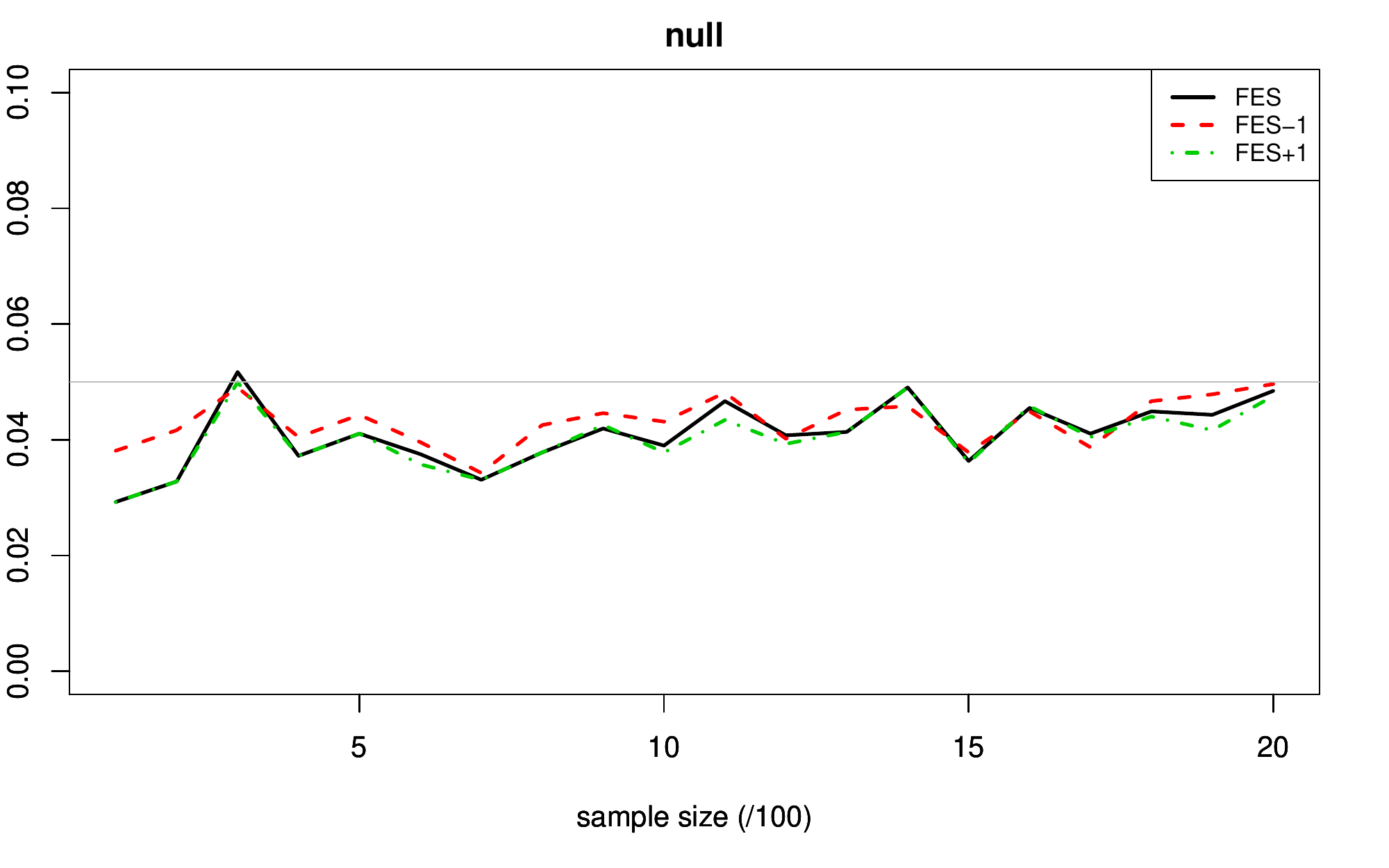}
\caption{FWER of FES under 20 sample sizes at three resolution specifications---FES: $k_1=k_2=M+1=\lfloor \log(n/10) \rfloor$; FES$-1$:$k_1=k_2=M+1=\lfloor \log(n/10) \rfloor-1$; FES$+1$:$k_1=k_2=M+1=\lfloor \log(n/10) \rfloor+1$.}
\label{fig:null_power_resolution_perturb}
\end{center}
\vspace{-1em}
\end{figure}
The results show that adopting higher resolution levels generally has only mild effects on the FWER, but for very small sample sizes, the resulting discreteness in higher resolutions results in additional conservativeness, suggesting that when sample sizes are very small, it is reasonable to only scan at the very coarsest resolution levels.

\end{document}